\pgfplotsset{compat = newest,
                        layers/my layer set/.define layer set={
            background,
            main,
            foreground}{ }, set layers=my layer set,}
\colorlet{mycolor0}{black}
\colorlet{mycolor1}{blue}
\colorlet{mycolor2}{red}
\colorlet{mycolor3}{blue!20!green!80!}
\colorlet{mycolor4}{red!20!yellow!80!}
\colorlet{mycolor5}{cyan!10!}
\colorlet{mycolor6}{yellow!10!}
\colorlet{mycolor7}{magenta!10!}
\colorlet{mycolor8}{green!10!}
\declaretheoremstyle[
  qed=\qedsymbol]{thmwithend}
\declaretheoremstyle[
headfont = \normalfont\bfseries,
bodyfont = \normalfont]{theorem}
\declaretheoremstyle[
headfont = \normalfont\bfseries,
bodyfont = \normalfont\itshape]{italictheorem}
\declaretheoremstyle[
headfont = \normalfont\bfseries,
bodyfont = \normalfont]{plain}
\declaretheorem[
  style=italictheorem,
    mdframed={style=theoremstyle},
  title=Theorem,
  refname={Theorem, Theorems},
  Refname={Theorem, Theorems},
  within=section
]{theorem}
\declaretheorem[
  style=plain,
    mdframed={style=definitionstyle},
  title=Definition,
  refname={Definition, Definitions},
  Refname={Definition, Definitions},
  sibling=theorem
]{definition}
\declaretheorem[
  style=italictheorem,
    mdframed={style=theoremstyle},
  title=Lemma,
  refname={Lemma, Lemmas},
  Refname={Lemma, Lemmas},
  sibling=definition
]{lemma}
\declaretheorem[
  style=thmwithend,
  mdframed={style=definitionstyle},
  title=Remark,
  refname={Remark, Remarks},
  Refname={Remark, Remarks},
  sibling=lemma
]{remark}
\declaretheorem[
  style=italictheorem,
  mdframed={style=definitionstyle},
  title=Corollary,
  refname={Corollary, Corollaries},
  Refname={Corollary, Corollaries},
  sibling=lemma
]{corollary}
\declaretheorem[
  style=remark,
    mdframed={style=examplestyle},
  title=Example,
  refname={Example, Examples},
  Refname={Example, Examples},
  sibling=remark
]{example}
\declaretheorem[
  style=italictheorem,
  title=Question,
  refname={Question, Questions},
  Refname={Question, Questions},
  sibling=example
]{question}
\crefname{algocf}{Algorithm}{Algorithms}
\crefname{algocf}{Algorithm}{Algorithms}
\let\c@algocf=\c@theorem \makeatother
\newcommand{\oast}{\mathbin{\mathpalette\make@circled\ast}}
\newcommand{\ocirc}{\mathbin{\mathpalette\make@circled\circ}}
\newcommand{\make@circled}[2]{  \ooalign{$\m@th#1\smallbigcirc{#1}$\cr\hidewidth$\m@th#1#2$\hidewidth\cr}}
\newcommand{\smallbigcirc}[1]{  \vcenter{\hbox{\scalebox{0.77778}{$\m@th#1\bigcirc$}}}}
\DeclareMathOperator{\cont}{\mathtt{cont}}
\DeclareMathOperator{\pref}{\mathtt{pref}}
\DeclareMathOperator{\suff}{\mathtt{suff}}
\DeclareMathOperator{\ltof}{\mathtt{ltof}}
\DeclareMathOperator{\ftol}{\mathtt{ftol}}
\newcommand{\bool}{\{0,1\}}
\DeclareMathOperator{\MinWord}{\mathtt{MinWord}}
\DeclareMathOperator{\Multiply}{\mathtt{Multiply}}
\DeclareMathOperator{\EqualInFreeBand}{\mathtt{EqualInFreeBand}}
\DeclareMathOperator{\ComputeK}{\mathtt{ComputeK}}
\DeclareMathOperator{\TrimTransducerIsomorphism}{\mathtt{TrimTransducerIsomorphism}}
\DeclareMathOperator{\Minimize}{\mathtt{Minimize}}
\DeclareMathOperator{\IntervalTransducer}{\mathtt{IntervalTransducer}}
\DeclareMathOperator{\ClassifyCase}{\mathtt{ClassifyCase}}
\renewcommand{\spadesuit}{\ocirc}
\renewcommand{\clubsuit}{\oast}
\newcommand{\FB}{\textrm{FB}^1}
\DeclareMathOperator{\rght}{\mathtt{RIGHT}}
\DeclareMathOperator{\lft}{\mathtt{LEFT}}
\DeclareMathOperator{\dom}{dom}
\newcommand{\defn}[1]{\textit{\textbf{#1}}}
\definecolor{color0}{RGB}{238,20,135}
\definecolor{color1}{RGB}{0,221,164}
\definecolor{color2}{RGB}{86,151,209}
\definecolor{color3}{RGB}{249,185,131}
\definecolor{color4}{RGB}{150,114,196}
\tikzset{parallel_arrow_1/.style={-latex, decoration={sl,raise=-1mm},decorate}}
\tikzset{parallel_arrow_2/.style={-latex reversed, dashed, decoration={sl,raise=1mm},decorate}}
\title{Polynomial time multiplication and normal forms in free bands}
\author{R. Cirpons and J. D. Mitchell}
\date{}
\begin{document}

\maketitle

\begin{abstract}
We present efficient computational solutions to the problems of checking equality, performing multiplication, and computing minimal representatives of elements of free bands. A band is any semigroup satisfying the identity $x ^ 2 \approx x$ and the \defn{free band} $\operatorname{FB}(k)$ is the free object in the variety of $k$-generated bands. Radoszewski and Rytter developed a linear time algorithm for checking whether two words represent the same element of a free band.
In this paper we describe an alternate linear time algorithm for the same problem. The algorithm we present utilises a representation of words as synchronous deterministic transducers that lend themselves to efficient (quadratic in the size of the alphabet) multiplication in the free band. This representation also provides a means of finding the short-lex least word representing a given free band element with quadratic complexity. \end{abstract}

\section{Introduction}\label{sect:intro}

This paper is about efficient computational solutions to the problems of checking equality, performing multiplication, and computing minimal representatives of elements of free bands.
This paper arose from that of Radoszewski
and Rytter~\cite{radoszewski2010aa}, and is related to the papers of Neto and Sezinando~\cite{Neto2000aa,Neto1996aa}. We will discuss in more detail below the interactions of \cite{Neto2000aa, Neto1996aa, radoszewski2010aa} and the current paper.

A \defn{band} is any semigroup satisfying the identity $x ^ 2 \approx x$ and a \defn{free band} $\operatorname{FB}(k)$ of rank $k$ is the free object in the variety of $k$-generated bands.
Bands have been widely studied in the literature, some highlights include:
 the lattice of varieties of bands is completely classified~\cite{Biryukov1970, Fennemore1970, Gerhard1989, Gerhard1970} and every such variety is defined by a single identity. Siekmann and Szab\'o~\cite{Siekmann1982aa} provide an infinite complete rewriting system for every free band; rewriting in further varieties of bands has also been considered in~\cite{Klima2011aa}.
 More recently, combinatorial structures such as hyperplane arrangements, interval greedoids, and matroids have been endowed with the structure of left regular bands
 (i.e. those bands belonging to the variety defined by the identity $xy \approx xyx$), and
this connection has been used in several disparate areas of mathematics; see, for example~\cite{Margolis2021}  and the references therein. Other recent references related to bands include~\cite{Dolinka2013, QuinnGregson2018, Saliola2007}.

 In the context of the famous Burnside problem for semigroups, Green and Rees~\cite{green1952aa} showed that if every $k$-generated group, where $k\in \mathbb{N}$, satisfying the identity $x ^ {m - 1} \approx 1$ is finite, then every $k$-generated semigroup satisfying the identity $x ^ m \approx x$ is finite. Brown~\cite{Brown1964} gave a simple proof of Green and Rees' result.
In the case that $m = 2$, every $k$-generated group satisfying the identity $x ^ 1 \approx 1$ is trivial, by virtue of which, every $k$-generated semigroup satisfying the identity $x ^ 2\approx x$ is finite. In particular, every finitely generated band is finite, including every free band $\operatorname{FB}(k)$.
The size of $\operatorname{FB}(k)$ grows super-exponentially with $k$ the number of generators; a closed form for $|\operatorname{FB}(k)|$ is given in \cite[Theorem 4.5.3]{howie1995aa}, see also \cite{A030449}. For example, the free band $\operatorname{FB}(5)$ has size $2,751,884,514,765 \approx 2 ^ {41}$.

 The ability to compute efficiently with elements in a free band is a fundamental prerequisite for any further computational tools for the study of bands; such as determining the least variety of bands containing a given band, or computing with finite band presentations. However,
the vast number of elements renders it impossible to practically apply to free bands, with $5$ or more generators, any of the known general purpose algorithms for finite semigroups such as, for example, the Froidure-Pin Algorithm~\cite{Froidure1997aa} or the algorithms described in~\cite{East2019aa}. When it is possible to readily multiply and check equality of elements of a finitely generated semigroup $S$,
the Froidure-Pin Algorithm~\cite{Froidure1997aa} can be used to exhaustively enumerate $S$. Even assuming that the question of how to multiply and check the equality of elements in the free band is resolved, the Froidure-Pin Algorithm~\cite{Froidure1997aa} requires too much space to compute $\operatorname{FB}(k)$ with $k\geq 5$.  If it were possible (and it is not) to store every one of the approximately $2 ^ {41}$ elements of $\operatorname{FB}(5)$ in a single bit, this would require more than 274GB of RAM, a figure that is unlikely to be commonplace any time in the near future. The algorithms in~\cite{East2019aa} are most effective for semigroups containing large subgroups, but alas every subgroup of a band is trivial, so the algorithms from \cite{East2019aa} are of little practical use for bands.

Semigroup presentations provide a possible alternative approach to computing with free bands, for example, using the Todd-Coxeter~\cite{Todd1936aa}, Knuth-Bendix~\cite{Knuth1983} or other rewriting methods.
If $A$ is any non-empty alphabet, then the collection $A ^ +$ of all finite non-empty words over $A$ forms a semigroup where the operation is juxtaposition; $A ^ +$ is called the \defn{free semigroup} on $A$.
If $w\in A ^ +$, then the length of $w$ is denoted by $|w|$.
We denote the free band generated by $A$ by $\operatorname{FB}(A)$ and henceforth will not use the notation $\operatorname{FB}(k)$ where $k\in\mathbb{N}$.
One, not particularly illuminating, description of the free band $\operatorname{FB}(A)$ on $A$ is the quotient of $A ^ +$ by the least congruence $\sim$ containing $(w ^ 2, w)$ for all $w\in A ^ +$.  In other words,
$\operatorname{FB}(A)$ has the infinite presentation
\begin{equation}\label{eq-infinite-presentation}
    \langle A\, |\, x ^ 2 = x, \,x\in A ^ +\rangle.
\end{equation}
Of course, since $\operatorname{FB}(A)$  is finite, it is finitely presented, although it is not immediately clear how to find a finite presentation when $|A| \geq 5$.
Siekmann and Szab\'o~\cite{Siekmann1982aa} show that the following length reducing infinite conditional rewriting system for $\operatorname{FB}(A)$ is complete:
\begin{align}\label{align-1}
    x^2&\rightarrow x\\
    xyz &\rightarrow xz \textrm{ if }\cont(y) \subseteq \cont(x) = \cont(z)
\end{align}
where $x, y, z$ range over all appropriate words in $A ^ +$ and
where $\cont(w) = \{b_1, b_2, \ldots, b_n\}$ is the \defn{content} of the word $w = b_1b_2\cdots b_n\in A ^{+}$. A simpler proof that the rewriting system of
Siekmann and Szab\'o~\cite{Siekmann1982aa} is complete was given by Neto and Sezinando in \cite[Theorem 6.1]{Neto2000aa}.
Although the papers~\cite{Neto2000aa, Neto1996aa, Siekmann1982aa} make a valuable contribution to the study of bands, their focus is mainly on the mathematical aspects.
In particular, it is claimed in \cite{Neto2000aa} that there is a polynomial-time solution for the word problem in all relatively free bands other than $\operatorname{FB}(A)$, and that Siekmann and Szab\'o provide a quadratic time algorithm for solving the word problem in the free band in \cite{Siekmann1982aa}. However, the algorithm is not explicitly given in ~\cite{Neto2000aa}, or elsewhere in the literature, nor is its time complexity formally analysed.
Since Siekmann and Szab\'o's rewriting system is complete and length reducing, it can be used to solve the word problem by rewriting a given word $w$ in $\mathcal{O}(|w|)$ steps.
However, it is unclear what the time complexity of each rewriting step is, since the rewriting system is infinite, and so it is non-trivial to detect which rewrite rules apply to any given word $w$. This question is not addressed in \cite{Siekmann1982aa}.
It is possible to detect all subwords $x^2$ where $x$ is primitive (not of the form $x=y^e$ for some $y\in A^+$, $e\geq 2$) of a given $w$ in $\mathcal{O}(|w| \log|w|)$ time; see~\cite{Crochemore1981aa}. It is unclear how to efficiently find occurrences of $xyz$ with $\cont(y) \subseteq \cont(x) = \cont(z)$ in $w$. One approach might be to iterate through the subwords of $w$, to find a single subword of the form $xyz$ where $\cont(y) \subseteq \cont(x) = \cont(z)$. The time complexity of this approach is, at best, $\mathcal{O}(|w|^2)$ yielding an algorithm with time complexity at best $\mathcal{O}(|w|^3)$ for finding a normal form for $w$. Hence for words $u, v\in A ^ *$ solving the word problem using this approach has complexity at best $\mathcal{O}(|u|^3 + |v| ^ 3)$.

Rather than using the infinite rewriting system given in \eqref{align-1}, it is possible to compute the unique finite complete rewriting system for $\operatorname{FB}(A)$ when $|A| < 5$ with respect to the short-lex reduction ordering on $A ^ +$ obtained from any linear order of $A$ (using, for example, the Froidure-Pin Algorithm~\cite{Froidure1997aa}, as implemented in \cite{Mitchell2022aa}); see \cref{table-1}.
The Froidure-Pin Algorithm~\cite{Froidure1997aa} stores every element of the semigroup it is enumerating, and so, as discussed above, it was not possible to compute a finite complete rewriting system for $\operatorname{FB}(A)$ when $|A| \geq 5$. For reference, the time complexity of the Froidure-Pin Algorithm is $\mathcal{O}(|A| \cdot |\operatorname{FB}(A)|)$.

\begin{table}
    \centering
    \begin{tabular}{|c|r|r|r|}
    \hline
         $|A|$ & $\left|\operatorname{FB}(A)\right|$ & Number of relations & Total length of relations \\

         \hline
1 & 1             & 1     & 1      \\
2 & 6             & 4     & 20     \\
3 & 159           & 45    & 468    \\
4 & 332,380        & 11,080 & 217,072 \\
5 & 2,751,884,514,765 & ?     & ?      \\
\hline
    \end{tabular}
    \caption{Number of relations and total length of relations in the unique finite complete rewriting system of $\operatorname{FB}(A)$ for some small values of $|A|$.}
    \label{table-1}
\end{table}

With the preceding comments in mind, it seems unlikely that presentations or rewriting systems would allow for practical computation of the word problem, or normal forms,  when $|A| \geq 5$.
On the other hand, Radoszewski and Rytter's algorithm from~\cite{radoszewski2010aa} (as implemented in \cite{Mitchell2022ab}, and available in \cite{Mitchell2022aa} and \cite{Mitchell2022ad}) can test equality of words in $\operatorname{FB}(A)$ when $|A| \geq 100$ with relative ease.

In this paper, we will show how to use a representation of elements of a free band as transducers in linear time algorithms for equality checking and the computation of normal forms, and an algorithm for multiplication that is quadratic in the size of the alphabet and linear in the size of the representations of the elements; see below for a formal description of the complexity. This representation is similar to the ``admissible maps'' of~\cite{Neto2000aa,Neto1996aa} and the ``factor automata'' of~\cite{radoszewski2010aa}.

The starting point for the development of the representation of free band elements by transducers (and the other representations in \cite{Neto2000aa,Neto1996aa,radoszewski2010aa}), is the solution to the word problem  given by Green and Rees in~\cite{green1952aa}; we will restate this description in \cref{sec:prelims}.
 Recall that we denoted by $\sim$ the least congruence on $A ^ +$ containing $\{(x ^ 2, x): x\in A ^ +\}$. We refer to the problem of determining whether or not $u\sim v$  for any $u, v\in A ^ +$ as \defn{checking equality in the free band}.
 If $w\in A ^ +$, then we define $\pref(w)$ to be the longest prefix of $w$
 containing one fewer unique letters than $w$ itself (i.e.
 $\cont(\pref(w)) = \cont(w) - 1$) and $\suff(w)$ is defined analogously for suffixes.
 We denote the letter immediately after $\pref(w)$ in $w$ by $\ltof(w)$ (from ``last letter to occur first''), which is the unique letter in the content of $w$ not in the content of $\pref(w)$.
 The first letter $\ftol(w)\in A$ to occur last in $w$ is defined analogously with respect to $\suff(w)$.
 Green and Rees~\cite{green1952aa} showed that $u\sim v$ if and only if $\pref(u)\sim\pref(v)$, $\ltof(u) = \ltof(v)$, $\ftol(u)=\ftol(v)$, and $\suff(u) \sim \suff(v)$. This is illustrated in \cref{fig:WEDIDIT-2022-08-18-16-28}.

\begin{figure}
\[
  w = ababdbddcccb = \rlap{$\overbrace{\phantom{ababdbdd}}^{\pref(w)}$}ab\tikzmark{a}{\color{blue}a}
  \underbrace{bdbdd\,\tikzmark{c}{\color{red}c}ccb}_{\suff(w)}
\]

\begin{tikzpicture}[remember picture,overlay]
\draw[<-, color=blue]
  ([shift={(3pt,-2pt)}]pic cs:a) |- ([shift={(-10pt,-10pt)}]pic cs:a)
  node[anchor=east] {\scriptsize$\ftol(w)$};
\draw[<-, color=red]
  ([shift={(3pt,6pt)}]pic cs:c) |- ([shift={(16pt,15pt)}]pic cs:c)
  node[anchor=west] {\scriptsize$\ltof(w)$};
\end{tikzpicture}
\caption{An illustration of the functions $\pref$, $\suff$, $\ftol$, and $\ltof$ applied to the word $w = ababdbddcccb$.}
\label{fig:WEDIDIT-2022-08-18-16-28}
\end{figure}

It is trivial to check whether or not two letters in $A$ are equal. If $\ltof(u) = \ltof(v)$ and $\ftol(u)=\ftol(v)$, then
the Green-Rees Theorem reduces the problem of checking $u\sim v$ for $u, v\in A ^ +$ to the problem of checking $\pref(u)\sim\pref(v)$ and $\suff(u)\sim\suff(v)$ where $\pref(u), \pref(v)\in (A \setminus \{\ltof(u) = \ltof(v)\}) ^ +$ and $\suff(u), \suff(v)\in (A \setminus \{\ftol(u) = \ftol(v)\}) ^ +$. In particular, the size of the underlying alphabet is reduced by one, and therefore, by repeatedly applying the Green-Rees Theorem, we eventually reduce the size of the alphabet to one.
Performing this na\"ively requires $\mathcal{O}((|u|+|v|)\cdot 2 ^ {|A|})$ time and $|u| + |v|$ space.
A similar recursive approach was found by Gerhard and Petrich~\cite{Gerhard1989} to solve the word problem in relatively free bands (the free objects in sub-varieties of bands).
Suppose that $A$ is any fixed alphabet and that $u, v\in A^+$ are such that $|u|+|v| = k$ for some $k\in \mathbb{N}$.
Neto and Sezinando~\cite{Neto2000aa} estimate the time complexity of Gerhard and Petrich's approach for the free band to be
$\mathcal{O}(2^k)$, which is worse than the na\"ive approach using the Green-Rees Theorem.
However, for other relatively free bands, Neto and Sezinando~\cite{Neto2000aa} estimate the time complexity of Gerhard and Petrich's approach to be
$\mathcal{O}((k+1)^{n-2})$ where $n\in\mathbb{N}$ is the height of the relatively free band in the lattice of all varieties of bands.

In contrast to the exponential time algorithms described in the previous paragraph, Radoszewski and Rytter~\cite{radoszewski2010aa} gave a $\mathcal{O}(|A|\cdot(|u|+|v|))$ time and $\mathcal{O}(|u|+|v|)$ space algorithm  for checking equality in the free band on $A$. Their method constructs an acyclic automaton, called the \defn{interval automaton}, for each element to be compared. In~\cite{radoszewski2010aa} it is shown that $u\sim v$ if and only if the corresponding interval automata recognize the same language. It is possible to check whether two acyclic automata recognize the same language in $\mathcal{O}(n)$ time where $n$ is the total number of states in both automata; see, for example,~\cite{revuz1992aa}. At least from the perspective of the asymptotic time complexity, it seems extremely unlikely that a better solution than that of Radoszewski and Rytter exists for checking whether or not $u\sim v$ where $u, v\in A^+$. At best we could hope to replace the $|A|$ factor by a lesser function of $|A|$. But $|u| + |v|$ can be significantly larger than $|A|$ and so the $|A|$ factor does not contribute significantly to the overall time complexity in such cases.

Radoszewski and Rytter's approach does not immediately give rise to an efficient representation of elements in the free band. In particular, the only representation of elements in the free band considered in~\cite{radoszewski2010aa} is by words in a free semigroup.
Representing elements by words has the advantage of a very simple and (time) efficient multiplication algorithm: simply concatenate the words $u, v\in A^+$ in constant time and $\mathcal{O}(|u|+|v|)$ space.
The main drawback of this representation is that the length of these words is unbounded.
Of course, given any word $w \in A ^+$ and any well-ordering $\prec$ on $A ^ +$, the equivalence class $w/{\sim}$ contains a $\prec$-minimum word, and since $\operatorname{FB}(A)$ is finite, this means that length of such minimum words is also bounded.
Therefore, any method for finding such minimum words would provide a bound for the space required to store an element represented as a word.
The rewriting system of~\cite{Siekmann1982aa} gives a way of finding the  $\prec$-minimum word with respect to the short-lex ordering. As discussed above, in practice, this only works for alphabets $A$ with $|A| \leq 4$.
Assuming this was not the case, Neto and Sezinando's claim in \cite{Neto1996aa} implies that minimizing a word $w\in A^+$ has time complexity at best $\mathcal{O}(|w|^2)$.

In this paper, we derive an alternative space efficient representation for an element in the free band by using synchronous deterministic acyclic transducers based on the interval automata of Radoszewski and Rytter~\cite{radoszewski2010aa} and related to the admissible maps of~\cite{Neto2000aa, Neto1996aa}. This representation lends itself to efficient solutions to the problems of checking equality and multiplication of elements of the free band, and to an algorithm for determining $\min(w)$ --- the short-lex minimum representative of the $\sim$-class of a given word $w$. In \cref{sec:prelims}, we establish some notation, and state the Green-Rees Theorem in terms of this notation. In \cref{sec:transducer}, we describe the novel transducer based representation of elements of the free band that is the central notion of this paper. We also describe how Radoszewski and Rytter's construction from~\cite{radoszewski2010aa} relates to these transducers. In \cref{sec:equality}, we describe
how to check equality of the elements in the free band represented by transducers in $\mathcal{O}(|A|\cdot(|u|+|v|))$ space and time. The procedure for checking equality is called
$\EqualInFreeBand$ and given in \cref{algorithm:equal_in_free_band}. Although the $\mathcal{O}(|A|\cdot(|u|+|v|))$ space representation by transducers we present is worse than the $\mathcal{O}(|u|+|v|)$ space complexity of Radoszewski and Rytter in~\cite{radoszewski2010aa}, we will show that the same representation can be used efficiently for multiplication and finding short-lex least words. In \cref{sec:multiplication}, we describe how to obtain a transducer representing the product of two free band elements $x$ and $y$ given transducers representing each of the elements. This algorithm is called $\Multiply$ and is given in \cref{algorithm:multiply}. $\Multiply$ has time and space complexity $\mathcal{O}(m + n + |A| ^ 2)$ where $m$ and $n$ are the number of states in any transducers representing $x$ and $y$. We will show that the number of states of the minimal transducer representing a free band element $w/{\sim}$ is $|\min(w)||A|$ where $\min(w)$ is the length of the short-lex least word  belonging to $w/{\sim}$.
Hence the best case time and space complexity of the multiplication algorithm $\Multiply$ in \cref{sec:multiplication} is
$\mathcal{O}(|\min(u)||A| + |\min(v)||A| + |A| ^ 2)$ and the worst case is $\mathcal{O}(|u||A| + |v||A| + |A| ^ 2)$. Finally, in \cref{sec:minimal}, we give the $\MinWord$ algorithm, \cref{algorithm:minimalword}, for computing the short-lex least word $\min(w)$ in $w/{\sim}$ given any transducer $\mathcal{T}$ representing $w/{\sim}$. This algorithm has time and space complexity $\mathcal{O}(n\cdot|A|)$ where $n$ is the number of states of $\mathcal{T}$.

We conclude the introduction by stating some open problems.
Clearly, any algorithm deciding whether or not words $u, v\in A ^ *$ represent the same element of a free band has time complexity bounded below by $\Omega(|u|+|v|)$. On the other hand, $\EqualInFreeBand$ given in \cref{algorithm:equal_in_free_band} has time complexity  $\mathcal{O}(|A|\cdot (|u|+|v|))$ time, so it is natural to ask if there is a way of reducing or removing the $|A|$ factor?
\begin{question}
What is the lower bound for an algorithm checking the equality of two elements in the free band?
 \end{question}

In a similar vein, any algorithm computing $\min(w)$ for $w\in A^\ast$ must take at least $\Omega(|w|)$ time to read the input, if nothing else. The algorithm $\MinWord$ \cref{algorithm:minimalword} (together with the construction of the interval transducer that is the input to $\MinWord$) solves this problem in $\mathcal{O}(|A|\cdot |w| +|A|^2\cdot |\min(w)|)$ time.
\begin{question}
What is the lower bound for an algorithm for finding $\min(w)$ given $w\in A ^ *$?
\end{question}

A \textit{variety} of bands is a collection of bands that is closed under taking subsemigroups, homomorphic images and arbitrary Cartesian products.
Given that the lattice of varieties of bands is completely determined (as mentioned above), we ask the following question.

\begin{question}
Are there analogues for any of the algorithms $\EqualInFreeBand$, $\Multiply$, and $\MinWord$ for the free object in other varieties of bands?
\end{question}

If such algorithms do exist, then they could serve as an entry point for the further development of computational tools for finitely presented bands.

\section{Preliminaries}\label{sec:prelims}

In this section, we introduce some notation and preliminary results from the literature that will be used throughout the remainder of this paper.

As mentioned in the previous section, the Green-Rees Theorem~\cite{green1952aa} characterizes elements of the free band in terms of the functions $\pref$, $\suff$, $\ltof$, and $\ftol$. This characterisation is one of the key components in this paper, but we require slightly different (and more general) notation for these functions, which we now introduce.

If $A$ is any non-empty alphabet, then the collection $A ^ *$ of all finite words over $A$ forms a monoid with juxtaposition being the operation. This monoid is called the \defn{free monoid} over $A$; the identity element of $A ^ *$ is the empty word $\varepsilon$; and $A ^ * = A ^ + \cup \{\varepsilon\}$.

If $X$ and $Y$ are sets, then a \defn{partial function} $f: X \to Y$ is just a function from some subset of $X$ to $Y$. If $f : X \to Y$ is any partial function, and $f$ is not defined at $x\in X$, then we will write $f(x) = \bot$ and we also assume that $f(\bot) = \bot$.

\begin{definition}\label{definition-circ}
    We define the partial function $\circ: A ^ * \times \bool ^ *\to A ^ *$ such that for all $w \in A ^ *$ the following hold:
    \begin{enumerate}[label=(\roman*)]
    \item
    $\varepsilon \circ \alpha = \bot$ for all $\alpha\in\bool^+$ and $w \circ \varepsilon = w$;
    \item
    $w \circ 0$ is the longest prefix of $w$ such that $|\cont(w \circ 0)| = |\cont(w)|-1$;
    \item
    $w\circ 1$ is the longest suffix of $w$ such that $|\cont(w \circ 1)| = |\cont(w)|-1$;
    \item
    if $\alpha = \beta\gamma\in \bool ^ +$, for some $\beta\in \bool$ and $\gamma\in \bool ^ *$,
    then $w \circ \alpha = (w \circ \beta) \circ \gamma$.
    \end{enumerate}
\end{definition}
Note that $w\circ 0 = \pref(w)$ and $w\circ 1 = \suff(w)$ in the notation of the
 Green-Rees Theorem.

\begin{definition}\label{definition-ast}
   We define the partial function $\ast: A ^ * \times \bool ^ * \to A$ so that for all $w \in A ^ *$ the following hold:
    \begin{enumerate}[label=(\roman*)]
    \item
     $\varepsilon \ast \alpha = \bot$ and $w\ast \varepsilon =\bot$ for all $\alpha\in\bool^+$;
    \item
    $w \ast 0 \in A$ is such that $w = (w \circ 0)(w\ast 0)w'$ for some $w'\in A ^ *$;
    \item
    $w\ast 1 \in A$ is such that $w = w'' (w\ast 1)(w\circ 1)$ for some $w'' \in A ^ *$;
    \item
    if $\alpha = \beta\gamma\in \bool ^ +$, for some $\beta\in \bool$ and $\gamma\in \bool ^ *$, then
    $w \ast \alpha = (w \circ \beta) \ast \gamma$.
    \end{enumerate}
\end{definition}
As above, $w\ast 0 = \ltof(w)$ and $w\ast 1 = \ftol(w)$ in the notation of the Green-Rees Theorem.

We now state the Green-Rees Theorem using $\circ$ and $\ast$ instead of $\pref$, $\ltof$, $\ftol$, and $\suff$; for a proof see Lemma 4.5.1 in~\cite{howie1995aa}.

\begin{theorem}[Green-Rees]
    \label{thm:equal_free_band}
    Let $v, w\in A^+$. Then
    $v\sim w$ if and only if the following hold:
    \begin{enumerate}[label={\normalfont{(\roman*)}}]
        \item $\cont(v) = \cont(w)$;
        \item $v \circ \alpha \sim w \circ \alpha$ for all $\alpha \in \bool$; and
        \item $v\ast \alpha = w\ast \alpha$ for all $\alpha \in \bool$.\qed
    \end{enumerate}
\end{theorem}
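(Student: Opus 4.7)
The plan is to establish both directions of the biconditional separately.

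For the forward direction, I would argue that the five functions $\cont$, $v \mapsto v \circ 0$, $v \mapsto v \circ 1$, $v \mapsto v \ast 0$, $v \mapsto v \ast 1$ descend to well-defined maps on $A^+/{\sim}$. Since $\sim$ is the congruence generated by $\{(x^2, x) : x \in A^+\}$, it suffices to check that if $u = u_1 x x u_2$ and $u' = u_1 x u_2$ then the five quantities agree on $u$ and $u'$ (and that the $\circ$-images are themselves $\sim$-related, which follows by a short induction since $\sim$ is compatible with substitution). Content is obviously preserved. For $\pref$ and $\ltof$, a case analysis splits on whether the repeated factor $x$ lies entirely inside $\pref(u)$, straddles $\ltof(u)$, or lies entirely to the right; in each case one checks both $\ltof$ and the $\sim$-class of $\pref$ are preserved. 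The suffix case is symmetric.

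For the backward direction, I would induct on $k = |\cont(v)| = |\cont(w)|$ (equal by (i)). The base case $k=1$ is immediate: both words equal $a^n$ for some $n$, and $a^n \sim a$. For the inductive step, let $a = v \ast 0 = w \ast 0$ and $b = v \ast 1 = w \ast 1$ (using (iii)). Write $v = \pref(v)\, a\, v''$ and $v = v'''\, b\, \suff(v)$, and similarly for $w$. The strategy is to reduce both $v$ and $w$ to the canonical form $\pref(\cdot)\, a\, b\, \suff(\cdot)$: that is, to prove
\[
v \sim \pref(v) \cdot a \cdot b \cdot \suff(v)
\qquad\text{and}\qquad
w \sim \pref(w) \cdot a \cdot b \cdot \suff(w).
\]
Once this is established, the inductive hypothesis applied to $(\pref(v), \pref(w))$ and to $(\suff(v), \suff(w))$ (each with content size at most $k-1$, and satisfying the three conditions by (ii), the well-definedness established above, and by $\sim$ being a congruence) gives $\pref(v) \sim \pref(w)$ and $\suff(v) \sim \suff(w)$; then the congruence property yields $v \sim w$.

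The main obstacle is proving the reduction displayed above, which is a special case of the following \emph{compression lemma}: if $x, z \in A^+$ satisfy $\cont(x) = \cont(z)$ and $y \in A^*$ satisfies $\cont(y) \subseteq \cont(x)$, then $x y z \sim x z$. Taking $x = \pref(v) \cdot a$ and $z = b \cdot \suff(v)$ (both having full content $\cont(v)$) immediately gives the desired reduction. I would prove the compression lemma by a nested induction on $|\cont(x)|$ and $|y|$: peel the first letter $c$ of $y$ from the front, locate an occurrence of $c$ in $x$ (say the last one, so $x = x_1 c x_2$ with $\cont(x_2) \subsetneq \cont(x)$), and use $(xz)^2 \sim xz$ together with $c \cdot c \sim c$ to rearrange $x c \cdot (x_2 y_1 z_1) \cdot c z$ into a form where the middle factor has smaller alphabet, invoking the outer induction. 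This compression lemma is essentially the content of the Siekmann--Szab\'o rewriting rule, and deriving it from $x^2 \approx x$ alone is the technical heart of the proof.
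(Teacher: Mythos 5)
First, note that the paper does not prove this statement itself: it is the Green--Rees theorem, stated with a pointer to Lemma 4.5.1 of \cite{howie1995aa}. So your proposal is measured against the classical argument, and in outline you have reproduced it: show that $\cont$, $\circ$ and $\ast$ descend to $A^+/{\sim}$ by checking a single elementary rewrite $u_1xxu_2 \leftrightarrow u_1xu_2$, and for the converse reduce every word to the canonical form $\pref(v)\,(v\ast 0)\,(v\ast 1)\,\suff(v)$ via the compression rule $xyz\sim xz$ when $\cont(y)\subseteq\cont(x)=\cont(z)$. The forward direction is fine as sketched. The converse has two genuine gaps. The smaller one: the reduction $v\sim\pref(v)\,(v\ast 0)\,(v\ast 1)\,\suff(v)$ does \emph{not} follow ``immediately'' from the compression lemma, because the occurrences of $x=\pref(v)(v\ast 0)$ and $z=(v\ast 1)\suff(v)$ inside $v$ may overlap --- for $v=abc$ one has $x=z=v$ --- in which case $v$ is not of the form $xyz$ at all. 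You need the separate (easy) observation that if $v=x_1mz_2$ with $x=x_1m$ and $z=mz_2$, then $xz=x_1mmz_2\sim v$; both the overlapping and non-overlapping cases genuinely occur.

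The larger gap is the compression lemma itself, which you correctly identify as the technical heart but do not actually prove. The displayed expression $xc\cdot(x_2y_1z_1)\cdot cz$ contains symbols $y_1,z_1$ that are never introduced, and the step ``use $(xz)^2\sim xz$ together with $cc\sim c$ to rearrange\dots into a form where the middle factor has smaller alphabet'' is precisely the part that needs justification: one must exhibit a chain of instances of $s^2\sim s$ reaching that form, and explain why the resulting middle factor has strictly smaller content so that the outer induction closes. As it stands the converse direction rests entirely on an unproved lemma. Finally, a point of logic rather than correctness: once the canonical form is established, hypothesis (ii) literally \emph{is} the statement $\pref(v)\sim\pref(w)$ and $\suff(v)\sim\suff(w)$, so your appeal to the inductive hypothesis for those pairs is circular (you feed $\pref(v)\sim\pref(w)$ in and get it back out); given the compression lemma, no induction on $|\cont(v)|$ is needed in the converse direction at all.
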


For technical reasons, it is preferable to consider the free band with identity adjoined; which we denote by $\FB(A)$.  Clearly $\FB(A)$ is the quotient of the free monoid $A^\ast$ by the least congruence containing $\left(w, w^2\right)$ for all $w\in A^\ast$. Henceforth we will only consider $\FB(A)$ and as such we write $\sim$ without ambiguity to denote the  least congruence  on $A ^ *$ containing $\left(w, w^2\right)$ for all $w\in A^\ast$; and we will refer to $\FB(A)$ as the free band.

If $v, w\in A ^ *$ are such that $v\sim w$, then $\cont(v) = \cont(w)$ by \cref{thm:equal_free_band}. Hence we may define the \defn{content} of $w/{\sim}$ in $\FB(A)$ as being $\cont(w)$; we denote this by $\cont(w/{\sim})$ also.

The partial function $\circ: A ^ * \times \bool ^ * \to A ^ *$ induces a partial function
from $\FB(A) \times \bool ^ *$ to $\FB(A)$.
Abusing our notation, we will also denote the induced partial function by $\circ$.
Specifically, $\circ: \FB(A) \times \bool ^ *\to \FB(A)$ is defined by
\[(w/{\sim}) \circ \alpha = (w \circ \alpha) / {\sim}\]
for all $w\in A ^ *$ such that $w\circ \alpha$ is defined; see \cref{fig-induced-circ}.

The domain of the function $\circ:\FB(A) \times \bool ^ *\to\FB(A)$ is $\{(x,\alpha) \in \FB(A) \times \bool ^ + : |\alpha| \leq |\cont(x)|\}$.
It follows from \cref{thm:equal_free_band} that
$\circ: \FB(A) \times \bool ^ * \to \FB(A)$ satisfies the analogue of \cref{definition-circ} where $w\in A ^ *$ is replaced by $w/ {\sim}$ in $\FB(A)$.
Similarly, we may define $\ast: \FB(A)\times \bool ^ *\to A$ by
\[(w/{\sim}) \ast \alpha = w \ast \alpha\]
for every $w\in A^*$ such that $w\ast \alpha$ is defined; see \cref{fig-induced-circ}. \cref{thm:equal_free_band} ensures that $\ast$ is well-defined, and satisfies the appropriate analogue of \cref{definition-ast}.

\begin{figure}
  \centering
  \begin{tikzcd}
    A ^ *\times \bool ^ *  \arrow[r, "\circ"] \arrow[d, "/{\sim}\times\textrm{id}"']
    & A ^ * \arrow[d, "/{\sim}"] \\
     \FB(A) \times \bool ^ * \arrow[r, dashed, "\circ"] & \FB(A)
  \end{tikzcd} \quad
  \begin{tikzcd}
    A ^ *\times \bool ^ *  \arrow[r, "\ast"] \arrow[d, "/{\sim}\times\textrm{id}"']
    & A\\
     \FB(A) \times \bool ^ * \arrow[ur, dashed, "\ast"]
  \end{tikzcd}
  \caption{Commutative diagrams giving the definition of the partial functions $\circ$ and $\ast$ for the free band $\FB(A)$.}
  \label{fig-induced-circ}
\end{figure}

\section{Representation by synchronous transducers}
\label{sec:transducer}

If $w\in A ^ *$ is arbitrary, then na\"ively computing all possible values of $w\circ \alpha$ and $w\ast\alpha$ for $\alpha\in \bool ^ *$ has exponential time complexity because there are an exponential number $2 ^ {|\cont(w)|}$ of $\alpha\in \bool ^ *$ such that $w\circ \alpha$ and $w\ast\alpha$ are defined.

In this section, we introduce a means of representing an element $x = w/{\sim}$ of $\FB(A)$ which will permit us to efficiently compute all the possible values of $w\circ \alpha$ and $w\ast\alpha$ for $\alpha\in \bool ^ *$ in \cref{sec:equality}.
We will associate to every word $w\in A ^ *$ a partial function $f_w: \{0, 1\} ^ *\to A ^ *$.
This will yield a bijection from $\FB(A)$ to
the partial functions with domain $\{0, 1\} ^ k$ and codomain $A ^ k$ for $1\leq k\leq |A|$.

\begin{definition}\label{definition-function}
    For a given $w\in A^*$, define the partial function $f_w: \bool^\ast\rightarrow A^\ast$ recursively by
    letting $f_\varepsilon(\varepsilon)=\varepsilon$ and $f_\varepsilon(\alpha) = \bot$ for all $\alpha\in \bool^+$.
    If $w\in A^+$, then we define $f_w:\bool^\ast\rightarrow A^\ast$ by $f_w(\varepsilon) = \bot$ and $f_w(\alpha) = (w\ast\beta)
    \ f_{w\circ \beta}(\gamma)$ for all $\alpha\in\bool^+$ where $\alpha = \beta\gamma$ for some $\beta\in \bool$ and $\gamma\in\bool^\ast$.
\end{definition}

For $\alpha = \alpha_1\cdots \alpha_n\in \bool^*$, we write $\alpha_{(i, j)} = \alpha_i\alpha_{i+1}\cdots \alpha_{j}$ to be the subword starting at index $i\in \{1, \ldots, n\}$ and ending at index $j\in \{i, \ldots, n\}$. If $j<i$, then we define $\alpha_{(i, j)} = \varepsilon$.
By \cref{definition-function},
\[
f_w(\alpha \beta) =
\left(\prod_{i = 1}^n (w\ast \alpha_{(1, i)})\right) f_{w\circ \alpha}(\beta)\]
for all $\alpha, \beta\in \bool^\ast$ where  $\alpha = \alpha_1\alpha_2\cdots \alpha_n$
and $\alpha_i\in\bool$ for $1\leq i\leq n$.
As an example, we compute $f_{abac}(010)$ as follows:
\begin{align*}
f_{abac}(010) & = (abac\ast 0)\ f_{abac\circ 0}(10) = cf_{aba}(10) = c (aba \ast 1) f_{aba\circ 1}(0)
\\ & = cbf_a(0)  = cb(a\ast 0) f_{a\circ 0}(\varepsilon)
 = cbaf_{\varepsilon}(\varepsilon) = cba.
\end{align*}
In a similar way, it is possible to determine $f_{abac}(\alpha)$ for any $\alpha \in \{0, 1\} ^ 3$:
\begin{align*}
f_{abac}(000) = cba, & & f_{abac}(001) = cba, & & f_{abac}(010) = cba, & & f_{abac}(011) = cba,\\
f_{abac}(100) = bca, & & f_{abac}(101) = bca, & & f_{abac}(110) = bac, & & f_{abac}(111) = bac.
\end{align*}

We will show that the partial function $f_w$ can be realized by a transducer for every $w\in A^\ast$.
For the sake of completeness, we next give a definition of a synchronous deterministic transducer.
\begin{definition}
    A \defn{deterministic synchronous transducer} is a septuple $\mathcal{T} = (Q, \Sigma, \Gamma, q_0, T, \spadesuit, \clubsuit)$ where
    \begin{enumerate}[label=(\roman*)]
        \item $Q$ is a non-empty set called the \defn{set of states};
        \item $\Sigma$ is a set called the \defn{input alphabet};
        \item $\Gamma$ is a set called the \defn{output alphabet};
        \item $q_0\in Q$ is called the \defn{initial state};
        \item $\varnothing \neq T\subseteq Q$ is called the \defn{set of terminal states};
        \item a partial function $\spadesuit: Q\times \Sigma \rightarrow Q$ called the \defn{state transition function};
        \item a partial function $\clubsuit: Q\times \Sigma \rightarrow \Gamma$ called the \defn{letter transition function};
    \end{enumerate}
    where for all $q \in Q, \alpha\in \Sigma$, $q\spadesuit \alpha \neq \bot$ is if and only if $q\clubsuit \alpha \neq \bot$.
\end{definition}
Since the only type of transducers we consider are deterministic synchronous transducers, we will write ``transducer'' to mean ``deterministic synchronous transducer''.

The definition of $\spadesuit$ can be extended recursively for inputs in $Q\times \Sigma^\ast$ by letting $q\spadesuit \varepsilon = q$ and $q\spadesuit \alpha\beta = (q\spadesuit \alpha) \spadesuit \beta$ for all $\alpha\in \Sigma$, and $\beta \in\Sigma^\ast$. Similarly, we may extend $\clubsuit$ to inputs in $Q\times \Sigma^\ast$ by letting $q\clubsuit \varepsilon = \bot$ and $q\clubsuit \alpha \beta = (q\spadesuit \alpha)\beta$ for all $\alpha\in \Sigma, \beta\in\Sigma^\ast$.
Let $\mathcal{T} = (Q, \Sigma, \Gamma, q_0, T, \spadesuit, \clubsuit)$ be a transducer. Then  the \defn{language accepted} by $\mathcal{T}$ is $L(\mathcal{T}) = \{\alpha\in \Sigma^\ast : q_0\spadesuit \alpha \in T\}$. We say that a partial function $f: \Sigma^\ast \rightarrow \Gamma^\ast$ \defn{is realized by} $\mathcal{T}$ if and only if $f(\alpha) = q_0\clubsuit \alpha$ for all $\alpha\in L(\mathcal{T})$ and $f(\alpha) = \bot$ for all $\alpha \not \in L(\mathcal{T})$.
If $A$ is any set, and $w\in A ^ *$, then there are many transducers realising the function $f_w$ given in \cref{definition-function};
one such transducer is defined as follows.

\begin{example}
    \label{ex:treelike_transducer}
    Let $w\in A ^ *$ be arbitrary and let $k = |\cont(w)|$. Then we define the transducer
    \[\left({\bool}^{\leq k}, \bool, A, \varepsilon, {\{0, 1\}}^{k}, \clubsuit, \spadesuit\right)\]
    with states ${\bool}^{\leq k} = \bigcup_{i=0}^k \bool^i$,
    input alphabet $\bool$, output alphabet $A$, initial state $\varepsilon$,
    terminal states ${\{0, 1\}}^{k}$, and state and letter transition functions given respectively by:
    \begin{align*}
        q\spadesuit x = qx & & q\clubsuit x = w \ast (qx).
    \end{align*}
    It is routine to verify that this transducer realises $f_w$.
A picture of the corresponding transducer realizing $f_{abac}$ can be found in~\cref{fig:transducer-abac-treelike}.
\end{example}

\begin{figure}
    \centering
        \begin{tikzpicture}
    \node (q_e) at (0,0)  {$\varepsilon$};
    \node (q_0) at (-4, -1) {$0$};
    \node (q_1) at (4, -1)  {$1$};
    \node (q_00) at (-6, -2) {$00$};
    \node (q_01) at (-2, -2)  {$01$};
    \node (q_10) at (2, -2) {$10$};
    \node (q_11) at (6, -2)  {$11$};
    \node (q_000) at (-7, -3) {$000$};
    \node (q_001) at (-5, -3)  {$001$};
    \node (q_010) at (-3, -3) {$010$};
    \node (q_011) at (-1, -3)  {$011$};
    \node (q_100) at (1, -3) {$100$};
    \node (q_101) at (3, -3)  {$101$};
    \node (q_110) at (5, -3) {$110$};
    \node (q_111) at (7, -3)  {$111$};
    \draw [-latex] (q_e)++(0,0.5) -- (q_e);
    \draw [-latex,draw=color2] (q_e) -- node[midway, above left]{$0|c$} (q_0);
    \draw [-latex reversed, dashed, dashed,draw=color1] (q_e) -- node[midway, above right]{$1|b$} (q_1);
    \draw [-latex,draw=color1] (q_0) -- node[midway, above left]{$0|b$} (q_00);
    \draw [-latex reversed, dashed, dashed,draw=color1] (q_0) -- node[midway, above right]{$1|b$} (q_01);
    \draw [-latex,draw=color2] (q_1) -- node[midway, above left]{$0|c$} (q_10);
    \draw [-latex reversed, dashed, dashed,draw=color0] (q_1) -- node[midway, above right]{$1|a$} (q_11);
    \draw [-latex,draw=color0] (q_00) -- node[midway, above left]{$0|a$} (q_000);
    \draw [-latex reversed, dashed, dashed,draw=color0] (q_00) -- node[midway, above right]{$1|a$} (q_001);
    \draw [-latex,draw=color0] (q_01) -- node[midway, above left]{$0|a$} (q_010);
    \draw [-latex reversed, dashed, dashed,draw=color0] (q_01) -- node[midway, above right]{$1|a$} (q_011);
    \draw [-latex,draw=color0] (q_10) -- node[midway, above left]{$0|a$} (q_100);
    \draw [-latex reversed, dashed, dashed,draw=color0] (q_10) -- node[midway, above right]{$1|a$} (q_101);
    \draw [-latex,draw=color2] (q_11) -- node[midway, above left]{$0|c$} (q_110);
    \draw [-latex reversed, dashed, dashed,draw=color2] (q_11) -- node[midway, above right]{$1|c$} (q_111);
    \draw [-latex] (q_000) -- ++(0,-0.5);
    \draw [-latex] (q_001) -- ++(0,-0.5);
    \draw [-latex] (q_010) -- ++(0,-0.5);
    \draw [-latex] (q_011) -- ++(0,-0.5);
    \draw [-latex] (q_100) -- ++(0,-0.5);
    \draw [-latex] (q_101) -- ++(0,-0.5);
    \draw [-latex] (q_110) -- ++(0,-0.5);
    \draw [-latex] (q_111) -- ++(0,-0.5);
\end{tikzpicture}

    \caption{A transducer realizing $f_{abac}$. Note that transitions on input $0$ are drawn with a solid arrow and transitions on input $1$ are drawn with a dashed and reversed arrow for visual differentiation. In subsequent diagrams we will omit the node labels and part of the edge labels for the sake of brevity.}
    \label{fig:transducer-abac-treelike}
\end{figure}
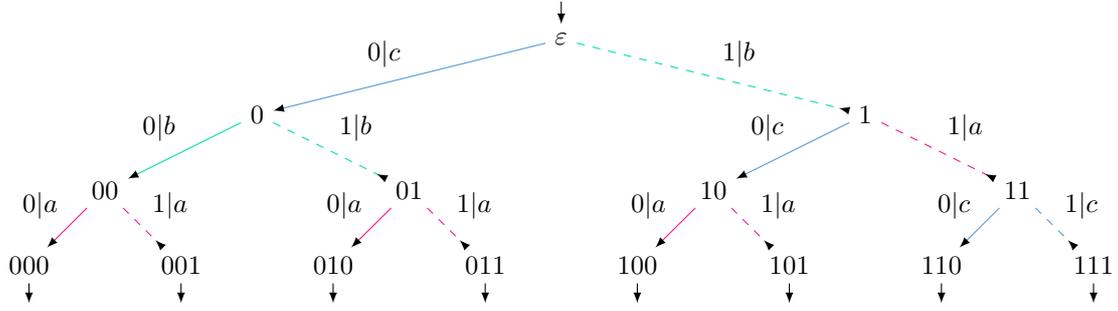

The relevance of the functions $f_w$ to free bands is established in the next result.

\begin{lemma}\label{lem:word_function_lemma}
    Let $v, w\in A^*$. Then $v\sim w$ if and only if $f_v = f_w$.
\end{lemma}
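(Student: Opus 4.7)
The plan is to prove both directions by simultaneous induction on $n = |\cont(v)| = |\cont(w)|$, with the Green-Rees Theorem (\cref{thm:equal_free_band}) doing most of the heavy lifting. The first thing I would do is establish the following easy auxiliary fact by induction on $|\cont(w)|$: for any $w \in A^*$ with $|\cont(w)| = k$, the domain of $f_w$ is exactly $\bool^k$ (where $\bool^0 = \{\varepsilon\}$), and $f_w(\alpha) \in A^k$ whenever $\alpha$ is in this domain. This is immediate from \cref{definition-function}, since each recursive unfolding prepends one letter and strips one bit from the input, bottoming out at $f_\varepsilon(\varepsilon) = \varepsilon$ precisely when $|\alpha| = |\cont(w)|$. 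Consequently, if $f_v = f_w$ as partial functions then already $|\cont(v)| = |\cont(w)|$.

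For the forward direction, assume $v \sim w$ and induct on $n = |\cont(v)|$. The base case $n = 0$ forces $v = w = \varepsilon$, so $f_v = f_w$ trivially. For the inductive step, \cref{thm:equal_free_band} gives $v \circ \beta \sim w \circ \beta$ and $v \ast \beta = w \ast \beta$ for each $\beta \in \bool$. Since $|\cont(v \circ \beta)| = n-1$, the inductive hypothesis yields $f_{v \circ \beta} = f_{w \circ \beta}$. Then for any $\alpha \in \bool^n$, writing $\alpha = \beta\gamma$, the defining recursion gives
\[
f_v(\alpha) = (v\ast\beta)\, f_{v\circ\beta}(\gamma) = (w\ast\beta)\, f_{w\circ\beta}(\gamma) = f_w(\alpha),
\]
and outside $\bool^n$ both functions are undefined by the auxiliary fact.

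For the reverse direction, assume $f_v = f_w$, so $|\cont(v)| = |\cont(w)| = n$, and again induct on $n$. The cases $n = 0$ and $n = 1$ are immediate ($v = w = \varepsilon$ in the first; in the second, $f_v(0) = v\ast 0$ is a single letter equal to $f_w(0) = w\ast 0$, so both $v$ and $w$ collapse to the same generator in $\FB(A)$). For $n \geq 2$ and each $\beta \in \bool$, I take any $\gamma \in \bool^{n-1}$ and compare the two length-$n$ words
\[
(v\ast\beta)\, f_{v\circ\beta}(\gamma) \;=\; f_v(\beta\gamma) \;=\; f_w(\beta\gamma) \;=\; (w\ast\beta)\, f_{w\circ\beta}(\gamma).
\]
Matching first letters gives $v\ast\beta = w\ast\beta$, and matching the remaining length-$(n-1)$ tails gives $f_{v\circ\beta}(\gamma) = f_{w\circ\beta}(\gamma)$ on all of $\bool^{n-1}$; since $v\circ\beta$ and $w\circ\beta$ both have content size $n-1$, the auxiliary fact promotes this to $f_{v\circ\beta} = f_{w\circ\beta}$ as partial functions, and the inductive hypothesis delivers $v\circ\beta \sim w\circ\beta$. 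The only remaining hypothesis of Green-Rees is $\cont(v) = \cont(w)$, which follows from $\cont(v) = \cont(v\circ 0) \cup \{v\ast 0\} = \cont(w\circ 0) \cup \{w\ast 0\} = \cont(w)$ using what we have already established. Applying \cref{thm:equal_free_band} gives $v \sim w$.

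The only real subtlety, and the step I would be most careful with, is the bookkeeping around domains and $\bot$: without the auxiliary fact that $\dom(f_w) = \bool^{|\cont(w)|}$ exactly, the implication ``$f_v(\beta\gamma) = f_w(\beta\gamma)$ for all $\gamma \in \bool^{n-1}$'' would not upgrade cleanly to equality of the partial functions $f_{v\circ\beta}$ and $f_{w\circ\beta}$ needed to trigger the inductive hypothesis. Once that observation is in place, both directions reduce to a direct transcription of the Green-Rees recursion through the definition of $f_w$.
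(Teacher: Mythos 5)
Your proposal is correct and follows essentially the same route as the paper's proof: induction on the common content size, with the Green--Rees Theorem driving both directions, matching first letters and tails of $f_v(\beta\gamma) = f_w(\beta\gamma)$ in the reverse direction, and recovering $\cont(v) = \cont(w)$ from $\cont(v\circ 0)\cup\{v\ast 0\}$. Your explicit auxiliary fact that $\dom(f_w) = \bool^{|\cont(w)|}$ is the same observation the paper uses (phrased there as ``the domains of $f_v$ and $f_w$ disagree''), and your separate $n=1$ base case is harmless but not needed.
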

\begin{proof}
    If $|\cont(v)|\neq|\cont(w)|$, then $\cont(v)\neq \cont(w)$, so $v\not\sim w$ by \cref{thm:equal_free_band}. Also $f_v\neq f_w$, since the domains of $f_v$ and $f_w$ disagree. Therefore we only need to prove the lemma in the case $|\cont(v)| = |\cont(w)|$.

    We proceed by induction on the size of the content. Clearly if $|\cont(w)| = |\cont(v)| = 0$, then $v=w=\varepsilon$ and $f_v=f_w=f_\varepsilon$, so the statement holds.
    Assume that the statement of the lemma holds for all $v, w\in A ^ *$ with $|\cont(v)| = |\cont(w)| < k$ for some $k\geq 1$.

    Let $v,w\in A^\ast$ with $|\cont(v)| = |\cont(w)| = k$. If $v\sim w$, then, by \cref{thm:equal_free_band}, $v\circ\beta\sim w\circ\beta$ and $v\ast\beta=w\ast\beta$ for all $\beta\in\bool$. By the inductive hypothesis, since $|\cont(v\circ\beta)|, |\cont(w\circ\beta)| < k$, it follows that $f_{v\circ\beta} = f_{w\circ\beta}$ for all $\beta\in\bool$. But then $f_v(\alpha) = (v\ast\beta) f_{v\circ \beta}(\gamma) = (w\ast\beta) f_{w\circ\beta}(\gamma) = f_w(\alpha)$ for all $\alpha\in \bool^k$, so $f_v=f_w$.

    On the other hand, if $f_v = f_w$, then $(v\ast\beta) f_{v\circ \beta}(\gamma) = (w\ast\beta) f_{w\circ\beta}(\gamma)$ for all $\beta\in\bool, \gamma\in\bool^{k-1}$. But then $v\ast\beta=w\ast\beta$ for all $\beta\in \bool$, and $f_{v\circ\beta}(\gamma) = f_{w\circ\beta}(\gamma)$ for all $\gamma\in \bool^{k-1}$, so $f_{v\circ\beta} = f_{w\circ\beta}$ for all $\beta\in\bool$. Hence, by the inductive hypothesis, $v\circ\beta \sim w\circ\beta$ for all $\beta\in \bool$. Finally, by \cref{thm:equal_free_band}, $v\circ0\sim  w\circ0$ implies $\cont(v\circ0)=\cont(w\circ0)$ so that $\cont(v) = \{v\ast0\}\cup\cont(v\circ 0) = \{w\ast0\}\cup\cont(w\circ0) = \cont(w)$. It follows, again by \cref{thm:equal_free_band}, that $v\sim w$.
\end{proof}

It follows from \cref{lem:word_function_lemma} that
if $x\in \FB(A)$, then we may denote by $f_x$ any of the functions $f_w$ where $w\in A ^ *$ and $w /{\sim} = x$ without ambiguity. In the same vein, we say that a transducer $\mathcal{T}$ \defn{represents} $x\in \FB(A)$ if $\mathcal{T}$ realizes $f_x$.

For a transducer $\mathcal{T} = (Q, \Sigma, \Gamma, q_0, T, \spadesuit, \clubsuit)$ and a state $p_0\in Q$, we define the \defn{induced subtransducer rooted at $p_0$} to be the transducer $\mathcal{T}^\prime = (Q, \Sigma, \Gamma, p_0, T, \spadesuit, \clubsuit)$, i.e. the transducer with the same states, transitions and terminal states, but whose initial state is $p_0$ instead of $q_0$.
Let $g$ be the function realized by $\mathcal{T}$, and let $g'$ be the function realized by the subtransducer $\mathcal{T}^\prime$. It can be shown that if  $q_0\spadesuit \alpha = p_0$ for some $\alpha\in\Sigma^\ast$, then $g^\prime(\beta)$ is the suffix of $g(\alpha\beta)$ of length $|\beta|$ for all $\beta \in \Sigma^\ast$ such that $g(\alpha\beta)\neq \bot$; and where $g^\prime(\beta) = \bot$ otherwise.

If a transducer $\mathcal{T}$ represents $x\in \FB(A)$, it would be convenient if the subtransducer rooted at $q_0\spadesuit \alpha$ represented $x\circ\alpha$ for any $\alpha\in \bool^\ast$, wherever $x\circ \alpha$ is defined. This is indeed the case,  which we establish in the next lemma.
This lemma will allow us to determine properties of elements of the free band by considering the structure of a representative transducer directly, rather than considering the function realized by that transducer.
\begin{lemma}\label{thm:subtransducers}
    Let $\mathcal{T} = (Q, \bool, A, q_0, T, \spadesuit, \clubsuit)$ be a transducer representing $x\in \FB(A)$. If $\alpha\in \bool^\ast$ is such that $q_0\spadesuit \alpha= q$, then
    the induced subtransducer rooted at $q$ realizes $f_{x\circ \alpha}$.
\end{lemma}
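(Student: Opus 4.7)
My plan is to show that the function $g'$ realized by the induced subtransducer $\mathcal{T}'$ rooted at $q$ agrees with $f_{x \circ \alpha}$, by comparing both with $f_x$ evaluated on inputs that begin with $\alpha$, and extracting the relevant length-$|\beta|$ suffix in two different ways.

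First I would establish, by induction on $|\alpha|$ using \cref{definition-function}, the unrolled identity
\[
f_x(\alpha\beta) \;=\; \bigl((x \ast \alpha_{(1,1)})(x \ast \alpha_{(1,2)})\cdots (x \ast \alpha_{(1,|\alpha|)})\bigr)\, f_{x \circ \alpha}(\beta).
\]
The leading product has length exactly $|\alpha|$, so whenever $f_x(\alpha\beta) \neq \bot$, the suffix of $f_x(\alpha\beta)$ of length $|\beta|$ is precisely $f_{x \circ \alpha}(\beta)$. In parallel I would unroll the extended definition of $\clubsuit$, once more by induction on $|\alpha|$, to obtain $q_0 \clubsuit \alpha\beta = (q_0 \clubsuit \alpha)(q \clubsuit \beta)$, in which the first factor has length $|\alpha|$. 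Because $\mathcal{T}$ realises $f_x$, the left-hand side equals $f_x(\alpha\beta)$ for every $\alpha\beta \in L(\mathcal{T})$, and by definition $g'(\beta) = q \clubsuit \beta$ for every $\beta$ accepted by $\mathcal{T}'$. Comparing the two length-$|\beta|$ suffixes then yields $g'(\beta) = f_{x \circ \alpha}(\beta)$ on the common domain.

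It remains to verify that the domains of $g'$ and $f_{x \circ \alpha}$ coincide. A short induction on $|\cont(w)|$ from \cref{definition-function} shows that for $w \in A^*$ we have $f_w(\gamma) \neq \bot$ iff $|\gamma| = |\cont(w)|$, so $f_{x \circ \alpha}$ has domain $\bool^{|\cont(x)|-|\alpha|}$. On the transducer side, $\beta$ is accepted by $\mathcal{T}'$ iff $\alpha\beta$ is accepted by $\mathcal{T}$, iff $f_x(\alpha\beta) \neq \bot$, iff $|\alpha\beta| = |\cont(x)|$ — the same condition. The one subtlety worth flagging is that $x \circ \alpha$ must itself be defined; this is ensured by the hypothesis $q = q_0 \spadesuit \alpha \neq \bot$, which forces the existence of some $\delta$ with $\alpha\delta \in L(\mathcal{T})$, and hence $|\alpha| \leq |\cont(x)|$.

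The main obstacle is really just careful bookkeeping of partial-function domains and running the parallel inductions for $f_x$ and $\clubsuit$ in step; no genuinely new idea is required beyond \cref{definition-function} and the extended definition of $\clubsuit$ given immediately before the statement.
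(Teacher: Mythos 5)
Your proposal is correct and follows essentially the same route as the paper: both unroll $f_x(\alpha\beta)$ into a length-$|\alpha|$ prefix $\prod_{i=1}^{|\alpha|}(x\ast\alpha_{(1,i)})$ followed by $f_{x\circ\alpha}(\beta)$, and identify the latter with the length-$|\beta|$ suffix of $q_0\clubsuit\alpha\beta$ computed by the subtransducer. You are in fact somewhat more careful than the paper, which leaves the suffix property of induced subtransducers and the agreement of domains as unproved remarks.
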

\begin{proof}
    Let $x\in \FB(A)$, let $\mathcal{T} = (Q, \bool, A, q_0, T, \spadesuit, \clubsuit)$ be a transducer realizing the function $f_x$, let $q\in Q$ be reachable, and let $\alpha\in \bool^\ast$ be such that $q_0\spadesuit \alpha = q$.

    Suppose that $g$ is the function realized by the induced subtransducer rooted at $q$. To compute $g$ for $\beta\in \bool ^ \ast$ we take the suffix of $f_x(\alpha\beta)$ obtained by removing the first $|\alpha|$ output letters. But, by the definition of $f_x$,
    \[f_x(\alpha\beta) = \left(\prod_{i=1}^{|\alpha|} x\ast\alpha_{(1,i)}\right) f_{x\circ \alpha}(\beta).\]
    Since $\prod_{i=1}^{|\alpha|} x\ast \alpha_{(1, i)}$ is the length $|\alpha|$ prefix of $f_x(\alpha\beta)$, $g(\beta) = f_{x\circ \alpha}(\beta)$ for all $\beta\in\bool^\ast$. In particular,  $g$ is the function $f_{x\circ \alpha}$ realized by the induced subtransducer rooted at $q$. Therefore the subtransducer rooted at $q$ represents $x\circ \alpha$.
\end{proof}

Let  $\mathcal{T}_x = (Q, \bool, A, q_0, T, \spadesuit, \clubsuit)$ be a transducer representing $x\in \FB(A)$. If we define the partial function $\Psi : Q\rightarrow \FB(A)$ by
\[
\Psi(q) =
\begin{cases}
x\circ \alpha & \text{if } q = q_0\spadesuit \alpha \text{ for some } \alpha\in\bool^\ast  \\
\bot          & \text{otherwise},
\end{cases}
\]
then it follows from \cref{thm:subtransducers} that the diagrams in \cref{fig:induced_state_function} commute. Consequently, we may refer to a state $q$ in a transducer as \defn{representing} the element $\Psi(q)$ of $\FB(A)$.
\begin{figure}
  \centering
  \begin{tikzcd}
    A ^ *\times \bool ^ *  \arrow[r, "\circ"] \arrow[d, "/{\sim}\times\textrm{id}"']
    & A ^ * \arrow[d, "/{\sim}"] \\
     \FB(A) \times \bool ^ * \arrow[r, "\circ"] & \FB(A)\\
     Q\times \bool^\ast \arrow[u, "\Psi\times\textrm{id}", dashed]\arrow[r, "\spadesuit"] & Q\arrow[u,"\Psi"', dashed]
  \end{tikzcd} \quad
  \begin{tikzcd}
    A ^ *\times \bool ^ *   \arrow[dr,"\ast"]\arrow[d, "/{\sim}\times\textrm{id}"']
    \\
     \FB(A) \times \bool ^ * \arrow[r, "\ast"] & A\\
    Q\times \bool ^ *   \arrow[ur,"\clubsuit"']\arrow[u, dashed, "\Psi\times\textrm{id}"]
  \end{tikzcd}
  \caption{Commutative diagrams giving correspondence between states in a transducer and elements of the free band.}
  \label{fig:induced_state_function}
\end{figure}
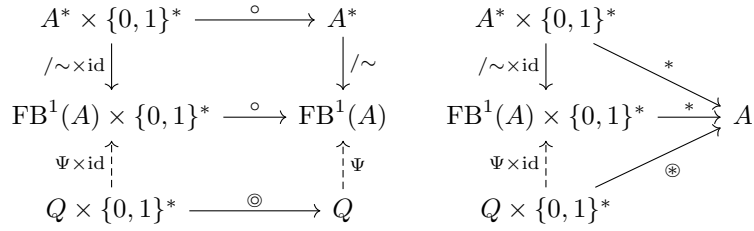

\section{Equality Checking}\label{sec:equality}

In this section we give a method using transducers for checking equality in the
free band. The algorithm we describe in this section is heavily based on that
of Radoszewski and Rytter from~\cite{radoszewski2010aa}. Roughly speaking,
in~\cite{radoszewski2010aa}, to compare two words $u, v\in A^\ast$ for equality
in the free band, the authors construct an automaton, called the \defn{interval
automaton} that has $\mathcal{O}((|u|+|v|)\cdot |A|)$ states. Equality of $u$
and $v$ in the free band is then established by checking the equivalence of two
states corresponding to $u$ and $v$ within the interval automaton. This is done
without storing the entire automaton, but only the relevant parts at every
step.  In this section, we adapt this idea,  constructing one transducer for
each word and then comparing the transducers. The principal difference is that
we retain the whole transducer, which leads to worse space complexity than
Radoszewski and Rytter's approach, but has the benefit of permitting us to
define multiplication in \cref{sec:multiplication} and obtain the short-lex
least word in a $\sim$-class in \cref{sec:minimal}.

We start this section by recasting the interval automaton for $w\in A^\ast$ as a transducer using the notation established in the previous section.
Let $w\in A^+$. A \defn{content-$k$ subword} of $w$ is a subword $v$ of $w$ such that $|\cont(v)| = k$. A content-$k$ subword is called \defn{prefix maximal} if it is not a proper prefix of any other content-$k$ subword. Similarly, a subword is \defn{suffix maximal} if it is not the proper suffix of any content-$k$ subword.  There are at most $|w|$ prefix  maximal content-$k$ subwords of $w$ for any given $k$, since each letter of $w$ can be the start of at most one such word. An analogous statement holds for suffix maximal content-$k$ subwords also.

 Let $w = a_1\cdots a_n\in A^+$ where $a_1, \ldots, a_n\in A$ and $k\in\mathbb{N}_0$. Then we define the partial function
 \[
 \rght_k:\{1,\ldots, n\}\rightarrow \{ 1,\ldots, n\}
 \]
 such that
 $w_{(i, \rght_k(i))}$ is the unique prefix maximal content-$k$ subword of $w$ starting at index $i$ if such a subword exists, and $w_{(i, \rght_k(i))} = \bot$ otherwise.
 The partial function $\lft_k:\{1,\ldots, n\}\rightarrow \{1,\ldots, n\}$ is defined dually.

The \defn{interval transducer} of $w\in A ^ *$ is defined as follows:
\begin{equation*}\label{eq:rr-transducer}
\mathcal{T} = \left(Q, \{0, 1\}, A, q_0, T, \spadesuit, \clubsuit\right)
\end{equation*}
where
\begin{enumerate}[label=(\roman*)]
    \item the set of states $Q$ consists of a symbol $0$ and
    all of the pairs $(i, j)\in\{1,\ldots, n\}\times \{1, \ldots, n\}$ such that $w_{(i,j)}$ is either a prefix maximal or suffix maximal content-$k$ subword of $w$ for some $k\in\mathbb{N}$;
    \item the initial state is $q_0 = (1, n)$ (corresponding to the whole word $w$);
    \item the set of terminal states is $T = \{0\}$;
    \item if $(i,j)\in Q$ where $|\cont(w_{(i, j)})| = k > 1$, then the state and letter transition functions are given by
    \begin{align*}
        (i, j) \spadesuit 0 &= (i, \rght_{k-1}(i)) & (i, j) \spadesuit 1 =& (\lft_{k-1}(j), j)\\
        (i, j) \clubsuit 0 &= a_{\rght_{k-1}(i) + 1} & (i, j) \clubsuit 1 =& a_{\lft_{k-1}(j)-1};
    \end{align*}
    \item
      $(i,j)\in Q$ where $|\cont(w_{(i, j)})| = 1$, then
    \begin{align*}
        (i, j) \spadesuit \alpha = 0 & &
        (i, j) \clubsuit \alpha = a_{i}; &&
    \end{align*}
    \item $0\spadesuit\alpha = 0\clubsuit \alpha = \bot$ for all $\alpha \in \bool$.
\end{enumerate}
The  transducer has a linear number of states in $|A|\cdot |w|$ since  $|Q|\leq 2\cdot |A|\cdot |w| + 1$, since there are at most $|w|$ prefix or suffix maximal content-$k$ subwords for any given $k$, and no prefix or suffix maximal content-$k$ subwords for $k\geq |A|$.
The proof that $\mathcal{T}$ realizes $f_w$ is the same as that given in \cite[Lemmas 1 and 2]{radoszewski2010aa}, and for the sake of brevity, we will not duplicate the proof here.
An algorithm is given in~\cite[Theorem 2]{radoszewski2010aa} that constructs the interval transducer $\mathcal{T}$ for $w\in A ^ *$ in $\mathcal{O}(|A|\cdot |w|)$ time; we denote this algorithm by $\IntervalTransducer$.

\begin{figure}
    \centering
        \begin{tikzpicture}[scale=1.2]
\node (q_0) at (4,2) {$0$};
\node (q_1) at (0,4) {$(1, 1)$};
\node (q_2) at (2,4) {$(2, 2)$};
\node (q_3) at (4,4) {$(3, 3)$};
\node (q_4) at (6,4) {$(4, 4)$};
\node (q_5) at (2,6) {$(1, 3)$};
\node (q_6) at (4,6) {$(2, 3)$};
\node (q_7) at (6,6) {$(3, 4)$};
\node (q_8) at (0,6) {$(1, 2)$};
\node (q_9) at (4,8) {$(1, 4)$};
\node (q_10) at (6,8) {$(2, 4)$};

\draw [-latex] (q_0) -- ++(0,-0.5);
\draw [-latex] (q_9)++(0,0.5) -- (q_9);

\draw [ -latex ,draw=color0 ] (q_1) to[ out=-36, in=164 ] node[near start, below] {$0|a$} (q_0);
\draw [ -latex reversed, dashed ,draw=color0 ] (q_1) -- node[near start, above] {$1|a$} (q_0);

\draw [ -latex ,draw=color1 ] (q_2) to[ out=-55, in=145] node[near start, left] {$0|b$} (q_0);
\draw [ -latex reversed, dashed ,draw=color1 ] (q_2) to[ out=-35, in=125 ] node[near start, above] {$1|b$} (q_0);

\draw [ -latex ,draw=color0 ] (q_3) to[ out=-105, in= 105] node[near start, left] {$0|a$} (q_0);
\draw [ -latex reversed, dashed ,draw=color0 ] (q_3) to[out=-75, in=75] node[near start, right] {$1|a$} (q_0);

\draw [-latex, draw=color2] (q_4) to [out=-135, in=55] node[near start, above] {$0|c$} (q_0);
\draw [ -latex reversed, dashed ,draw=color2 ] (q_4) to [out=-125, in=35] node[near start, right] {$1|c$} (q_0);

\draw [ -latex ,draw=color1 ] (q_5) -- node[near start, inner sep=6pt, above] {$0|b$} (q_1);
\draw [ -latex reversed, dashed ,draw=color1 ] (q_5) -- node[near start, inner sep=6pt, above] {$1|b$} (q_3);

\draw [ -latex ,draw=color0 ] (q_6) -- node[near start, inner sep=6pt, above] {$0|a$} (q_2);
\draw [ -latex reversed, dashed ,draw=color1 ] (q_6) -- node[near start, right] {$1|b$} (q_3);

\draw [ -latex ,draw=color2 ] (q_7) -- node[near start, inner sep=6pt, above] {$0|c$} (q_3);
\draw [ -latex reversed, dashed ,draw=color0 ] (q_7) -- node[near start, inner sep=6pt, right] {$1|a$} (q_4);

\draw [ -latex ,draw=color1 ] (q_8) -- node[near start, left] {$0|b$} (q_1);
\draw [ -latex reversed, dashed ,draw=color0 ] (q_8) -- node[near start, inner sep=6pt, above] {$1|a$} (q_2);

\draw [-latex, draw=color2] (q_9)  -- node[near start, inner sep=6pt, above] {$0|c$} (q_5);
\draw [-latex reversed, dashed, draw=color1] (q_9) -- node[near start, inner sep=6pt, above] {$1|b$} (q_7);

\draw [-latex,draw=color2] (q_10) -- node[near start, inner sep=6pt, above] {$0|c$} (q_6);
\draw [-latex reversed, dashed, draw=color1] (q_10) -- node[near start, right] {$1|b$} (q_7);

\end{tikzpicture}

    \caption{The interval transducer realizing $f_{abac}$.}
    \label{fig:transducer-abac-interval}
\end{figure}
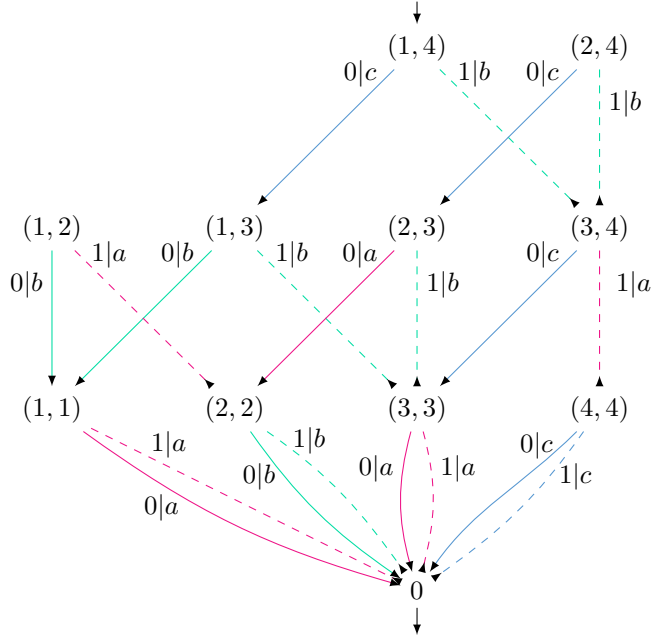

A transducer is \defn{trim} if every state belongs to some path starting at the initial state and ending at a terminal state.  Any trim transducer $\mathcal{T}$ realizing $f_w$ for $w\in A^+$ is acyclic, since the domain of $f_w$ is finite.
A linear time algorithm for minimizing acyclic deterministic synchronous transducers, such as those in this paper, can be derived from Revuz~\cite{revuz1992aa};  we refer to this algorithm as $\Minimize$.

We say that transducers $\mathcal{T} = (Q, \Sigma, \Gamma, q_0, T, \spadesuit, \clubsuit)$ and
$\mathcal{T}^\prime = (Q^\prime, \Sigma^\prime, \Gamma^\prime, q_0^\prime, T^\prime, \spadesuit^\prime, \clubsuit^\prime)$ are \defn{isomorphic} if the input and output alphabets coincide, $\Sigma = \Sigma^\prime, \Gamma = \Gamma^\prime$,
and there exists a bijection  $f: Q\rightarrow Q^\prime$ such that $f(q_0) = q_0^\prime, f(T) = T^\prime$, $f(q\spadesuit \alpha) = f(q)\spadesuit^\prime \alpha$ and $q\clubsuit \alpha = f(q)\clubsuit^\prime \alpha$ for all $q\in Q, \alpha\in\Sigma$. In other words, two transducers are isomorphic if they differ only by the labelling of states.

Just as with automata, the minimal deterministic synchronous transducer realizing a given function is unique up to isomorphism.
It is also possible to check if two trim deterministic synchronous transducers are isomorphic in linear time in the number of transitions of the transducers. Essentially we treat the synchronous transducer as an automaton by making each transition be labelled by the pair containing both the input and output symbol, then apply the isomorphism algorithm for connected trim automata.
We will call this algorithm $\TrimTransducerIsomorphism$.

The algorithms $\TrimTransducerIsomorphism$, $\Minimize$, and $\IntervalTransducer$ can now be combined into an algorithm for checking equality in the free band, \cref{algorithm:equal_in_free_band}.

\begin{algorithm}[ht]
\SetAlgoLined\DontPrintSemicolon\KwArguments{words $v, w\in A^\ast$;}
\KwReturns{if $v \sim w$ or not.}
Let $\mathcal{T} = \Minimize(\IntervalTransducer(v))$\;
Let $\mathcal{T}^\prime = \Minimize(\IntervalTransducer(w))$\;
\KwRet{$\TrimTransducerIsomorphism(\mathcal{T}_v, \mathcal{T}_w)$}\;
\caption{$\EqualInFreeBand$}
\label{algorithm:equal_in_free_band}
\end{algorithm}

The total time and space complexity of this algorithm is $O((|v|+|w|)\cdot|A|)$.
A version of $\EqualInFreeBand$ is present in~\cite[Theorem 2]{radoszewski2010aa}.
However, the algorithm described in~\cite{radoszewski2010aa} achieves a better space complexity ($\mathcal{O}(|v|+|w|)$) than \cref{algorithm:equal_in_free_band} by processing the transducers $\mathcal{T}$ and $\mathcal{T}^\prime$ layer by layer. Whereas in the method described here, we store the full transducers, and so our algorithm has complexity $\mathcal{O}((|v|+|w|)\cdot |A|)$.

 The preceding discussion also allows us to specify the size of the minimal
 transducer. Recall that if $A$ is any set, then the \defn{short-lex ordering} on $A^\ast$ is defined by $u\leq v$ if $|u|<|v|$ or if $|u|=|v|$ and $u$ is lexicographically less than $v$ (with respect to some underlying ordering on $A$) for all $u, v\in A^*$. For $w\in A^+$ let $\min(w)\in A^+$ denote the short-lex least word such that $w \sim \min(w)$. Applying $\IntervalTransducer$ to $\min(w)$ gives a transducer $\mathcal{T}$ with at most $2\cdot |\min(w)|\cdot |A|+1$ states realizing $f_{\min(w)} = f_w$. The minimal transducer realizing $f_w$ will have no more states than $\mathcal{T}$; we record this in the next theorem.
\begin{theorem}\label{thm:size_min_transducer}
If $A$ is any non-empty set, and $w\in A ^*$ is arbitrary, then
the minimal transducer realizing $f_w$ has at most $2\cdot |\min(w)|\cdot |A|+1$ states.\qed
\end{theorem}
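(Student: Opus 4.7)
The proof plan follows exactly the construction outlined in the paragraph preceding the theorem, so I would keep it short and direct. First, I would recall that $\min(w) \in A^+$ is by definition the short-lex least element of the $\sim$-class of $w$, so in particular $w \sim \min(w)$. By \cref{lem:word_function_lemma} this gives $f_w = f_{\min(w)}$, which means that any transducer realizing $f_{\min(w)}$ is automatically a transducer realizing $f_w$.

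Next, I would apply the $\IntervalTransducer$ construction to $\min(w)$ to obtain an interval transducer $\mathcal{T}$ for $\min(w)$. The state set of $\mathcal{T}$ consists of the sink state $0$ together with pairs $(i,j)$ corresponding to prefix-maximal or suffix-maximal content-$k$ subwords of $\min(w)$ for $1\leq k\leq |A|$. As already noted in the excerpt, for each fixed $k$ there are at most $|\min(w)|$ prefix-maximal and at most $|\min(w)|$ suffix-maximal content-$k$ subwords (since each such subword is determined by its starting, respectively ending, index in $\min(w)$). Summing over $k \in \{1,\ldots,|A|\}$ and adding one for the sink state $0$ yields
\[
|Q| \;\leq\; 2\cdot |\min(w)|\cdot |A| + 1.
\]
Since $\mathcal{T}$ realizes $f_{\min(w)} = f_w$, this gives an explicit transducer realizing $f_w$ with at most that many states.

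Finally, I would invoke the uniqueness-up-to-isomorphism of the minimal deterministic synchronous transducer realizing a given function (already discussed in this section, with $\Minimize$ produced via Revuz~\cite{revuz1992aa}). By minimality, the minimal transducer realizing $f_w$ has no more states than any other transducer realizing $f_w$, and in particular no more than $\mathcal{T}$. The bound $2\cdot |\min(w)|\cdot |A|+1$ therefore transfers to the minimal transducer, completing the proof. There is no real obstacle here: the argument is essentially a one-line sandwich between the interval transducer construction applied to the short-lex least representative and the universal property of the minimal transducer, and the only thing worth flagging is that the two ingredients ($f$-equality of $\sim$-equivalent words, and the counting bound for interval-transducer states) have both already been established earlier in the paper.
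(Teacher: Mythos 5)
Your proposal is correct and follows essentially the same argument as the paper, which records this theorem as an immediate consequence of the preceding discussion: apply $\IntervalTransducer$ to $\min(w)$, use $f_{\min(w)} = f_w$, count at most $2\cdot|\min(w)|\cdot|A|+1$ states, and invoke minimality. Your write-up merely makes explicit the steps the paper leaves implicit.
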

Note that it is not necessary to compute (or even know) $\min(w)$ when constructing the minimal transducer realizing $f_w$.

\section{Multiplication}\label{sec:multiplication}
In this section, we describe how to multiply elements of the free band using transducers representing the elements. Suppose that $\cdot$ denotes the multiplication within $\FB(A)$ for some alphabet $A$. One approach to finding a transducer representing $x\cdot y$ given $x, y\in \FB(A)$ arises from the algorithms described in \cref{sec:equality}.
If $v, w\in A^\ast$ represent $x, y\in\FB(A)$, respectively, then a transducer for $x\cdot y$ is given by $\IntervalTransducer(vw)$. This construction has complexity $\mathcal{O}((|v|+|w|)\cdot |A|)$.
Concatenating words, in all but the most trivial cases, produces a word that is not a minimal representative of the product. Therefore repeated multiplications using words can quickly lead to very long representatives.
 In this section, we give an algorithm that constructs a transducer $\mathcal{T}_{x\cdot y}$, representing the product $x\cdot y$, from any transducers $\mathcal{T}_x$  and $\mathcal{T}_y$ representing $x, y\in \FB(A)$, respectively. The complexity of this approach is $\mathcal{O}(|\mathcal{T}_x| + |\mathcal{T}_y| + |A|^2)$ where $|\mathcal{T}_x|$ and $|\mathcal{T}_y|$ are the number of states of transducers $\mathcal{T}_x$ and $\mathcal{T}_y$ respectively. By \cref{thm:size_min_transducer}, if $\mathcal{T}_x$ and $\mathcal{T}_y$ are minimal transducers, then $\Minimize(\mathcal{T}_{x\cdot y})$ can be found in $\mathcal{O}((|\min(x)| + |\min(y)| + |A|)\cdot |A|)$ time and $|\Minimize(\mathcal{T}_{x\cdot y})|\leq 2\cdot|A|\cdot|\min(x\cdot y)| + 1$. The number of states in the minimal transducer for $\mathcal{T}_{x\cdot y}$ is hence within a constant multiple of the minimum possible length of any word in $A ^ *$ representing $x\cdot y$. This resolves the issue of the representative length blow-up in repeated multiplications.

In the following series of results, we establish some necessary properties of the product $x\cdot y$ and show how to compute these properties.

\begin{lemma}
    \label{lem:multiplication}
    Let $x, y \in \FB(A)$ with $x, y\neq \varepsilon$. Then
    \begin{align*}
        (x\cdot y) \circ 0 &= \begin{cases}
            x\cdot (y\circ 0) & \textrm{if}\ y\ast 0\not\in\cont(x)\\
            (x\cdot (y\circ 0))\circ 0 & \textrm{otherwise},
        \end{cases}\quad
        (x\cdot y) \ast 0 &= \begin{cases}
            y\ast 0 & \textrm{if}\ y\ast 0\not\in\cont(x)\\
            (x\cdot (y\circ 0))\ast 0 & \textrm{otherwise},
        \end{cases}\\
        (x\cdot y) \circ 1 &= \begin{cases}
            (x\circ 1)\cdot y & \textrm{if}\ x\ast 1\not\in\cont(y)\\
            ((x\circ 1)\cdot y)\circ 1 & \textrm{otherwise},
        \end{cases}\quad
        (x\cdot y) \ast 1 &= \begin{cases}
            x\ast 1 & \textrm{if}\ x\ast 1\not\in\cont(y)\\
            ((x\circ 1)\cdot y)\ast 1 & \textrm{otherwise}.
        \end{cases}
    \end{align*}
\end{lemma}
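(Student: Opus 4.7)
The plan is to work at the word level: pick representative words $v, w \in A^+$ with $v/{\sim} = x$ and $w/{\sim} = y$, establish the claimed identities as equalities in $A^+$ (up to $\sim$), and then descend to $\FB(A)$ via the commutative diagrams of \cref{fig-induced-circ}. The key preliminary observation is a reformulation of the definitions: for any $u\in A^+$, the letter $u\ast 0$ is exactly the letter of $\cont(u)$ whose leftmost occurrence in $u$ is latest, and $u\circ 0$ is the prefix of $u$ ending immediately before that occurrence. This is immediate from \cref{definition-circ} and \cref{definition-ast}, since the unique letter of $\cont(u)$ missing from the longest prefix with one fewer distinct letter must first occur right after that prefix. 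Symmetrically, $u\ast 1$ is the letter whose rightmost occurrence is earliest.

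Factor $w = w_1 a w_2$ with $w_1 = w\circ 0$ and $a = w\ast 0$, so that $a\notin\cont(w_1)$. I would then split the analysis of $\circ 0$ and $\ast 0$ on whether $a\in\cont(v)$. When $a\notin\cont(v)$, I would verify that $a$ is the unique latest-first letter of $vw$: every $c\in\cont(v)$ first appears at some position $\leq |v|$; every $c\in\cont(w)\setminus(\cont(v)\cup\{a\})$ first appears inside $w_1$, hence at position $\leq |v|+|w_1|$; and $a$ first appears at position exactly $|v|+|w_1|+1$. The reformulation then yields $(vw)\ast 0 = a = w\ast 0$ and $(vw)\circ 0 = vw_1 = v(w\circ 0)$, which descends to the first case of the lemma.

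When $a\in\cont(v)$, I would instead compare $vw$ with $vw_1$ directly. First, $\cont(vw_1) = \cont(v)\cup(\cont(w)\setminus\{a\}) = \cont(vw)$ because $a\in\cont(v)$. Second, for each $c\in\cont(vw)$, the position of the first occurrence of $c$ in $vw_1$ agrees with that in $vw$: if $c\in\cont(v)$ both lie inside $v$, and if $c\notin\cont(v)$ then $c\neq a$, so $c\in\cont(w_1)$ and the first $c$ in $w$ already lies inside $w_1$. Hence $vw$ and $vw_1$ have the same latest-first letter at the same position, so $(vw)\ast 0 = (vw_1)\ast 0$ and in fact $(vw)\circ 0 = (vw_1)\circ 0$ as words. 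Since $vw_1$ represents $x\cdot(y\circ 0)$, this descends to the second case.

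Finally, the $\circ 1$ and $\ast 1$ identities are mirror images of the above: use the reformulation that $u\ast 1$ is the letter with earliest last occurrence, factor $v = v_2 b v_1$ with $v_1 = v\circ 1$ and $b = v\ast 1$, and split on whether $b\in\cont(w)$. The only step I expect to require genuine thought is the initial reformulation of $\circ 0$ and $\ast 0$ in terms of first-occurrence positions; once that is in place, everything else is routine position-tracking on $vw$ versus $vw_1$, and I do not anticipate any real obstacle.
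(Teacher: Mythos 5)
Your proposal is correct and follows essentially the same route as the paper's proof: pass to word representatives $v,w$, descend via the commutative diagram, and split on whether $w\ast 0\in\cont(v)$, using the characterisation of $\circ\,0$ and $\ast\,0$ via first occurrences (the paper's $\pref$/$\ltof$ description). Your version merely makes the position-tracking more explicit than the paper's brief content-based phrasing.
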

\begin{proof}
We will prove the first two statements, the second two statements follow by a similar argument.

Let $x, y\in \FB(A)$ and
let $v, w\in A ^ *$ such that $v/{\sim} = x$ and $w/{\sim} = y$. Then $(vw)/{\sim} =x\cdot y$ is a representative of  $x\cdot y$.  Hence it suffices by the commutative diagram in \cref{fig-induced-circ} to prove the result for $v$ and $w$ instead of $x$ and $y$.
By assumption $x\neq \varepsilon$ and $y\neq \varepsilon$, and so $v\ast \alpha$ and $v\circ \alpha$ are defined for all $\alpha\in\bool$ and similarly for $w$.

If $w\ast 0\not\in \cont(v)$, then $|\cont(v(w\circ 0))| = |\cont(vw)|-1$, since $w\ast 0 \not \in \cont(v)$ by assumption, and $w\ast 0 \not\in \cont(w\circ 0)$ by definition. Furthermore, $v(w\circ0)$ is the largest prefix of $vw$ with content strictly contained in $\cont(vw)$, since the next letter in $vw$ is $w\ast 0$. So $(vw)\circ 0 = v(w\circ 0)$ and $(vw)\ast 0 = w\ast 0$.

If $w\ast 0 \in\cont(v)$, then $\cont(v(w\circ0)) = \cont(vw)$. Since  $v(w\circ 0)$ is a prefix with the same content as $vw$, the first letter to occur last in $vw$ must occur somewhere within $v(w\circ 0)$. Therefore $(vw)\circ 0 = (v(w\circ0))\circ 0$ and $(vw)\ast 0 = (v(w\circ0))\ast 0$.
\end{proof}

Recall that \cref{thm:equal_free_band} essentially allows us to reconstruct $z\in \FB(A)$ given  $\cont(z), z\circ \alpha$, and $z\ast \alpha$ for all $\alpha\in \bool$. Consider the case when $z=x\cdot y$ for some $x, y\in \FB(A)$. We know that $\cont(x\cdot y) = \cont(x)\cup \cont(y)$. \cref{lem:multiplication} allows us to compute $(x\cdot y)\circ \alpha$ and $(x\cdot y)\ast \alpha$ for $\alpha\in \bool$. However, using \cref{lem:multiplication} na\"ively to compute $x\cdot y$ requires exponential time, since in order to compute $(x\cdot y)\circ 0$ and $(x\cdot y)\circ 1$ we must compute $x\cdot (y\circ 0)$ and $(x\circ 1) \cdot y$.
 We will show that it is possible to reduce the complexity by iterating \cref{lem:multiplication} and considering the possible values of $(x\cdot y)\circ \alpha$ for all $\alpha\in\bool^\ast$.
In \cref{cor:multiplication_iterative}, we will show that:
\[(x\cdot y)\circ \alpha
\in\{(x\circ 1^i) \cdot (y\circ 0^j) : 0\leq i\leq |\cont(x)|, 0\leq j \leq |\cont(y)|\}
\cup\{x \circ \beta : \beta \in \bool^\ast\}
\cup\{y \circ \gamma : \gamma \in \bool^\ast\}
\]
for all $\alpha \in \bool^\ast$. So, to compute $(x\cdot y)\circ \alpha$ for $\alpha\in\bool^\ast$, requires $\mathcal{O}(|\cont(x)|\cdot |\cont(y)|) = \mathcal{O}(|A|^2)$ products of the form $(x\circ 1^i) \cdot (y\circ 0^j)$, and the finitely many values in
\[\{x \circ \beta : \beta \in \bool^\ast\}
\cup\{y \circ \gamma : \gamma \in \bool^\ast\}\]
which are represented in the given transducers for $x$ and $y$.
We will show in \cref{lem:multiplication:iterative} that the values of $((x\circ 1^i) \cdot (y\circ 0^j))\circ \alpha$
and $((x\circ 1^i) \cdot (y\circ 0^j))\ast \alpha$ for $\alpha \in \bool$ can be derived by considering how $\cont(x\circ 1^i)$ and $\cont(y\circ 0^j)$ interact.
Taken together, this will allow us to construct a transducer representing $x\cdot y$, by taking the  transducers representing $x$ and $y$, adding $\mathcal{O}(|A|^2)$ states, and the necessary transitions  corresponding to the products $\{(x\circ 1^i) \cdot (y\circ 0^j) : 0\leq i\leq |\cont(x)|, 0\leq j \leq |\cont(y)|\}$.

We will make use of the following partial functions.

\begin{definition}
    Let $K_{0}, K_{1}:\FB(A)\times \FB(A)\rightarrow \mathbb{N}$ be partial functions defined by:
    \begin{align*}
        K_0(x, y) & =
        \begin{cases}
        \min \{k \in \mathbb{N} : y\ast 0 ^ k\neq \bot\text{ and }y\ast 0 ^k \not \in \cont(x)\} & \text{if such $k$ exists}\\
        \bot & \text{otherwise,}
        \end{cases}\\
        K_1(x, y) & =
        \begin{cases}
        \min \{l \in \mathbb{N} : x\ast 1 ^ l\neq \bot\text{ and }x\ast 1 ^l \not \in \cont(y)\} & \text{if such $l$ exists}\\
        \bot & \text{otherwise.}
        \end{cases}
    \end{align*}
    for every $x, y\in \FB(A)$.
\end{definition}

The following lemma gives a recursive way of calculating the values of $K_{0}(x, y)$ and $K_{1}(x,y)$ for any $x, y\in \FB(A)$.
\begin{lemma}\label{lem:kxy_computation}
    Let $x, y\in \FB(A)$.
    Then:
    \begin{align*}
    K_{0}(x, y) &=
        \begin{cases}
                1 & \text{if } y\ast 0\neq \bot \text{ and } y\ast 0\not \in \cont(x)\\
        1+K_{0}(x,y\circ 0) & \text{if } y\ast 0\in\cont(x)\text{ and }K_{0}(x, y\circ0)\neq \bot\\
        \bot & \text{otherwise,}
        \end{cases}\\
    K_{1}(x, y) &=
        \begin{cases}
                1 & \text{if } x\ast 1\neq \bot\text{ and } x\ast 1\not \in \cont(y)\\
        1+K_{1}(x\circ 1,y) & \text{if } x\ast 1\in\cont(y)\text{ and }K_{1}(x\circ 1, y)\neq \bot\\
        \bot & \text{otherwise.}
        \end{cases}
    \end{align*}
\end{lemma}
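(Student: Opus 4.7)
By the evident symmetry it suffices to prove the formula for $K_0$; the formula for $K_1$ is obtained by swapping the roles of $x$ and $y$, replacing $0$ by $1$ throughout, and applying the mirror argument. The plan is a direct case analysis matching the three clauses in the statement, with no genuine induction needed beyond what is already built into the recursive definition of $\ast$.

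The key technical ingredient is clause (iv) of \cref{definition-ast}, which gives $w \ast \beta\gamma = (w \circ \beta) \ast \gamma$ for all $\beta \in \bool$ and $\gamma \in \bool^*$. Iterating this yields the identity
\[
y \ast 0^k = (y \circ 0) \ast 0^{k-1}
\]
for every $k \geq 1$, whenever either side is defined; in particular the two sides are either both defined or both undefined. I will also use that $y \ast 0 = \bot$ iff $y = \varepsilon$, in which case $y \ast 0^k = \bot$ for all $k \geq 1$ and the set over which $K_0(x,y)$ is minimised is empty.

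With these tools in hand, the first case (when $y \ast 0 \neq \bot$ and $y \ast 0 \notin \cont(x)$) is immediate, since $k = 1$ already realises the minimum and hence $K_0(x,y) = 1$. In the second case (when $y \ast 0 \in \cont(x)$) the value $k = 1$ is excluded, so $K_0(x,y) \geq 2$ whenever defined; applying the identity above and substituting $k' = k-1$ gives a bijection between the admissible values $k \geq 2$ for $K_0(x,y)$ and the admissible values $k' \geq 1$ for $K_0(x, y\circ 0)$, and taking minima yields $K_0(x,y) = 1 + K_0(x, y\circ 0)$ when the latter is finite, and $\bot$ otherwise. The ``otherwise'' clause of the lemma then falls out because the only remaining configurations (either $y = \varepsilon$, or $y \ast 0 \in \cont(x)$ with $K_0(x, y\circ 0) = \bot$) force the minimisation set to be empty.

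I do not anticipate any real obstacle here; the main thing to be careful about is the bookkeeping of when each of $y \ast 0^k$ and $(y \circ 0) \ast 0^{k-1}$ is defined, so that the claimed bijection between the two minimisation sets is genuine rather than spurious, and so that the two failure modes are correctly packaged into the single ``otherwise'' line of the statement.
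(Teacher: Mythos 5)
Your proposal is correct and follows essentially the same route as the paper's proof: the same three-way case split, the same use of the identity $y\ast 0^k=(y\circ 0)\ast 0^{k-1}$ coming from \cref{definition-ast}(iv), and the same shift-by-one comparison of the two minimisation sets (which the paper phrases by noting that $1+K_0(x,y\circ 0)$ is the least $k$ with $y\ast 0^k\neq\bot$ and $y\ast 0^k\notin\cont(x)$, exactly your bijection $k'=k-1$). The handling of the two failure modes in the ``otherwise'' clause also matches the paper.
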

\begin{proof}
We will only prove the lemma for $K_{0}$; the proof for $K_{1}$ is analogous.

If $y\ast 0\neq \bot$ and $y\ast 0\not \in \cont(x)$, then, by definition, $K_{0}(x, y) = 1$ since there is no positive integer less than $1$.

If $y\ast 0\neq \bot$ and $y\ast 0\in\cont(x)$, then $K_{0}(x, y) = \bot$ or $K_{0}(x, y)\geq 2$.
If $K_{0}(x, y\circ 0)\neq \bot$, then $K_{0}(x, y\circ 0)$ is the least positive integer such that $(y\circ 0)\ast 0^{K_{0}(x, y\circ 0)} \neq \bot$ and $(y\circ 0) \ast 0^{K_{0}(x, y\circ 0)}\not \in \cont(x)$.
But, by the definition of $\circ$ and $\ast$, $(y\circ 0)\ast 0^{K_{0}(x, y\circ 0)}= y\ast 0^{1+K_0(x,y\circ0)}$ and so
$k = 1+K_{0}(x, y\circ 0)$ is the least positive integer such that $y\ast 0^{k}\neq \bot$ and $y\ast 0^{k}\not\in\cont(x)$, as required.
If $K_{0}(x, y\circ 0) =\bot$, then there is no $k\geq 1$ such that $y\ast 0^{1+k}\neq \bot$ and $y\ast 0^{1+k}\not\in\cont(x)$. This implies that there is no $k\geq 2$ such that $y\ast 0^{k}\neq \bot$ and $y\ast 0^{k}\not\in\cont(x)$. But $K_{0}(x, y) = \bot$ or $K_{0}(x, y) \geq 2$, and since the latter does not hold, we conclude that $K_{0}(x, y) = \bot$.

Finally, if $y\ast 0= \bot$, then $y\ast 0^{k}=\bot$ for all $k\geq 1$, so by definition $K_{0}(x, y)=\bot$.
\end{proof}

With the preceding lemma in mind, we now consider $((x\circ 1^i) \cdot (y\circ 0^j))\circ \alpha$
and $((x\circ 1^i) \cdot (y\circ 0^j))\ast \alpha$ for $\alpha \in \bool$.

\begin{lemma}\label{lem:multiplication:iterative}
    Let $x, y\in \FB(A)$ and let $0\leq i\leq |\cont(x)|$ and $0\leq j \leq |\cont(y)|$.
    If $i\neq |\cont(x)|$ or $j\neq |\cont(y)|$, then
        \[
    \begin{array}{ll}
        ((x\circ 1^i)\cdot (y\circ 0^j)) \circ 0 = \begin{cases}
            (x\circ 1^i)\cdot (y\circ 0^{j+k})\\
            x\circ 1^i0
        \end{cases}
        &
        \begin{array}{l}
             \textrm{if } k = K_{0}(x\circ 1^i, y\circ 0^j)\neq \bot\\
             \textrm{otherwise,}
        \end{array}
            \\
        ((x\circ 1^i)\cdot (y\circ 0^j)) \ast 0 = \begin{cases}
            y\ast 0^{j+k} \\
            x\ast 1^i0
        \end{cases}
        &
        \begin{array}{l}
        \textrm{if } k=K_{0}(x \circ 1 ^ i, y \circ 0 ^ j)\neq\bot\\
        \textrm{otherwise,}
        \end{array}
        \\
        ((x\circ 1^i)\cdot (y\circ 0^j)) \circ 1 = \begin{cases}
            (x\circ 1^{i+k})\cdot (y\circ 0^j) \\
            y\circ 0^j1
        \end{cases}
        &
        \begin{array}{l}
        \textrm{if } k=K_{1}(x \circ 1 ^ i, y \circ 0 ^ j)\neq \bot\\
        \textrm{otherwise,}
        \end{array}
        \\
        ((x\circ 1^i)\cdot (y\circ 0^j)) \ast 1 = \begin{cases}
            x\ast 1^{i+k}  \\
            y\ast 0^j1
        \end{cases}
        &
        \begin{array}{l}
        \textrm{if } k=K_{1}(x \circ 1 ^ i, y \circ 0 ^ j)\neq \bot\\
        \textrm{otherwise.}
        \end{array}
    \end{array}
    \]
        On the other hand, if $i = |\cont(x)|$ and $j = |\cont(y)|$, then
    $((x\circ 1^i)\cdot (y\circ 0^j)) \circ \alpha = \bot$ and $((x\circ 1^i)\cdot (y\circ 0^j)) \ast \alpha = \bot$ for $\alpha\in\bool$.
\end{lemma}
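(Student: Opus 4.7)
The plan is to prove the four formulas in pairs: the formulas for $((x\circ 1^i)\cdot(y\circ 0^j))\circ 0$ and $((x\circ 1^i)\cdot(y\circ 0^j))\ast 0$ will be established together by induction on $m=|\cont(y)|-j=|\cont(y\circ 0^j)|$, and the formulas for the $1$-versions will then follow by a symmetric argument (swapping the roles of $x$, $y$, of $0$, $1$, and of $K_0$, $K_1$). The final assertion, that $((x\circ 1^i)\cdot (y\circ 0^j))\circ \alpha = \bot = ((x\circ 1^i)\cdot (y\circ 0^j))\ast \alpha$ when $i=|\cont(x)|$ and $j=|\cont(y)|$, is immediate since both $x\circ 1^i$ and $y\circ 0^j$ equal $\varepsilon$, so $(x\circ 1^i)\cdot(y\circ 0^j)=\varepsilon$ and the functions $\circ$ and $\ast$ are undefined on $\varepsilon$ with any argument in $\bool$.

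For the base case $m=0$, we have $y\circ 0^j=\varepsilon$, so $(x\circ 1^i)\cdot (y\circ 0^j)=x\circ 1^i$. Since we are not in the excluded edge case, $i<|\cont(x)|$, so $x\circ 1^i\neq \varepsilon$ and both $(x\circ 1^i)\circ 0=x\circ 1^i0$ and $(x\circ 1^i)\ast 0=x\ast 1^i0$ are defined. On the other hand $K_0(x\circ 1^i,\varepsilon)=\bot$ by definition, so the claimed formulas read $x\circ 1^i 0$ and $x\ast 1^i 0$, matching.

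For the inductive step, assume $m\geq 1$, so $y\circ 0^j\neq \varepsilon$, and also $x\circ 1^i\neq\varepsilon$ (otherwise we reduce to the symmetric argument applied at the symmetric edge). We split on whether $(y\circ 0^j)\ast 0\in\cont(x\circ 1^i)$ and invoke \cref{lem:multiplication}. If $(y\circ 0^j)\ast 0\notin\cont(x\circ 1^i)$, then \cref{lem:multiplication} yields $((x\circ 1^i)\cdot (y\circ 0^j))\circ 0=(x\circ 1^i)\cdot (y\circ 0^{j+1})$ and $((x\circ 1^i)\cdot (y\circ 0^j))\ast 0=(y\circ 0^j)\ast 0=y\ast 0^{j+1}$; by \cref{lem:kxy_computation}, $K_0(x\circ 1^i,y\circ 0^j)=1$, so both formulas are satisfied with $k=1$. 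If instead $(y\circ 0^j)\ast 0\in\cont(x\circ 1^i)$, \cref{lem:multiplication} gives
\[
((x\circ 1^i)\cdot (y\circ 0^j))\circ 0=((x\circ 1^i)\cdot (y\circ 0^{j+1}))\circ 0,
\]
and analogously for $\ast 0$. Since $|\cont(y\circ 0^{j+1})|=m-1$, the inductive hypothesis applies to the right-hand side. If $K_0(x\circ 1^i,y\circ 0^{j+1})=k'$ is finite, then by induction the right-hand side equals $(x\circ 1^i)\cdot (y\circ 0^{j+1+k'})$, and by \cref{lem:kxy_computation} $K_0(x\circ 1^i,y\circ 0^j)=1+k'$, giving the required formula with $k=1+k'$; similarly for $\ast 0$. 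If $K_0(x\circ 1^i,y\circ 0^{j+1})=\bot$, then by induction the right-hand side is $x\circ 1^i0$ (respectively $x\ast 1^i0$), and \cref{lem:kxy_computation} gives $K_0(x\circ 1^i,y\circ 0^j)=\bot$, matching the formula.

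The main obstacle is purely bookkeeping rather than mathematical: the statement of the lemma packages together two branches (the $K_0$-finite case and the $K_0=\bot$ case) that must be matched up with the two cases of \cref{lem:multiplication} and the three cases of \cref{lem:kxy_computation}, while also accounting for the edge cases in which either $x\circ 1^i$ or $y\circ 0^j$ happens to be $\varepsilon$. Once the induction is set up on $m=|\cont(y\circ 0^j)|$ for the $0$-formulas and on $|\cont(x\circ 1^i)|$ for the $1$-formulas, the inductive step is a routine one-line invocation of \cref{lem:multiplication} followed by the $K_0$ (resp.\ $K_1$) recursion from \cref{lem:kxy_computation}.
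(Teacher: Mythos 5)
Your proposal is correct and follows essentially the same route as the paper: a boundary base case followed by an inductive step that invokes \cref{lem:multiplication} and the recursion for $K_0$ from \cref{lem:kxy_computation}, split on whether $y\ast 0^{j+1}\in\cont(x\circ 1^i)$; the paper merely runs a single downward induction on the pair $(i,j)$, proving all four formulas at once, where you run two decoupled single-index inductions, which is an equally valid bookkeeping choice. The one loose end is your parenthetical for the case $x\circ 1^i=\varepsilon$ with $y\circ 0^j\neq\varepsilon$: the ``symmetric argument'' establishes the $1$-formulas, so it does not cover the $0$-formulas in this case, but the case is a one-line direct check (the product is $y\circ 0^j$, $\cont(x\circ 1^i)=\varnothing$ forces $K_0(x\circ 1^i,y\circ 0^j)=1$, and both sides reduce to $y\circ 0^{j+1}$ and $y\ast 0^{j+1}$), which is exactly how the paper disposes of it in its base case.
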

\begin{proof}
   We will show that if the result holds for $(x\circ 1^{i+1})\cdot (y\circ 0^j)$ and $(x\circ 1^i)\cdot (y\circ 0^{j+1})$, then it holds for $(x\circ 1^i)\cdot (y\circ 0^j)$.
    For the base case, let $i = |\cont(x)|$ or $j=|\cont(y)|$.

    If $i = |\cont(x)|$ and $j = |\cont(y)|$, then, since every application of $\circ$ reduces the content of $x$ by $1$,
    $|\cont(x\circ 1^{i})| = |\cont(x)| - i = 0$ and so $x\circ 1^i = \varepsilon$. Similarly $y\circ 0^j = \varepsilon$. It follows that
    $(x\circ 1^i)\cdot(y\circ0^j) = \varepsilon$ and so
    $((x\circ 1^i)\cdot(y\circ0^j))\circ \alpha$ and
    $((x\circ 1^i)\cdot(y\circ0^j))\ast \alpha$ are undefined for $\alpha \in \bool$.

    If $i = |\cont(x)|$ and $1\leq j < |\cont(y)|$.
    This implies that $x\ast 1^{i+k} = (x\circ 1^i)\ast 0^k = \varepsilon\ast 0^k$ is undefined for all $k\geq 1$.
    So $K_{1}(x\circ 1^i, y\circ 0^j)$ is undefined and
    \begin{align*}
        ((x\circ 1^i)\cdot(y\circ0^j))\circ1 & = \varepsilon\cdot (y\circ 0^j) \circ 1 = (y\circ 0^j)\circ 1=  y\circ 0^j1 \\
        ((x\circ 1^i)\cdot(y\circ0^j))\ast1 & = \varepsilon\cdot (y\circ 0^j) \ast 1 = (y\circ 0^j)\ast 1 = y\ast 0^j1.
        \end{align*}
    Since $j<|\cont(y)|$, $y\circ 0^j \neq \varepsilon$ so that $y\circ 0^{j+1}$ is defined and $y\circ 0^{j+1}\not\in \cont(x\circ 1^i) = \varnothing$.
    Hence by \cref{lem:kxy_computation}, $K_{0}(x\circ 1^i, y\circ 0^j)=1$. In this case, since $x\circ 1^i = \varepsilon$,
    \begin{align*}
        ((x\circ 1^i)\cdot(y\circ0^j))\circ0 = (x\circ 1^i)\cdot(y\circ 0^{j+1})\text{ and }
        ((x\circ 1^i)\cdot(y\circ0^j))\ast0 = y\ast 0^{j+1},
    \end{align*}
    as required.
    We have established the base case when $i=|\cont(x)|$, the proof when $j=|\cont(y)|$ is  analogously.

    Suppose that $0\leq i <|\cont(x)|$, $0\leq j<|\cont(y)|$ and assume that the conclusion of the lemma holds for
    $(x\circ 1^{i+1})\cdot (y\circ 0^j)$
    and $(x\circ 1)\cdot (y\circ 0^{j+1})$. Since $i<|\cont(x)|$ and $j<|\cont(y)|$, $x\circ 1^i, y\circ 0^j\neq \varepsilon$ and we can apply \cref{lem:multiplication} to $(x\circ 1^i)\cdot (y\circ 0^j)$.
    The value of $y\ast0^{j+1}$ is defined since $y\circ 0^j\neq \varepsilon$. If $y\ast0^{j+1}\not \in \cont(x\circ 1^i)$, then $K_{0}(x\circ 1^i, y\circ 0^j) = 1$ by \cref{lem:kxy_computation}. But then \cref{lem:multiplication} implies that
    \begin{align*}
    ((x\circ 1^i)\cdot (y\circ 0^j))\circ 0 &= (x\circ 1^i)\cdot ((y\circ 0^j)\circ 0) = (x\circ 1^i)\cdot (y\circ 0^{j+1})\\
    ((x\circ 1^i)\cdot (y\circ 0^j))\ast 0 &= (y\circ 0^j)\ast 0 = y\ast 0^{j+1},
    \end{align*}
    as required.
    If $y\ast 0^{j+1}\in \cont(x\circ 1^i)$, then, by \cref{lem:multiplication},
    \begin{align*}
    ((x\circ 1^i)\cdot (y\circ 0^j))\circ 0 &= ((x\circ 1^i)\cdot ((y\circ 0^j)\circ 0))\circ 0 = ((x\circ 1^i)\cdot (y\circ 0^{j+1}))\circ 0\\
    ((x\circ 1^i)\cdot (y\circ 0^j))\circ 0 &= ((x\circ 1^i)\cdot ((y\circ 0^j)\circ 0))\ast 0 = ((x\circ 1^i)\cdot (y\circ 0^{j+1}))\ast 0
    \end{align*}
    and so by the inductive hypothesis on $(x\circ 1^i)\cdot (y\circ 0^{j+1})$ the following holds
    \[
    \begin{array}{ll}
        ((x\circ 1^i)\cdot (y\circ 0^j)) \circ 0 = \begin{cases}
            (x\circ 1^i)\cdot (y\circ 0^{j+1+k})
            \\
            x\circ 1^i0
        \end{cases}
        &
        \begin{array}{l}
             \textrm{if } k=K_{0}(x\circ 1^i, y\circ 0^{j+1})\neq\bot \\
             \textrm{otherwise,}
        \end{array}
        \\
        ((x\circ 1^i)\cdot (y\circ 0^j)) \ast 0 = \begin{cases}
            y\ast 0^{j+1+k}\\
            x\ast 1^i0
        \end{cases}
        &
        \begin{array}{l}
             \textrm{if } k=K_{0}(x\circ 1^i, y\circ 0^{j+1})\neq \bot\\
             \textrm{otherwise.}
        \end{array}
    \end{array}
    \]
    \cref{lem:kxy_computation} implies that $K_{0}(x\circ 1^i, y\circ 0^j) = 1 + K_{0}(x\circ 1^i, y\circ0^{j+1})$ if $K_{0}(x\circ 1^i,y\circ 0^{j+1})$ is defined and $K_{0}(x\circ 1^i, y\circ0^j)$ is undefined otherwise. Therefore
        \[
    \begin{array}{ll}
    ((x\circ 1^i)\cdot (y\circ 0^j)) \circ 0 = \begin{cases}
            (x\circ 1^i)\cdot (y\circ 0^{j+k})
            \\
            x\circ 1^i0
        \end{cases}
        &
        \begin{array}{l}
             \textrm{if } k=K_{0}(x\circ 1^i, y\circ 0^j)\neq \bot \\
             \textrm{otherwise,}
        \end{array}
        \\
        ((x\circ 1^i)\cdot (y\circ 0^j)) \ast 0 = \begin{cases}
            y\ast 0^{j+k}
            \\
            x\ast 1^i0
        \end{cases}
        &
        \begin{array}{l}
        \textrm{if } k=K_{0}(x\circ 1^i, y\circ 0^j)\neq \bot\\
        \textrm{otherwise}
        \end{array}
        \end{array}
        \]
        which is exactly what we set out to show.

    An analogous proof establishes the conclusion of the lemma for
        $((x\circ 1^i)\cdot (y\circ 0^j)) \circ 1$ and
        $((x\circ 1^i)\cdot (y\circ 0^j)) \ast 1$, completing the induction.
\end{proof}

\cref{lem:multiplication:iterative} implies that for every $\alpha \in\bool$ one of the following holds: $((x\circ 1^i)\cdot(y\circ 0^j))\circ \alpha = (x\circ 1^{i^\prime})\cdot(y\circ 0^{j^\prime})$ for some $i^\prime, j^\prime$; $((x\circ 1^i)\cdot(y\circ 0^j))\circ \alpha = x\circ \beta$ for some $\beta \in \bool^\ast$; or $((x\circ 1^i)\cdot(y\circ 0^j))\circ \alpha = y\circ \gamma$ for some $\gamma\in\bool^\ast$.
By iteratively applying \cref{lem:multiplication:iterative}, we extend this observation in the following corollary.
\begin{corollary}\label{cor:multiplication_iterative}
 If $(x\cdot y)\circ \alpha\neq\bot$ for some $\alpha\in \bool ^ \ast$, then
\begin{multline*}
(x\cdot y)\circ \alpha
\in\{(x\circ 1^i) \cdot (y\circ 0^j) : 0\leq i\leq |\cont(x)|, 0\leq j \leq |\cont(y)|\}\\
\cup\{x \circ \beta : \beta \in \bool^\ast\}
\cup\{y \circ \gamma : \gamma \in \bool^\ast\}.\qed
\end{multline*}
\end{corollary}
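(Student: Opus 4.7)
The plan is to prove the statement by induction on the length of $\alpha\in\bool^\ast$. The base case $\alpha=\varepsilon$ is immediate: $(x\cdot y)\circ\varepsilon = x\cdot y = (x\circ 1^0)\cdot(y\circ 0^0)$, which lies in the first of the three sets with $i=j=0$.

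For the inductive step, suppose the conclusion holds for $\alpha'\in\bool^\ast$, let $\beta\in\bool$, and assume $(x\cdot y)\circ\alpha'\beta\neq\bot$; then part (iv) of \cref{definition-circ} gives $(x\cdot y)\circ\alpha'\beta = ((x\cdot y)\circ\alpha')\circ\beta$, so it suffices to split on which of the three sets on the right-hand side contains $(x\cdot y)\circ\alpha'$. If $(x\cdot y)\circ\alpha' = x\circ\beta'$ for some $\beta'\in\bool^\ast$, then $((x\cdot y)\circ\alpha')\circ\beta = x\circ(\beta'\beta)$, which remains in the second set; the case $(x\cdot y)\circ\alpha' = y\circ\gamma'$ is symmetric. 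The central case is when $(x\cdot y)\circ\alpha' = (x\circ 1^i)\cdot(y\circ 0^j)$ with $0\leq i\leq|\cont(x)|$ and $0\leq j\leq|\cont(y)|$; here \cref{lem:multiplication:iterative} directly enumerates the four possible values of $((x\circ 1^i)\cdot(y\circ 0^j))\circ\beta$, and each is either a shifted product $(x\circ 1^{i'})\cdot(y\circ 0^{j'})$, or a term of the form $x\circ 1^i 0$, or a term of the form $y\circ 0^j 1$, each of which visibly lies in one of the three target sets.

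The main obstacle is bookkeeping rather than mathematical depth: one must verify that the shifted indices $i+k$ and $j+k$ appearing in \cref{lem:multiplication:iterative} respect the bounds $i+k\leq|\cont(x)|$ and $j+k\leq|\cont(y)|$ required to stay within the first of the three sets. This follows from the definitions of $K_0$ and $K_1$: for $k=K_0(x\circ 1^i, y\circ 0^j)$ to be defined, $(y\circ 0^j)\ast 0^k = y\ast 0^{j+k}$ must be defined, which by \cref{definition-ast} forces $y\circ 0^{j+k-1}\neq\varepsilon$ and hence $j+k\leq|\cont(y)|$; the argument for $K_1$ is symmetric. With this verification the induction closes and the corollary follows.
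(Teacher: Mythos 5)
Your proof is correct and follows essentially the same route as the paper, which states the corollary with no written proof beyond the remark that it follows ``by iteratively applying \cref{lem:multiplication:iterative}''; your induction on $|\alpha|$ is precisely that iteration made explicit. The index-bound check $j+k\leq|\cont(y)|$ (and its symmetric counterpart) that you supply is the same observation the paper records later when verifying that $\mathcal{T}_{x\cdot y}$ is a transducer, so nothing is missing.
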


\cref{lem:multiplication:iterative} and \cref{cor:multiplication_iterative} allow us to describe a transducer representing the product $x\cdot y$ given transducers for $x$ and $y$ as follows. Let $x, y\in\FB(A)$ and let
$\mathcal{T}_x = \left(Q_x, \{0, 1\}, A, q_x, T_x, \spadesuit_x, \clubsuit_x\right)$ and
$\mathcal{T}_y = \left(Q_y, \{0, 1\}, A, q_y, T_y, \spadesuit_y, \clubsuit_y\right)$ be any transducers representing $x$ and $y$, respectively. We define:
\begin{equation}\label{eq:product-transducer}
\mathcal{T}_{x\cdot y} = \left(Q, \{0, 1\}, A, q_0, T, \spadesuit, \clubsuit\right)
\end{equation}
where:

\begin{enumerate}[label=(\roman*)]
    \item the set of states $Q=Q_x\cup Q_y \cup Q^\prime$ consists of all the states of $\mathcal{T}_x$ and $\mathcal{T}_y$ together with the states $Q^\prime = \{(i, j): 0\leq i\leq |\cont(x)|\ \textrm{and}\ 0\leq j \leq |\cont(y)|\}$,
    where $\{(|\cont(x)|, j) : 0\leq j\leq |\cont(y)|\}$ and $\{(i, |\cont(y)|) : 1\leq i\leq |\cont(x)|\}$ are identified with the original states $\{q_y\spadesuit_y 0^j: 0\leq j\leq |\cont(y)|\}$ and $\{q_x\spadesuit_x 1 ^ i : 0\leq i \leq |\cont(x)|\}$. Each state in $(i, j)\in Q^\prime$ corresponds to the element $(x\circ 1^i) \cdot (y\circ 0^j)$ in \cref{lem:multiplication:iterative};
    \item the initial state is $q_0 = (0, 0)$;
    \item the terminal states are exactly those of $\mathcal{T}_x$ and $\mathcal{T}_y$, i.e. $T = T_x\cup T_y$;
    \item the state and letter transition functions $\spadesuit$ and $\clubsuit$ for the states in $Q_x$ and $Q_y$ are defined exactly as their counterparts in $\mathcal{T}_x$ and $\mathcal{T}_y$;
    \item if $(i,j)\in Q^\prime$, then the state transition and letter transition functions are given by
        \[
    \begin{array}{ll}
        (i, j) \spadesuit 0 = \begin{cases}
            (i, j+k)\\
            q_x\spadesuit_x 1^i0 &
        \end{cases}
        &
        \begin{array}{l}
            \textrm{if } k=K_{0}(x\circ 1^i, y\circ 0^j)\neq\bot\\
            \textrm{otherwise,}
        \end{array}
        \\
        (i, j) \clubsuit 0 = \begin{cases}
            q_y\clubsuit_y 0^{j+k}\\
            q_x\clubsuit_x 1^i0
        \end{cases}
         &
        \begin{array}{l}
        \textrm{if } k=K_{0}(x\circ 1^i, y\circ 0^j)\neq\bot\\
        \textrm{otherwise,}
        \end{array}
        \\
        (i, j) \spadesuit 1 = \begin{cases}
            (i+k, j)\\
        q_y\spadesuit_y 0^j1
        \end{cases}
        &
        \begin{array}{l}
        \textrm{if } k=K_{1}(x\circ 1^i, y\circ 0^j)\neq \bot\\
        \textrm{otherwise,}
        \end{array}
        \\
        (i, j) \clubsuit 1 = \begin{cases}
            q_x\clubsuit_x 1^{i+k}\\
            q_y\clubsuit_y 0^j1
        \end{cases}
        &
        \begin{array}{l}
        \textrm{if } k=K_{1}(x\circ 1^i, y\circ 0^j)\neq \bot\\
        \textrm{otherwise}.
        \end{array}
    \end{array}
    \]
    \end{enumerate}

The next lemma establishes that $\mathcal{T}_{x\cdot y}$ is indeed a transducer.

\begin{lemma}
If $x, y\in \FB(A)$ are arbitrary, and $\mathcal{T}_x$ and $\mathcal{T}_y$ are transducers representing $x$ and $y$ respectively, then $\mathcal{T}_{x\cdot y}$ defined in \eqref{eq:product-transducer} is a transducer.
\end{lemma}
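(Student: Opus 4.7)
The plan is to verify the transducer axioms for the septuple in \cref{eq:product-transducer}: (a) non-emptiness of $Q$ and $T$, (b) that $\spadesuit$ and $\clubsuit$ are well-defined partial functions, (c) that their images lie in $Q$ and $A$ respectively, and (d) the synchronization condition that $q \spadesuit \alpha \neq \bot$ if and only if $q \clubsuit \alpha \neq \bot$. Non-emptiness is immediate. The substance lies in (b)--(d), and the reason (b) is not immediate is that the identifications prescribed in item~(i) of the construction cause certain states of $Q'$ to receive a prescription both from item~(iv) (as inherited states of $Q_x$ or $Q_y$) and from item~(v) (as elements of $Q'$).

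To handle (b), I would show that at each identified state the two prescriptions agree. Consider $(|\cont(x)|, j)$ identified with $q_y \spadesuit_y 0^j$. Since $x \circ 1^{|\cont(x)|} = \varepsilon$, one has $\cont(x \circ 1^{|\cont(x)|}) = \varnothing$, so by \cref{lem:kxy_computation}, $K_0(\varepsilon, y \circ 0^j) = 1$ when $y \circ 0^j \neq \varepsilon$ and is undefined when $j = |\cont(y)|$. Plugging this into item~(v) gives $(|\cont(x)|, j) \spadesuit 0 = (|\cont(x)|, j+1)$, which under the identification is precisely $(q_y \spadesuit_y 0^j) \spadesuit_y 0$. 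A parallel computation establishes $(|\cont(x)|, j) \clubsuit 0 = q_y \clubsuit_y 0^{j+1} = (q_y \spadesuit_y 0^j) \clubsuit_y 0$. The $1$-direction uses that $\varepsilon \ast 1^k$ is never defined, so $K_1(\varepsilon, y \circ 0^j) = \bot$ and both prescriptions use the fallback clause, agreeing on $q_y \spadesuit_y 0^j 1$ and $q_y \clubsuit_y 0^j 1$. The identified states $(i, |\cont(y)|)$ are treated symmetrically.

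For (c), when $k = K_0(x \circ 1^i, y \circ 0^j) \neq \bot$, the defining condition $y \ast 0^{j+k} \neq \bot$ forces $j + k \leq |\cont(y)|$, so the target $(i, j + k)$ lies in $Q'$. When $K_0$ is undefined on a pure-$Q'$ state, the fallback target is $q_x \spadesuit_x 1^i 0$; since such states have $i < |\cont(x)|$, the path $1^i 0$ lies in the domain of $f_x$ and the corresponding state exists in $Q_x$. Letter transitions land in $A$ by definition of $\mathcal{T}_x$ and $\mathcal{T}_y$. The $1$-direction is symmetric. Finally, (d) is immediate: every case in item~(v) branches on the same condition $K_0 \neq \bot$ (respectively $K_1 \neq \bot$) for both $\spadesuit$ and $\clubsuit$, so the two partial functions share the same domain on $Q'$; on $Q_x \cup Q_y$ the condition is inherited from $\mathcal{T}_x$ and $\mathcal{T}_y$.

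The main obstacle I expect is the bookkeeping in step (b): confirming that the boundary values $K_0(\varepsilon, \cdot)$ and $K_1(\cdot, \varepsilon)$ force the general formula in item~(v) to reduce exactly to the inherited transition at each identified state, uniformly across all four formulas.
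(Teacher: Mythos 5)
Your proposal is correct and takes essentially the same approach as the paper: the paper's proof likewise reduces to (c) showing the $\spadesuit$-images land in $Q$ and (b) checking that the item~(v) prescription at each identified state $(|\cont(x)|,j)$ or $(i,|\cont(y)|)$ collapses, via the boundary values $K_0(\varepsilon,\cdot)=1$ or $\bot$ and $K_1(\cdot,\varepsilon)=\bot$, to the inherited transition. The only bookkeeping you leave implicit, which the paper spells out, is the corner case $j=|\cont(y)|$ where $K_0=\bot$ and one must observe that the fallback $q_x\spadesuit_x 1^{|\cont(x)|}0$ and the inherited $(q_y\spadesuit_y 0^{|\cont(y)|})\spadesuit_y 0$ are both $\bot$ because the relevant states are terminal.
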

\begin{proof}
   It suffices to show that  the state transition function $\spadesuit$ only takes values in $Q$ and that $\spadesuit$ and $\clubsuit$ are well-defined.

   Clearly, if $(i, j)\spadesuit \alpha\in Q_x \cup Q_y$, then $(i, j)\spadesuit\alpha \in Q$. On the other hand, if $(i, j)\spadesuit 0 = (i, j + k)$ where $k = K_0(x\circ 1 ^ i, y \circ 0 ^ j) \neq \bot$,
  then
$y\ast0^{j+k}\neq \bot$ so $j + k\leq |\cont(y)|$ and $(i, j + k) \in Q'$. Similarly, if $(i, j)\spadesuit 1 = (i + k, j)$, then it can be shown that $i + k \leq |\cont(x)|$ and so $(i, j)\spadesuit 1 \in Q'$ also.

To show that $\spadesuit$  and $\clubsuit$ are well-defined, it suffices to show that
 for all $\alpha \in \bool$, $(|\cont(x)|, j) \spadesuit \alpha = (q_y\spadesuit_y 0^j)\spadesuit \alpha$, $(|\cont(x)|, j) \clubsuit \alpha = (q_y\spadesuit_y 0^j)\clubsuit \alpha$ and the analogous statements hold for $(i, |\cont(y)|)$ and $q_x\spadesuit_x 1 ^ i$.

    If $i = |\cont(x)|$ and $1\leq j\leq|\cont(y)|$,
    then $x\ast 1^i = \varepsilon$ and so
    $x\ast 1^{i+k} = (x\circ 1^i)\ast 0^k= \bot$ for all $k\geq 1$.
    But then
    \begin{align*}
        (|\cont(x)|, j)\spadesuit 1  & = q_y\spadesuit_y 0^j1 = (q_y\spadesuit_y 0^j)\spadesuit 1\\
        (|\cont(x)|, j)\clubsuit 1 & = q_y\clubsuit_y 0^j1 = (q_y\spadesuit_y 0^j)\clubsuit 1.
    \end{align*}
    If $y\circ 0^{j+1}\not=\bot$, then $y\circ 0^{j+1}\not\in \cont(x\circ 1^i) = \varnothing$ and so
    $K_{0}(x\circ 1^i, y\circ 0^j)=1$. In this case,
    \begin{align*}
        (|\cont(x)|, j)\spadesuit0 &= (|\cont(x)|, j+1) = q_y\spadesuit_y0^{j+1} = (q_y\spadesuit_y 0^j)\spadesuit 0 \\
        (|\cont(x)|, j)\clubsuit 0 &= q_y\clubsuit 0^{j+1} = (q_y\spadesuit_y 0^j)\spadesuit 0.
    \end{align*}
    If, on the other hand, $y\circ 0^{j+1}= \bot$, then $y\circ 0^{j+k}=\bot$ for all $k\geq 1$. Therefore
    \begin{align*}
        (|\cont(x)|, j)\spadesuit0 &= q_x\spadesuit_x 1^{|\cont(x)|}0 = (q_x\spadesuit_x1^{|\cont(x)|}) \spadesuit 0 \\
        (|\cont(x)|, j)\clubsuit0 &= q_x\clubsuit_x 1^{|\cont(x)|}0 = (q_x\spadesuit_x1^{|\cont(x)|}) \clubsuit 0.
    \end{align*}
    Since $q_x\spadesuit_x 1^{|\cont(x)|}$ represents $x\circ 1^{|\cont(x)|} =\varepsilon$, and $\mathcal{T}_x$ represents $x$, it follows that $q_x\spadesuit_x 1^{|\cont(x)|}$ is terminal and so $(|\cont(x)|, j)\spadesuit 0 = \bot$ and $(|\cont(x)|, j)\clubsuit 0 = \bot$. But this is exactly what we wanted, since $y\circ0^{j+1}$ is undefined, $y\circ 0^j =\varepsilon$, so $q_y\spadesuit_y 0^j$ is terminal and therefore $(q_y\spadesuit 0^j)\spadesuit 0=\bot$
    $(q_y\spadesuit 0^j)\spadesuit 1=\bot$ as well.
\end{proof}

To show that $\mathcal{T}_{x\cdot y}$ does indeed represent $x\cdot y$ we require the following lemma.
\begin{lemma}\label{lem:transducer-psi}
 Let $\mathcal{T} = (Q, \bool, A, q_0, T, \spadesuit, \clubsuit)$ be a transducer and let $\Psi: Q\rightarrow \FB(A)$ be any partial function such that $\Psi(q_0) \neq \bot$; $\Psi(q) = \varepsilon$ if and only if $q\in T$; and $\Psi$ satisfies the commutative diagram in \cref{fig:induced_state_function}. Then $\mathcal{T}$ represents $\Psi(q_0)$.
\end{lemma}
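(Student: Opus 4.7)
The plan is to prove the lemma by unrolling the two commuting squares of \cref{fig:induced_state_function} along arbitrary binary strings, then using the terminal-state condition to pin down $L(\mathcal{T})$, and finally matching the transducer's output with the recursive definition of $f_{\Psi(q_0)}$ from \cref{definition-function}.

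First, I would establish, by induction on $|\alpha|$, that whenever $q_0 \spadesuit \alpha$ is defined,
\[
\Psi(q_0 \spadesuit \alpha) = \Psi(q_0) \circ \alpha
\quad\text{and}\quad
q_0 \clubsuit \alpha = \prod_{i=1}^{|\alpha|} \bigl(\Psi(q_0) \ast \alpha_{(1,i)}\bigr).
\]
The single-letter case is exactly the content of the assumed commutative diagrams. The inductive step chains the recursive extensions of $\spadesuit$ and $\clubsuit$ to $\bool^\ast$ against the recursive clauses of $\circ$ and $\ast$ from \cref{definition-circ,definition-ast}, so that each application of $\spadesuit$ corresponds under $\Psi$ to an application of $\circ$, and each $\clubsuit$-output coincides with the corresponding $\ast$-value.

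Second, I would characterise $L(\mathcal{T})$. Combining the assumption $\Psi(q) = \varepsilon \iff q \in T$ with the identity $\Psi(q_0 \spadesuit \alpha) = \Psi(q_0) \circ \alpha$ from step one yields
\[
\alpha \in L(\mathcal{T}) \iff q_0 \spadesuit \alpha \in T \iff \Psi(q_0) \circ \alpha = \varepsilon,
\]
which (since each application of $\circ$ strips one letter from the content) is equivalent to $|\alpha| = |\cont(\Psi(q_0))|$. This matches exactly the set on which $f_{\Psi(q_0)}$ takes a non-$\bot$ value according to \cref{definition-function}.

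Third, for $\alpha \in L(\mathcal{T})$ I would unroll the recursion for $f_{\Psi(q_0)}$, as spelled out immediately after \cref{definition-function}, to get
\[
f_{\Psi(q_0)}(\alpha) = \left( \prod_{i=1}^{|\alpha|} \Psi(q_0) \ast \alpha_{(1,i)} \right) f_{\Psi(q_0) \circ \alpha}(\varepsilon).
\]
Since $\Psi(q_0) \circ \alpha = \varepsilon$ here, the trailing factor is $f_\varepsilon(\varepsilon) = \varepsilon$, so $f_{\Psi(q_0)}(\alpha) = q_0 \clubsuit \alpha$ by step one. Combined with $f_{\Psi(q_0)}(\alpha) = \bot$ outside $L(\mathcal{T})$ (by the domain match from step two), this is exactly the condition for $\mathcal{T}$ to realise $f_{\Psi(q_0)}$, i.e.\ to represent $\Psi(q_0)$.

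The main delicate point is the consistent handling of partiality: one needs the commuting diagrams to force $q_0 \spadesuit \alpha$ to be defined \emph{precisely} when $\Psi(q_0) \circ \alpha$ is defined, and likewise for $\clubsuit$ versus $\ast$. This is folded into the interpretation of the dashed arrows in \cref{fig:induced_state_function} as partial functions that agree on their entire domains, and it ensures that the characterisation $\alpha \notin L(\mathcal{T}) \iff f_{\Psi(q_0)}(\alpha) = \bot$ is genuinely an equivalence rather than a one-way implication.
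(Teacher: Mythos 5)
Your proof is correct, but it takes a genuinely different route from the paper's. The paper proves a stronger intermediate statement by structural induction on $|\cont(\Psi(q))|$ over \emph{all} states $q$ with $\Psi(q)\neq\bot$: it shows that the induced subtransducer rooted at each such $q$ realizes $f_{\Psi(q)}$, using the recursive clause $f(\alpha\beta)=(q\clubsuit\alpha)f_{\Psi(q\spadesuit\alpha)}(\beta)$ to pass from the children $q\spadesuit 0$, $q\spadesuit 1$ up to $q$. You instead work only with paths out of $q_0$: you induct on $|\alpha|$ to unroll the commuting squares into the identities $\Psi(q_0\spadesuit\alpha)=\Psi(q_0)\circ\alpha$ and $q_0\clubsuit\alpha=\prod_i\bigl(\Psi(q_0)\ast\alpha_{(1,i)}\bigr)$, characterize $L(\mathcal{T})$ via the terminal-state condition as exactly the strings of length $|\cont(\Psi(q_0))|$, and then compare against the unrolled form of $f_{\Psi(q_0)}$. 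Your approach is more direct for this particular lemma and avoids the subtransducer machinery; the paper's buys the conceptually useful byproduct that every state individually represents $\Psi(q)$, which fits the way transducer states are identified with free band elements throughout the rest of the paper. You also correctly isolate the one delicate point — that the commuting diagrams must be read as equalities of partial functions so that definedness of $q_0\spadesuit\alpha$ and of $\Psi(q_0)\circ\alpha$ match — which is precisely the observation the paper records as ``either $\Psi(q)=\varepsilon$ and $q\spadesuit 0=q\spadesuit 1=\bot$, or $\Psi(q)\neq\varepsilon$ and both transitions are defined.''
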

\begin{proof}
    We show that for every $q\in Q$ with $\Psi(q)\neq \bot$, the induced subtransducer rooted at $q$ realizes the function $f_{\Psi(q)}$. Since $\Psi(q_0)\neq \bot$ and the induced subtransducer rooted at $q_0$ realizes the same function as $\mathcal{T}$, this will imply that $\mathcal{T}$ realizes $f_{\Psi(q_0)}$ and so represents $\Psi(q_0)$, as required.
    Note that for every state $q\in Q$, if $\Psi(q)\neq \bot$, then either: $\Psi(q) = \varepsilon$, and $q\spadesuit 0 = q\spadesuit 1 = \bot$; or $\Psi(q) \neq \varepsilon$, $q\spadesuit 0\neq \bot$, and $q\spadesuit 1\neq \bot$. This is because $\Psi$ respects the commutative diagram in \cref{fig:induced_state_function}, and the only element of $\FB(A)$ with $x\circ 0 = x\circ 1 = \bot$ is $\varepsilon$.

    We proceed by induction on $|\cont(\Psi(q))|$.
    For the base case, let $q\in Q$ be such that $\Psi(q)\neq \bot$ and $|\cont(\Psi(q))| = 0$. Then $\Psi(q) = \varepsilon$ and
    since $\Psi$ respects the diagram in \cref{fig:induced_state_function}, it follows that $q\spadesuit \alpha = \varepsilon\spadesuit \alpha = \bot$ for all $\alpha \in \bool$. By assumption, $\Psi(q)=\varepsilon$ if and only if $q$ is terminal. It follows that the induced subtransducer rooted at $q$ realizes the function which maps $\varepsilon$ to itself and every other input to $\bot$, which is exactly $f_\varepsilon$.

    Now fix $1\leq k\leq |\cont(\Psi(q_0))|$ and assume that the statement holds for all $q\in Q$ with $\Psi(q)\neq \bot$ and $|\cont(\Psi(q))| = k-1$.
    Let $q\in Q$ with $\Psi(q)\neq \bot$ be such that $|\cont(\Psi(q))| = k$. Since $k\geq 1$, $\Psi(q)\neq \varepsilon$ and so $q\spadesuit \alpha\neq \bot$ for all $\alpha\in\bool$. But then $|\cont(\Psi(q\spadesuit \alpha))| = |\cont(\Psi(q)\circ \alpha)| = |\cont(\Psi(q))| - 1$ for all $\alpha\in\bool$. It follows that the induced subtransducer rooted at $q\spadesuit \alpha$ realizes $f_{\Psi(q\spadesuit \alpha)} = f_{\Psi(q)\circ \alpha}$ for all $\alpha \in \bool$.
    The subtransducer rooted at $q$ realizes the partial function $f: \bool^\ast\rightarrow A^\ast$ given by $f(\varepsilon) = \bot$ and $f(\alpha\beta) = (q\clubsuit \alpha) f_{\Psi(q\spadesuit \alpha)}(\beta)$ for all $\alpha\in \bool$ and $\beta\in \bool^\ast$. But this is exactly the partial function $f_{\Psi(q)}$.
\end{proof}

   We now prove that the transducer $\mathcal{T}_{x\cdot y}$ represents $x\cdot y$ as claimed.

\begin{theorem}
Let $x, y\in\FB(A)$ and let $\mathcal{T}_x$ and $\mathcal{T}_y$ be transducers representing $x$ and $y$, respectively. Then the transducer $\mathcal{T}_{x\cdot y}$ defined in \eqref{eq:product-transducer} represents $x \cdot y$.
\end{theorem}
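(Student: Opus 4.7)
The plan is to apply \cref{lem:transducer-psi} by exhibiting a partial function $\Psi : Q \to \FB(A)$ with $\Psi(q_0) = x \cdot y$ which satisfies the three hypotheses of that lemma. Once such a $\Psi$ is produced, \cref{lem:transducer-psi} will immediately yield that $\mathcal{T}_{x\cdot y}$ represents $\Psi(q_0) = x\cdot y$.

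The natural definition is
\[
\Psi(q) = \begin{cases}
\Psi_x(q) & \text{if } q\in Q_x,\\
\Psi_y(q) & \text{if } q\in Q_y,\\
(x\circ 1^i)\cdot(y\circ 0^j) & \text{if } q = (i,j)\in Q'
\end{cases}
\]
where $\Psi_x$ and $\Psi_y$ are the partial functions associated with $\mathcal{T}_x$ and $\mathcal{T}_y$ via the diagram of \cref{fig:induced_state_function}. The first step is to check that $\Psi$ is well-defined on the identified states: states of the form $(|\cont(x)|, j)$ are identified with $q_y\spadesuit_y 0^j$, and since $x\circ 1^{|\cont(x)|} = \varepsilon$ acts as identity, $(x\circ 1^{|\cont(x)|})\cdot(y\circ 0^j) = y\circ 0^j = \Psi_y(q_y\spadesuit_y 0^j)$; the analogous check works for $(i, |\cont(y)|)$ and $q_x\spadesuit_x 1^i$. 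So both definitions agree on the overlap.

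Next, we verify the three requirements of \cref{lem:transducer-psi}. The initial state is $q_0 = (0,0)$ with $\Psi(q_0) = x\cdot y \neq \bot$. For the terminal condition, the terminal states are precisely $T_x \cup T_y$, and $\Psi(q) = \varepsilon$ on these states follows from the fact that $\mathcal{T}_x$ and $\mathcal{T}_y$ each represent their respective elements (so their terminal states are exactly the states sent to $\varepsilon$ by $\Psi_x, \Psi_y$); conversely, for $(i,j)\in Q'$ with $i < |\cont(x)|$ and $j<|\cont(y)|$ we have $\Psi((i,j))\neq\varepsilon$ since its content is nonempty, and states with $i=|\cont(x)|$ or $j=|\cont(y)|$ are handled by the identifications.

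The main work, and the main obstacle, is verifying the two commutative diagrams of \cref{fig:induced_state_function} for $\Psi$: namely $\Psi(q)\circ \alpha = \Psi(q\spadesuit \alpha)$ and $\Psi(q)\ast \alpha = q\clubsuit\alpha$ for all $q\in Q$ and $\alpha\in\bool$. On states belonging only to $Q_x$ or $Q_y$ (and not hit by the identifications from $Q'$) the transitions of $\mathcal{T}_{x\cdot y}$ agree with those of $\mathcal{T}_x$ or $\mathcal{T}_y$, so the diagrams commute by the corresponding property of $\Psi_x$ and $\Psi_y$. The critical case is $q = (i,j)\in Q'$: here the definitions of $\spadesuit$ and $\clubsuit$ on $Q'$ in \eqref{eq:product-transducer} were rigged precisely to match the formulas in \cref{lem:multiplication:iterative}. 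Thus, plugging in, when $k = K_0(x\circ 1^i, y\circ 0^j)\neq\bot$ we get $\Psi((i,j)\spadesuit 0) = \Psi((i,j+k)) = (x\circ 1^i)\cdot(y\circ 0^{j+k}) = \Psi((i,j))\circ 0$ by \cref{lem:multiplication:iterative}, and similarly $(i,j)\clubsuit 0 = q_y\clubsuit_y 0^{j+k}$ equals $y\ast 0^{j+k} = \Psi((i,j))\ast 0$. When $k = \bot$, the transitions fall through to $q_x\spadesuit_x 1^i 0$ and $q_x\clubsuit_x 1^i 0$, and the required equalities $\Psi(q_x\spadesuit_x 1^i 0) = x\circ 1^i 0 = \Psi((i,j))\circ 0$ and $q_x\clubsuit_x 1^i 0 = x\ast 1^i 0 = \Psi((i,j))\ast 0$ again come from \cref{lem:multiplication:iterative} combined with the fact that $\Psi_x$ respects the diagrams. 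The symmetric argument handles $\alpha = 1$ via $K_1$. Boundary cases $i = |\cont(x)|$ or $j = |\cont(y)|$ are covered by the well-definedness check and the final sentence of \cref{lem:multiplication:iterative}. With both diagrams verified, \cref{lem:transducer-psi} applies and the theorem follows.
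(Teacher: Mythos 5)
Your proposal is correct and follows essentially the same route as the paper's proof: define $\Psi$ piecewise on $Q_x$, $Q_y$, and $Q'$, verify well-definedness on the identified states using $x\circ 1^{|\cont(x)|} = \varepsilon$ and $y\circ 0^{|\cont(y)|} = \varepsilon$, check the three hypotheses of \cref{lem:transducer-psi} (with the commutative diagrams on $Q'$ reduced to \cref{lem:multiplication:iterative}), and conclude. The only difference is level of detail — the paper writes out the case analysis on $K_0$ and $K_1$ explicitly — but every step you sketch is the one the paper carries out.
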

\begin{proof}
Suppose that
$\mathcal{T}_x = \left(Q_x, \{0, 1\}, A, q_x, T_x, \spadesuit_x, \clubsuit_x\right)$, that
$\mathcal{T}_y = \left(Q_y, \{0, 1\}, A, q_y, T_y, \spadesuit_y, \clubsuit_y\right)$, and that
$\mathcal{T}_{x\cdot y} = \left(Q, \{0, 1\}, A, q_0, T, \spadesuit, \clubsuit\right)$.

To show that $\mathcal{T}_{x\cdot y}$ represents $x\cdot y$ we will define a function $\Psi : Q\rightarrow \FB(A)$ such that $\Psi(q_0) = x\cdot y$, $\Psi$ satisfies the commutative diagrams given in \cref{fig:induced_state_function}, and $\Psi(q) = \varepsilon$ if and only if  $q\in T$. It will follow from this that $\mathcal{T}_{x\cdot y}$ realizes $f_{x\cdot y}$ and therefore $\mathcal{T}_{x\cdot y}$ represents $x\cdot y$ by  \cref{lem:transducer-psi}.

Let $\Psi_x : Q_x\rightarrow \FB(A)$ and $\Psi_y: Q_y\rightarrow \FB(A)$ be the functions realizing the commutative diagram in \cref{fig:induced_state_function} for $\mathcal{T}_x$ and $\mathcal{T}_y$ respectively. We define $\Psi: Q\rightarrow \FB(A)$ by
\[\Psi(q) = \begin{cases}
    (x\circ 1^i) \cdot (y \circ 0^j) & \textrm{ if }q=(i,j) \in Q^\prime\\
    \Psi_x(q) & \textrm{ if }q\in Q_x\\
    \Psi_y(q) & \textrm{ if }q\in Q_y.
\end{cases}\]

Since we identified the states $\{(|\cont(x)|, j) : 0\leq j\leq |\cont(y)|\}$ and $\{(i, |\cont(y)|) : 1\leq i\leq |\cont(x)\}$ with the states $\{q_y\spadesuit_y 0^j: 0\leq j\leq |\cont(y)\}$ and $\{q_x\spadesuit_x 1 ^ i : 0\leq i \leq |\cont(x)|\}$ within $Q$, we must check that $\Psi$ is well-defined. In particular, we must show that $\Psi((i, |\cont(y)|)) = \Psi(q_x\spadesuit 1^i)$ and $\Psi((|\cont(x)|, j)) = \Psi(q_y\spadesuit 0^j)$ for all relevant values of $i$ and $j$.
Every application of $\circ$ reduces the content of $y$ by $1$ letter, and so $|\cont(y\circ 0^{|\cont(y)|})| = |\cont(y)|-|\cont(y)| = 0$. In other words, $y\circ 0^{|\cont(y)|} = \varepsilon$. But then
$\Psi((i, |\cont(y)|)) = (x\circ 1^i) \cdot (y \circ 0^{|\cont(y)|}) = (x\circ 1^i) \cdot \varepsilon = x\circ 1^i$ for all $i$.
On the other hand, since $\mathcal{T}_x$ represents $x$,
$\Psi(q_x\spadesuit 1^i) = \Psi_x(q_x\spadesuit 1^i) = \Psi_x(q_x)\circ 1^i = x\circ 1^i$ for all $i$.
Therefore $\Psi((i, |\cont(y)|)) = \Psi(q_x\spadesuit 1^i)$ for all $i$. Similarly, $\Psi((|\cont(x)|, j)) = \Psi(q_y\spadesuit 0^j)$ for all $j$, and so $\Psi$ is well-defined.

Next, we establish that the 3 conditions of \cref{lem:transducer-psi} hold for $\Psi$.
By the definition of $\Psi$, $\Psi(q_0) = \Psi((0, 0)) = (x\circ 1^0) \cdot (y\circ 0^0) = x\cdot y \neq \bot$.

We now show that
$\Psi(q) = \varepsilon$ if and only if $q\in T = T_x\cup T_y$.
Since, for any $q\in Q_x$, $\Psi_x(q) = \varepsilon$ if and only if $q\in T_x$, and the analogously for $\Psi_y$, it follows that, for all $q\in Q_x \cup Q_y$,
$\Psi(q) = \varepsilon$ if and only if $q\in T$.
Let $q = (i, j)\in Q^\prime$. Then $\cont(\Psi(q)) = \cont((x\circ 1^i)\cdot (y\circ 0^j)) = \cont(x\circ 1^i)\cup \cont(y\circ 0^j)$. If $\Psi(q) = \varepsilon$, then $|\cont(x\circ 1^i)| = |\cont(y\circ 0^j)| = 0$. But $|\cont(x\circ 1^i)| = |\cont(x)| - i$, and so $i = |\cont(x)|$. Similarly $j = |\cont(y)|$. Therefore $q$ is the state that was identified with both $q_x\spadesuit 1^{|\cont(x)|}$  and $q_y\spadesuit 0 ^ {|\cont(y)|}$. The former state is terminal in $\mathcal{T}_x$ since $\Psi_x(q_x\spadesuit 1^{|\cont(x)|}) = \Psi_x(q_x)\circ 1^{|\cont(x)|} = x\circ 1^{|\cont(x)|} = \varepsilon$.

It remains to show that
$\Psi$ satisfies the commutative diagram in \cref{fig:induced_state_function}.
It suffices to show that
$\Psi(q\spadesuit \alpha) = \Psi(q)\circ \alpha$ and $q\clubsuit\alpha = \Psi(q)\ast \alpha$ for all $q\in Q$ and $\alpha \in \bool$, since the recursive definitions of $\spadesuit, \clubsuit, \circ$ and $\ast$ imply the result for all $\alpha\in \bool^\ast$.
If $q\in Q_x$, then $q\spadesuit \alpha =q\spadesuit_x \alpha\in Q_x$ and $q\clubsuit \alpha = q\clubsuit_x \alpha\in A$ for all $\alpha\in\bool$ since $\spadesuit$ and $\clubsuit$ coincide with $\spadesuit_x$ and $\clubsuit_x$ on $Q_x$. This implies that \[\Psi(q\spadesuit \alpha) = \Psi(q\spadesuit_x \alpha) =\Psi_x(q\spadesuit_x \alpha) = \Psi_x(q)\circ \alpha = \Psi(q)\circ \alpha,\]
and
\[
q\clubsuit \alpha = q\clubsuit_x \alpha  = \Psi_x(q)\ast \alpha = \Psi(q)\ast \alpha
\]
for all $\alpha\in\bool$. Hence the equalities represented in the commutative diagram are satisfied for $q\in Q_x$. A similar argument shows that the commutative diagram holds for $q\in Q_y$ also.

Suppose that $q = (i, j)\in Q^\prime$. If $k = K_{0}(x\circ 1^i, y\circ 0^j)\neq \bot$, then by applying \cref{lem:multiplication:iterative}
\begin{align*}
    \Psi((i, j)\spadesuit 0) &= \Psi((i, j+k)) = (x\circ 1^i)\cdot (y\circ 0^{j+k})
    = ((x\circ1^i)\cdot (y\circ 0^j))\circ 0 = \Psi(q)\circ 0\\
    (i, j)\clubsuit 0 &= q_y\clubsuit 0^{j+k} = \Psi_y(q_y)\ast 0^{j+k} = y\ast 0^{j + k} =
     ((x\circ1^i)\cdot (y\circ 0^j))\ast 0= \Psi(q)\ast 0.
\end{align*}
On the other hand, if $K_{0}(x\circ 1^i, y\circ 0^j) = \bot$, then
\begin{align*}
    \Psi((i, j)\spadesuit 0) &= \Psi(q_x\spadesuit 1^i0) =\Psi_x(q_x\spadesuit 1^i0)= \Psi_x(q_x)\circ 1^i0
    = x\circ 1^i0 = ((x\circ1^i)\cdot (y\circ 0^j))\circ 0= \Psi(q)\circ 0\\
    (i, j)\clubsuit 0 &= q_x\clubsuit 1^i0 = \Psi_x(q_x)\ast 1^i0 = x\ast 1^i0
    = ((x\circ1^i)\cdot (y\circ 0^j))\ast 0= \Psi(q)\ast 0.
\end{align*}
Hence $\Psi(q\spadesuit 0) = \Psi(q)\circ 0$ and $q\clubsuit 0 = \Psi(q)\ast 0$ for all $q\in Q^\prime$. Similarly, it is routine to verify that $\Psi(q\spadesuit 1) = \Psi(q)\circ 1$ and $q\clubsuit 1 = \Psi(q)\ast 1$ for all $q\in Q^\prime$.
\end{proof}

We will in \cref{algorithm:multiply} turn the definition in \eqref{eq:product-transducer} into an algorithm for computing $\mathcal{T}_{x\cdot y}$ from $\mathcal{T}_x$ and $\mathcal{T}_y$.
\cref{algorithm:multiply}
involves computing $(i, j)\spadesuit \alpha$ and $(i, j)\clubsuit \alpha$ for $\alpha\in\bool$, which in turn require us to calculate $K_{\alpha}(x\circ 1^i, y\circ 0^j)$.

We cannot compute $K_\alpha$ directly, since it is defined on pairs of elements in $\FB(A)$ and these have no representation here other than transducers.
In particular, computing transducers (or another representation of) the input values $x\circ 1 ^ i$ and  $y \circ 0 ^ j$ defeats the purpose of the algorithm presented in this section.
However, if $x, y\in \FB(A)$ are fixed, then \cref{lem:kxy_computation} tells us that
\begin{align*}
    K_{0}(x\circ 1^i, y\circ 0^j) &=
        \begin{cases}
                1 & \text{if } y\ast 0^{j+1}\neq \bot \text{ and } y\ast 0^{j+1}\not \in \cont(x\circ 1^i)\\
        1+K_{0}(x\circ 1^i,y\circ 0^{j+1}) & \text{if } y\ast 0^{j+1}\in\cont(x\circ 1^i)\text{ and }K_{0}(x\circ 1^i, y\circ0^{j+1})\neq \bot\\
        \bot & \text{otherwise}
        \end{cases}\\
    K_{1}(x\circ 1^i, y\circ 0^j) &=
        \begin{cases}
                1 & \text{if } x\ast 1^{i+1}\neq \bot\text{ and } x\ast 1^{i+1}\not \in \cont(y\circ 0^j)\\
        1+K_{1}(x\circ 1^{i+1},y\circ 0^j) & \text{if } x\ast 1^{i+1}\in\cont(y\circ 0^j)\text{ and }K_{1}(x\circ 1^{i+1}, y\circ 0^j)\neq \bot\\
        \bot & \text{otherwise.}
        \end{cases}
    \end{align*}
    We can effectively compute, as follows, the values $K_{\alpha}(x \circ 1 ^ i, y \circ 0 ^ j)$ for any $i, j\in \mathbb{N}_0$ from any transducers  $\mathcal{T}_x$ and $\mathcal{T}_y$
    representing $x$ and $y$, respectively.
If $q_x$ and $q_y$ are the initial states and $\clubsuit_x$ and $\clubsuit_y$ are the letter transition functions of $\mathcal{T}_x$ and $\mathcal{T}_y$, respectively, then $q_x\clubsuit_x 1^i = x\ast 1^i$ and $q_y\clubsuit_y 0^i = y\ast 0^i$ for all $i, j$, and so we can compute $K_{\alpha}(x \circ 1 ^ i, y \circ 0 ^ j)$ given $i, j\in \mathbb{N}_0$. For the remainder of this section we suppose that
 $x, y\in \FB(A)$ are fixed.  We define $\overline{K_\alpha}: \mathbb{N}_0\times\mathbb{N}_0\to \mathbb{N}$ by
\[
\overline{K_{\alpha}}(i,j) = K_{\alpha}(x \circ 1 ^ i, y \circ 0 ^ j).
\]
Computing $\overline{K_{\alpha}}(i, j)$ naively by considering all possible values $k\geq 1$ and checking if $y\ast 0^{j+k}$ is defined, for example, requires at least $\mathcal{O}(|A|)$ steps per $(i, j)\in Q'$. This would require a total of $\mathcal{O}(|A|^3)$ to compute $\overline{K_{\alpha}}(i, j)$ for all $(i,j)\in Q^\prime$.
However, by utilizing \cref{lem:kxy_computation} and traversing $(i, j)$ in reverse lexicographic order, we can precompute the values of $\overline{K_{\alpha}}(i, j)$ for all $(i, j)\in Q^\prime$ in $\mathcal{O}(|A|^2)$ time. Storing the values will take $\mathcal{O}(|A|^2)$ space and we can then retrieve each value in $\mathcal{O}(1)$ time.
This is exactly what we do to compute $\overline{K_0}$ in $\ComputeK_0$ given in \cref{algorithm:compute_k}.
A function $\ComputeK_1$ to compute $\overline{K_1}$ can be defined analogously.

    To illustrate, if $x = eaec$ and $y = bcacbcd$ are elements in $\FB(a, b, c, d, e)$, then
    the minimal transducers representing each $x$  and $y$ are shown in \cref{fig:xy-minimal-transducers} and the values of $\overline{K_0}$ and $\overline{K_1}$ are shown in \cref{table-k-0-1}.

    \begin{table}
    \begin{center}
    \begin{tabular}{cc}
    \begin{tabular}[t]{c|c|cccc}
             \multicolumn{2}{r|}{$i$}  & 0 & 1 & 2 & 3 \\ \hline
              \multicolumn{2}{r|}{$\cont(x\circ 1^i)$} & $\{a, e, c\}$ & $\{e, c\}$ & $\{c\}$ & $\varnothing$  \\
          \hline
          \hline
         0 & $d$    & 1 & 1 & 1 & 1\\
         1 & $a$    & 3 & 1 & 1 & 1\\
         2 & $c$    & 2 & 2 & 2 & 1\\
         3 & $b$    & 1 & 1 & 1 & 1\\
         4 & $\bot$ & $\bot$ & $\bot$ & $\bot$ & $\bot$\\
         \hline
         $j$ & $q_y\clubsuit0^{j+1}$ & & & &
    \end{tabular}&
    \begin{tabular}[t]{c|c|ccccc}
              \multicolumn{2}{c|}{$j$}  & 0 & 1 & 2 & 3 & 4 \\
              \hline
              \multicolumn{2}{c|}{$\cont(y\circ 0^j)$} &  $\{b, c, a, d\}$ & $\{b, c, a\}$ & $\{b, c\}$ & $\{b\}$ & $\varnothing$ \\
          \hline
          \hline
         0 & $a$    & 1 & 1 & 1 & 2 & 2\\
         1 & $e$    & 1 & 1 & 1 & 1 & 1\\
         2 & $c$    & 1 & 1 & $\bot$ & $\bot$ & $\bot$\\
         3 & $\bot$ & $\bot$ & $\bot$ & $\bot$ & $\bot$ & $\bot$\\
         \hline
         $i$ & $q_x\clubsuit1^{i+1}$ & & & & &
    \end{tabular}
    \end{tabular}
    \end{center}
    \caption{The functions $\overline{K_0}$ (left) and $\overline{K_1}$ (right) of $x = eaec, y = bcacbcd\in\FB(a, b, c, d, e)$. $\ComputeK_{0}$ and $\ComputeK_{1}$ fill in their respective tables column-by-column, starting from the right, filling each column from bottom to top.}
    \label{table-k-0-1}
    \end{table}

\begin{figure}
 \centering
  \begin{tikzpicture}
 \node (x_0) at (2,6) {$x_0$};
 \node (x_1) at (0,4) {$x_1$};
 \node (x_2) at (4,4) {$x_2$};
 \node (x_3) at (2,2) {$x_3$};
 \node (x_4) at (6,2) {$x_4$};
 \node (x_5) at (4,0) {$x_5$};

 \node (y_0) at (12,6) {$y_0$};
 \node (y_1) at (10,4) {$y_1$};
 \node (y_2) at (14,4) {$y_2$};
 \node (y_3) at (8,2) {$y_3$};  \node (y_4) at (12,2) {$y_4$};
 \node (y_5) at (16,2) {$y_5$};
 \node (y_6) at (8,0) {$y_6$};  \node (y_7) at (12,0) {$y_7$};  \node (y_8) at (16,0) {$y_8$};  \node (y_9) at (12,-2) {$y_9$};
 \draw [-latex] (x_0)++(0,0.5) -- (x_0);
 \draw [-latex] (x_5) -- ++(0,-0.5);
 \draw [-latex] (y_0)++(0,0.5) -- (y_0);
 \draw [-latex] (y_9) -- ++(0,-0.5);

 \draw [-latex, draw=color2] (x_0) -- node[near start, above, inner sep=6pt] {$0|c$} (x_1);
 \draw [-latex reversed, dashed, draw=color0] (x_0) -- node[near start, above, inner sep=6pt] {$1|a$} (x_2);

 \draw [parallel_arrow_1, draw=color0] (x_1) -- node[near start, left, inner sep=10pt] {$0|a$} (x_3);
 \draw [parallel_arrow_2, draw=color0] (x_1) -- node[near start, above, inner sep=10pt] {$1|a$} (x_3);

 \draw [-latex, draw=color2] (x_2) -- node[near start, above, inner sep=6pt] {$0|c$} (x_3);
 \draw [-latex reversed, dashed, draw=color4] (x_2) -- node[near start, above, inner sep=6pt] {$1|e$} (x_4);

 \draw [parallel_arrow_1, draw=color4] (x_3) -- node[near start, left, inner sep=10pt] {$0|e$} (x_5);
 \draw [parallel_arrow_2, draw=color4] (x_3) -- node[near start, above, inner sep=10pt] {$1|e$} (x_5);

 \draw [parallel_arrow_1, draw=color2] (x_4) -- node[near start, above, inner sep=10pt] {$0|c$} (x_5);
 \draw [parallel_arrow_2, draw=color2] (x_4) -- node[near start, right, inner sep=10pt] {$1|c$} (x_5);

 \draw [-latex, draw=color3] (y_0) -- node[near start, above, inner sep=6pt] {$0|d$} (y_1);
 \draw [-latex reversed, dashed, draw=color0] (y_0) -- node[near start, above, inner sep=6pt] {$1|a$} (y_2);

 \draw [-latex, draw=color0] (y_1) -- node[near start, above, inner sep=6pt] {$0|a$} (y_3);
 \draw [-latex reversed, dashed, draw=color0] (y_1) -- node[near start, above, inner sep=6pt] {$1|a$} (y_4);

 \draw [-latex, draw=color3] (y_2) -- node[near start, above, inner sep=6pt]
 {$0|d$} (y_4);
 \draw [-latex reversed, dashed, draw=color1] (y_2) -- node[near start, above, inner sep=6pt]
 {$1|b$} (y_5);

 \draw [-latex, draw=color2] (y_3) -- node[near start, left, inner sep=6pt]
 {$0|c$} (y_6);
 \draw [-latex reversed, dashed, draw=color1] (y_3) -- node[near start, above, inner sep=6pt]
 {$1|b$} (y_7);

 \draw [parallel_arrow_2, draw=color1] (y_4) -- node[near start, left, inner sep=6pt]
 {$0|b$} (y_7);
 \draw [parallel_arrow_1, draw=color1] (y_4) -- node[near start, right, inner sep=6pt]
 {$1|b$} (y_7);

 \draw [-latex, draw=color3] (y_5) -- node[near start, above, inner sep=6pt]
 {$0|d$} (y_7);
 \draw [-latex reversed, dashed, draw=color2] (y_5) -- node[near start, right, inner sep=6pt]
 {$1|c$} (y_8);

 \draw [parallel_arrow_2, draw=color1] (y_6) -- node[near start, above, inner sep=6pt]
 {$1|b$} (y_9);
 \draw [parallel_arrow_1, draw=color1] (y_6) -- node[near start, left, inner
  sep=16pt]
 {$0|b$} (y_9);

 \draw [parallel_arrow_2, draw=color2] (y_7) -- node[near start, right, inner sep=6pt]
 {$1|c$} (y_9);
 \draw [parallel_arrow_1, draw=color2] (y_7) -- node[near start, left, inner
  sep=6pt]
 {$0|c$} (y_9);

 \draw [parallel_arrow_2, draw=color3] (y_8) -- node[near start, right, inner
  sep=16pt] {$1|d$} (y_9);
 \draw [parallel_arrow_1, draw=color3] (y_8) -- node[near start, above, inner
  sep=6pt]
 {$0|d$} (y_9);

\end{tikzpicture}

 \caption{The minimal transducers for $x=eaec$ and $y = bcacbcd$.}
 \label{fig:xy-minimal-transducers}
\end{figure}
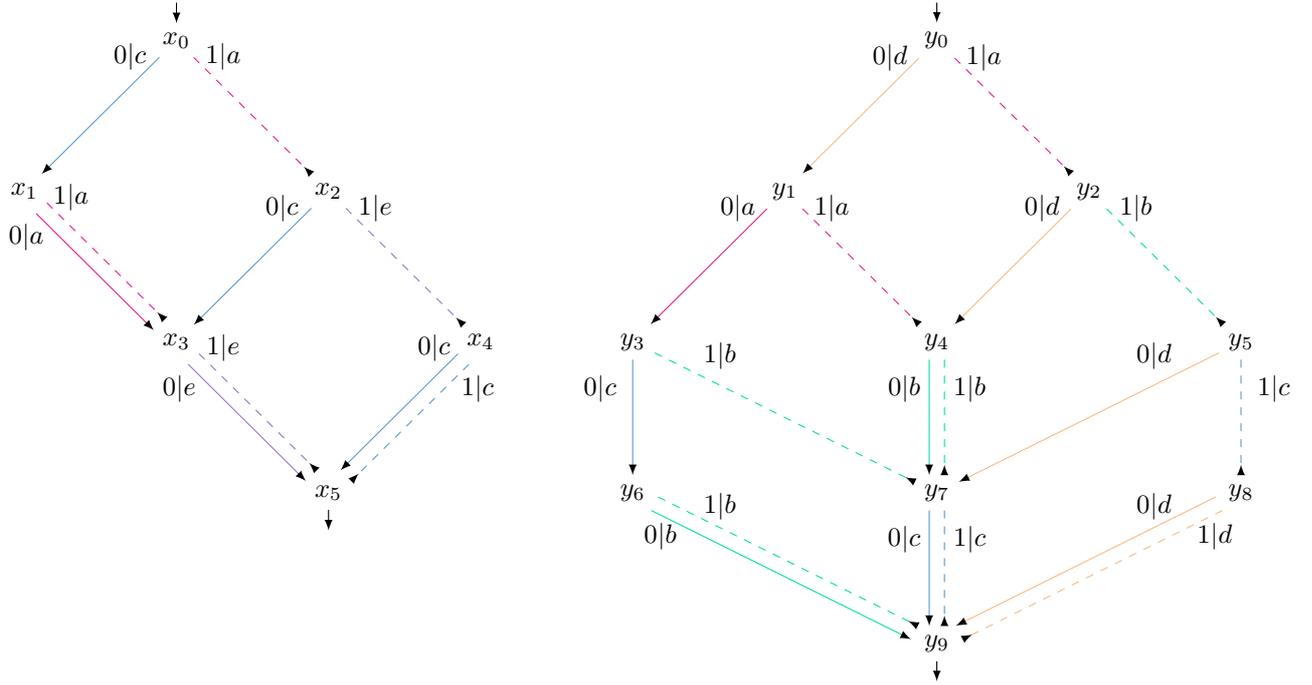
\begin{algorithm}[]
\SetAlgoLined\DontPrintSemicolon\KwArguments{a pair of transducers $\mathcal{T}_x = (Q_x, \bool, A, q_x, T_x, \spadesuit_x, \clubsuit_x)$ and $\mathcal{T}_y = (Q_y, \bool, A, q_y, T_y, \spadesuit_y, \clubsuit_y)$ representing some $x, y\in\FB(A)$ respectively.}

\KwReturns{the partial function $\overline{K_{0}}$.}
Let $\overline{K_0}:\mathbb{N}_0\times\mathbb{N}_0\rightarrow \mathbb{N}$ be such that $\overline{K_0}(i, j) = \bot$ for all $i, j$.\;
Let $c\gets \varnothing$ \Comment{The content of $x$ seen so far}
    \For{$i\in \{|\cont(x)|, \ldots, 0\}$}{    \For{$j\in \{|\cont(y)|, \ldots, 0\}$}{    \uIf{$q_y\clubsuit_y 0^{j+1}\neq \bot$ and $q_y\clubsuit_y 0^{j+1}\not\in c$}{        $\overline{K_0}(i, j) := 1$\;
    }
    \ElseIf{$q_y\clubsuit_y 0^{j+1}\neq \bot$ and $\overline{K_0}(i, j+1)\neq\bot$}{        $\overline{K_0}(i, j) := 1+\overline{K_0}(i, j+1)$\;
    }
    }
                \If{$i\neq 0$}{        $c\gets c\cup\{q_x\clubsuit_x 1^{i}\}$\;
    }
    }
\KwRet{$\overline{K_0}$}\;
\caption{$\ComputeK_0$}
\label{algorithm:compute_k}
\end{algorithm}

\begin{lemma}\label{lem:kxy_correct}
    Let $x, y\in \FB(A)$, $\alpha\in \bool$ and let $\mathcal{T}_x$ and $\mathcal{T}_y$ be transducers representing $x$ and $y$, respectively. Then $\ComputeK_\alpha(\mathcal{T}_x, \mathcal{T}_y) = \overline{K_{\alpha}}$ and
    $\ComputeK_\alpha$ has time complexity $\mathcal{O}(|A|^2)$.
\end{lemma}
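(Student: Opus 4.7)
The plan is to verify two claims separately: that $\ComputeK_0$ correctly outputs $\overline{K_0}$ (the case $\alpha = 1$ follows by the symmetric analysis of $\ComputeK_1$), and that the running time is $\mathcal{O}(|A|^2)$.

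For correctness, the main task will be to establish the following loop invariant: at the start of the iteration with outer-loop index $i$, the set $c$ equals $\cont(x \circ 1^i)$. At the start this holds trivially, since $i = |\cont(x)|$, $\cont(x \circ 1^{|\cont(x)|}) = \varnothing$, and $c$ is initialised to $\varnothing$. To maintain it, observe that $q_x \clubsuit_x 1^i = x \ast 1^i$, and that the letters $x\ast 1, x\ast 1^2, \ldots, x\ast 1^{|\cont(x)|}$ enumerate $\cont(x)$ without repetition, so adding $x \ast 1^i$ transitions $c$ from $\cont(x \circ 1^i) = \{x\ast 1^{i+1}, \ldots, x\ast 1^{|\cont(x)|}\}$ to $\cont(x \circ 1^{i-1})$. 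Given this invariant, the three branches of the inner block correspond line-by-line to the three cases of \cref{lem:kxy_computation} applied to the pair $(x\circ 1^i, y\circ 0^j)$, once we identify $y\ast 0^{j+1} = q_y \clubsuit_y 0^{j+1}$ and $\cont(x\circ 1^i) = c$, and note that the inner loop's descending order over $j$ ensures the value $\overline{K_0}(i, j+1)$ consulted in the elseif branch has already been written.

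For the complexity bound, I would first precompute the sequences $(q_y \clubsuit_y 0^{j+1})_{0 \leq j \leq |\cont(y)|}$ and $(q_x \clubsuit_x 1^i)_{1 \leq i \leq |\cont(x)|}$ by walking the $0$- and $1$-labelled paths from the initial states of $\mathcal{T}_y$ and $\mathcal{T}_x$; this takes $\mathcal{O}(|A|)$ time since these paths have length at most $|A|$. Representing $c$ by a Boolean array indexed by $A$ makes both insertion and membership constant-time. The nested loops execute at most $(|\cont(x)|+1)(|\cont(y)|+1) = \mathcal{O}(|A|^2)$ iterations, each performing $\mathcal{O}(1)$ work, yielding the claimed bound.

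The main bookkeeping obstacle is really just confirming that each quantity referenced by the algorithm is available when needed: that $\overline{K_0}(i, j+1)$ has been set before $\overline{K_0}(i, j)$ is computed (secured by descending $j$), and that $c$ transitions correctly between outer iterations (secured by the placement of the update after the inner loop and by the enumeration argument above). Neither point is deep, but both must be stated explicitly to align the algorithm with \cref{lem:kxy_computation}.
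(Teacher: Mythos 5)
Your proposal is correct and follows essentially the same route as the paper's proof: the same loop invariant $c = \cont(x\circ 1^i)$, the same identification of the branch conditions with the three cases of \cref{lem:kxy_computation} via descending induction on $j$, and the same complexity accounting with $\mathcal{O}(|A|)$ precomputation of the transition sequences. The only addition is your explicit remark about representing $c$ as a Boolean array, which the paper leaves implicit.
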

\begin{proof}
    We prove the lemma for $\ComputeK_0$, the proof for $\ComputeK_1$ is analogous.

    We begin by showing that the value of $c$ equals $\cont(x\circ 1^i)$ at the start of each iteration of the loop in line 3. When $i=|\cont(x)|$ at the start of the first loop, $|\cont(x\circ 1^i)| = |\cont(x)|-i = 0$ and so $\cont(x\circ 1^i) = \varnothing = c$. Assume that  $c =\cont(x\circ 1^i)$ holds for some $0<i\leq|\cont(x)|$ at the start of the loop in line 3. Since $i\neq 0$, in line 12 we set $c$ to be $c\cup \{q_x\clubsuit_x 1^i\}$. Since $\mathcal{T}_x$ represents $x$, $q_x\clubsuit_x 1^i = x\ast 1^i$ and so $c\cup \{q_x\clubsuit_x 1^i\} = \cont(x\circ 1^i)\cup\{x\ast 1^i\} = \cont(x\circ 1^{i-1})$, as required.

    Since $c = \cont(x\circ 1^i)$ for every value of $i$, and $\mathcal{T}_y$ represents $y$, it follows that checking $q_y\clubsuit_y0^{j+1}\neq \bot$ and $q_y\clubsuit_y 0^{j+1}\not\in c$ in line 5 is the same as checking $y\ast 0^{j+1}\neq \bot$ and $y\ast 0^{j+1}\not\in \cont(x\circ 1^i)$.

    We will show that $\overline{K}_0(i, j) = K_0(x\circ 1^i, y\circ 0^j)$ at the end of the loop that starts in line 4 for all values of $i$ and $j$.
    Suppose that $0\leq i \leq |\cont(x)|$ is arbitrary and $j = |\cont(y)|$. Then $y\ast 0^{j+1} = \bot$ and so $K_0(x\circ 1^i, y\circ 0^j) = \bot$ and the checks in line 5 and line 7 both fail. Since the value of $\overline{K}_0(i, j)$ is not changed during the first iteration of the loop, $\overline{K}_0(i, j) = \bot = K_0(x\circ 1^i, y\circ 0^j)$ when $j= |\cont(y)|$.
    Assume that at the start of the loop in line 4, $\overline{K}_0(i,j+1) = K_0(x\circ 1^i, y\circ 0^{j+1})$ for some $i$ and $j$. Then we can translate the checks and assignments in lines 5-9 into
    \[
    \overline{K}_{0}(i, j) =
        \begin{cases}
                1 & \text{if } y\ast 0^{j+1}\neq \bot \text{ and } y\ast 0^{j+1}\not \in \cont(x\circ 1^i)\\
        1+K_{0}(x\circ 1^i,y\circ 0^{j+1}) & \text{if } y\ast 0^{j+1}\in\cont(x\circ 1^i)\text{ and }K_{0}(x\circ 1^i, y\circ0^{j+1})\neq \bot\\
        \bot & \text{otherwise}
        \end{cases}
    \]
    the right hand of which is exactly the definition of $K_0(x\circ 1^i, y\circ 0^j)$. So $\overline{K}_0(i,j)$ is correct at the end of the loop starting in line 4.
   This completes the proof that the values of $\overline{K}_0$ are correct by the end of the algorithm.

    To show that the time complexity is $\mathcal{O}(|A|^2)$, note that the values of $\cont(x), \cont(y)$, $q_y\clubsuit_x 0^j$ and $q_x\clubsuit_x 1^i$ for $1\leq i\leq |\cont(x)|$ and $1\leq j\leq |\cont(y)|$ can be precomputed in $\mathcal{O}(A)$ time. The assignments in lines 1 and 2, the checks in lines 5 and 7, and the assignment in line 12 are all constant time.
    The lines within the loops are executed at most $(|\cont(x)|+1)(|\cont(y)|+1)\in \mathcal{O}(|A|^2)$ times, as required.
\end{proof}

The multiplication algorithm $\Multiply$ for transducers is given in \cref{algorithm:multiply}.
Informally, we start with an initially empty transducer, then copy all the states and transitions from $\mathcal{T}_x$ and $\mathcal{T}_y$ into $\mathcal{T}_{x\cdot y}$ so that $\mathcal{T}_{x\cdot y}$ coincides with each of $\mathcal{T}_x$ and $\mathcal{T}_y$ on the states $Q_x$ and $Q_y$ respectively. We then add all the states $(i, j)\in Q^\prime$ to $\mathcal{T}_{x\cdot y}$, and assign their transitions according to \cref{lem:multiplication:iterative}.
Although, in some sense, this informal description of the algorithm is not far away from the pseudo-code given in \cref{algorithm:multiply}, we require the precise formulation given in \cref{algorithm:multiply} to describe its time complexity in \cref{thm:mult_is_correct}.
    It is clear from the definition of $\mathcal{T}_{x\cdot y}$ in \eqref{eq:product-transducer}, and the preceding discussion, that the value returned by $\Multiply(\mathcal{T}_x, \mathcal{T}_y)$ is $\mathcal{T}_{x\cdot y}$.

\begin{algorithm}[]
\SetAlgoLined\DontPrintSemicolon\KwArguments{a pair of transducers $\mathcal{T}_x = (Q_x, \bool, A, q_x, T_x, \spadesuit_x, \clubsuit_x)$ and $\mathcal{T}_y = (Q_y, \bool, A, q_y, T_y, \spadesuit_y, \clubsuit_y)$ representing $x, y\in\FB(A)$ respectively,}
\KwReturns{the product transducer $\mathcal{T}_{x\cdot y} = (Q, \bool, A, q_0, T, \spadesuit, \clubsuit)$.}
Let $\mathcal{T} = (Q=\varnothing, \bool, A, q_0=\varepsilon, T=\varnothing, \spadesuit, \clubsuit)$ be the empty transducer.\;
Copy all the states and transitions from $\mathcal{T}_x$ and $\mathcal{T}_y$ into $\mathcal{T}$\;
$\overline{K_{0}}\gets \ComputeK_0(\mathcal{T}_x, \mathcal{T}_y)$ and $\overline{K_{1}}\gets \ComputeK_1(\mathcal{T}_x, \mathcal{T}_y)$\;
\For{$i\in \{|\cont(x)|, \ldots, 0\}$}{\For{$j\in \{|\cont(y)|, \ldots, 0\}$}{Add state $(i, j)$ to $\mathcal{T}$\;
\uIf{$\overline{K_{0}}(i, j)\neq \bot$}{$(i, j)\spadesuit 0 := (i, j+\overline{K_{0}}(i, j))$ and
$(i, j)\clubsuit 0 := q_y\clubsuit_y 0^{j+\overline{K_{0}}(i, j)}$\;
}
\Else{$(i, j)\spadesuit 0 := q_x\spadesuit_x 1^{i}0$ and
$(i, j)\clubsuit 0 := q_x\clubsuit_x 1^{i}0$\;
}
\uIf{$\overline{K_{1}}(i, j)\neq \bot$}{$(i, j)\spadesuit 1 := (i+\overline{K_{1}}(i, j), j)$ and
$(i, j)\clubsuit 1 := q_x\clubsuit_x 1^{i+\overline{K_{1}}(i, j)}$\;
}
\Else{$(i, j)\spadesuit 1 := q_y\spadesuit_y 0^{j}1$ and
$(i, j)\clubsuit 1 := q_y\clubsuit_y 0^{j}1$\;
}
}
}
$q_0\gets (0, 0)$, $T\gets T_x\cup T_y$\;
\Comment{Perform an identification of the relevant states.}
Remove the states $(i, |\cont(y)|)$ for all $i\in \{|\cont(x)|, \ldots, 0\}$\;
Remove the states $(|\cont(x)|, j)$ for all $j\in \{|\cont(y)|, \ldots, 0\}$\;
\For{$q\in Q$}{\For{$\alpha\in \bool$}{    \uIf{$q\spadesuit \alpha = (i, |\cont(y)|)$ for some $i$}{        $q\spadesuit \alpha = q_x\spadesuit_x 1^i$\;
    }
    \ElseIf{$q\spadesuit \alpha = (|\cont(x)|, j)$ for some $j$}{        $q\spadesuit \alpha = q_y\spadesuit_y 0^j$\;
    }
}
}
\KwRet{$\mathcal{T}$}\;
\caption{$\Multiply$}
\label{algorithm:multiply}
\end{algorithm}

\begin{theorem}\label{thm:mult_is_correct}
    Let $x, y\in \FB(A)$, $\alpha\in \bool$, and let $\mathcal{T}_x$ and $\mathcal{T}_y$ be transducers representing $x$ and $y$ respectively. Then
  $\Multiply$ has time complexity $\mathcal{O}(|\mathcal{T}_x| + |\mathcal{T}_y| + |A|^2)$ where $|\mathcal{T}_x|$ and $|\mathcal{T}_y|$ are the numbers of states in
  $\mathcal{T}_x$ and $\mathcal{T}_y$, respectively.
\end{theorem}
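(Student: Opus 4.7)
The plan is to walk through \cref{algorithm:multiply} phase by phase and bound the contribution of each, assembling the claimed bound by summation. Every phase except the identification loop will be shown to cost either $\mathcal{O}(|\mathcal{T}_x|+|\mathcal{T}_y|)$ or $\mathcal{O}(|A|^2)$, while the identification loop itself is $\mathcal{O}(|\mathcal{T}_x|+|\mathcal{T}_y|+|A|^2)$.

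For the easy phases: lines 1--2 (initialising an empty transducer and copying the states and transitions of $\mathcal{T}_x$ and $\mathcal{T}_y$) run in $\mathcal{O}(|\mathcal{T}_x|+|\mathcal{T}_y|)$, since each state has at most $|\bool|=2$ outgoing transitions. Line 3 costs $\mathcal{O}(|A|^2)$ by \cref{lem:kxy_correct}.

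The subtle part, and what I expect to be the main obstacle, is the double loop on lines 4--16. Read naively, expressions such as $q_y\clubsuit_y 0^{j+\overline{K_0}(i,j)}$, $q_x\clubsuit_x 1^{i+\overline{K_1}(i,j)}$, and $q_x\spadesuit_x 1^{i}0$ appear to take time linear in the exponent, which would inflate the total to $\mathcal{O}(|A|^3)$. To avoid this I would precompute once, before entering the loop, the two orbit tables
\[
q_x\spadesuit_x 1^i,\quad q_x\clubsuit_x 1^i \qquad (0\leq i\leq |\cont(x)|),
\]
and symmetrically $q_y\spadesuit_y 0^j$ and $q_y\clubsuit_y 0^j$ for $0\leq j\leq |\cont(y)|$. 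Each sequence is built by iterating a single transition at a time, giving $\mathcal{O}(|A|)$ total precomputation cost, and is stored in an array supporting $\mathcal{O}(1)$ lookup. Compound forms are then resolved as e.g.\ $q_x\clubsuit_x 1^{i}0 = (q_x\spadesuit_x 1^i)\clubsuit_x 0$ in a single additional step. With these tables in hand, each iteration of the double loop performs constant work (one array lookup of $\overline{K_{\alpha}}(i,j)$, at most two table lookups, the addition of one state, and the assignment of at most two transitions). Since the loop runs $(|\cont(x)|+1)(|\cont(y)|+1)\leq (|A|+1)^2$ times, the whole phase costs $\mathcal{O}(|A|^2)$.

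Finally, the identification phase (lines 17--24) sets $q_0$ and $T=T_x\cup T_y$ in $\mathcal{O}(|\mathcal{T}_x|+|\mathcal{T}_y|)$ time, removes the $\mathcal{O}(|A|)$ duplicated states, and then iterates over the remaining states --- of which there are $|Q_x|+|Q_y|+\mathcal{O}(|A|^2)$ --- examining both outgoing transitions of each and, in constant time per check, redirecting any transition that pointed at a removed state to its representative in $Q_x$ or $Q_y$. This phase is therefore $\mathcal{O}(|\mathcal{T}_x|+|\mathcal{T}_y|+|A|^2)$. Summing the contributions of all phases yields the claimed bound $\mathcal{O}(|\mathcal{T}_x|+|\mathcal{T}_y|+|A|^2)$.
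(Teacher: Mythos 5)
Your proposal is correct and follows essentially the same route as the paper's proof: the same phase-by-phase accounting, with the key observation in both cases being that the orbit values $q_x\spadesuit_x 1^i$, $q_x\clubsuit_x 1^i$, $q_y\spadesuit_y 0^j$, $q_y\clubsuit_y 0^j$ can be precomputed in $\mathcal{O}(|A|)$ time so that each iteration of the double loop is constant time. The only difference is presentational (the paper defers the precomputation remark to the proof of \cref{lem:kxy_correct} and cites it, whereas you spell it out in place).
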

\begin{proof}

    Note that line 1 is constant time.
    Since there are at most twice as many transitions as states in any transducer, line 2 requires $\mathcal{O}(|\mathcal{T}_x| + |\mathcal{T}_y|)$ time.
    Line 3 takes $\mathcal{O}(|A|^2)$ time by \cref{lem:kxy_correct}.

    As we already noted in the proof of \cref{lem:kxy_correct} we can precompute the values $q_x\spadesuit_x 1^i0$, $q_y\spadesuit_y 0^j1$, $q_x\clubsuit_x 1^i$, and $q_y\clubsuit_y 0^j$ for $1\leq i\leq |\cont(x)|$ and $1\leq j\leq |\cont(y)|$ in $\mathcal{O}(|A|)$ time.
    Since $\overline{K_{\alpha}}$ is also precomputed, each of the checks and assignments in lines 6-16 can be done in constant time. This means that the for loop in lines 4-18 takes a total of $\mathcal{O}(|A|^2)$ time.

    Line 19 takes at most $\mathcal{O}(|\mathcal{T}_x|+|\mathcal{T}_y|)$ time since each transducer can have no more terminal states than states in total.
    Lines 21 and 22 are at most $\mathcal{O}(A)$ time, since we can remove a state in constant time.

    Finally, since we can check if a state is equal to $(|\cont(x)|, j)$ or $(i, |\cont(y)|)$ in constant time, lines 25-29 are constant time. So lines 23-31 take time proportional to the total number of states in $\mathcal{T}$ which by construction is $\mathcal{O}( |\mathcal{T}_x|+|\mathcal{T}_y|+|A|^2)$.
    Therefore the overall time complexity is $\mathcal{O}(|\mathcal{T}_x|+|\mathcal{T}_y|+|A|^2)$, as required.
\end{proof}

\begin{example}\label{ex:mult}
    Let $x = eaec$ and $y = bcacbcd \in \FB(a, b, c, d, e)$. We illustrate the operation of $\Multiply$ in four steps. First, we construct $Q^\prime$, initially with no transitions. In the second step, the values in the tables for $\overline{K}_0$ and $\overline{K}_1$  in \cref{table-k-0-1} are used to find the transitions between the states in $Q^\prime$. For example, since $\overline{K}_0(0, 0)=1$, we can add the $0$-transition from $(0,0)$ to $(0, 1)$ labelled by $q_y\clubsuit_y 0^{1} = d$. Similarly, $\overline{K}_1(0, 0) = 2$, so there is a $1$-transition from $(0, 0)$ to $(0, 2)$ labelled by $q_y\clubsuit 0^2 = e$. In the same way, the  remaining transitions between the states in $Q'$ can be defined whenever $\overline{K}_0(i,j)\neq \bot$ or $\overline{K}_1(i,j)\not=\bot$; see \cref{subfig-1}.
    In the third step, we consider those states $(i, j) \in Q'$ such that $\overline{K}_0(i,j)= \bot$ and $\overline{K}_1(i,j) = \bot$. For example,
     $\overline{K}_1(2, 0) = \bot$ and so $(2, 0)\spadesuit 1 = y_0\spadesuit_y 0^0 1 = y_2$ and $(i, j)\clubsuit 1 = y_0\clubsuit_y 0^0 1= a$. Similarly,
    $\overline{K}_1(2, 1) = \bot$, so $(2, 1)\spadesuit 1 = y_0\spadesuit_y 0 1 = y_4$ and $(2, 1)\clubsuit 1 = y_0\clubsuit_y 0 1= a$; see \cref{subfig-2}.
    Finally, the remaining transitions are essentially those from $\mathcal{T}_x$ and $\mathcal{T}_y$ but with some states identified; see
    \cref{fig:xy-product-transducer}.
\end{example}
    \begin{figure}
    \begin{subfigure}{0.5\textwidth}
    \begin{tikzpicture}[scale=0.9]
        \node (Q00) at (0,0) {$(0, 0)$};

        \node (Q01) at (-1,-1) {$(0, 1)$};
        \node (Q10) at (1,-1) {$(1, 0)$};

        \node (Q02) at (-2,-2) {$(0, 2)$};
        \node (Q11) at (0,-2) {$(1, 1)$};
        \node (Q20) at (2,-2) {$(2, 0)$};

        \node (Q03) at (-3,-3) {$(0, 3)$};
        \node (Q12) at (-1,-3) {$(1, 2)$};
        \node (Q21) at (1,-3) {$(2, 1)$};
        \node (Q30) at (3,-3) {$(3, 0)$};

        \node (Q04) at (-4,-4) {$(0, 4)$};
        \node (Q13) at (-2,-4) {$(1, 3)$};
        \node (Q22) at (0,-4) {$(2, 2)$};
        \node (Q31) at (2,-4) {$(3, 1)$};

        \node (Q14) at (-3,-5) {$(1, 4)$};
        \node (Q23) at (-1,-5) {$(2, 3)$};
        \node (Q32) at (1,-5) {$(3, 2)$};

        \node (Q24) at (-2,-6) {$(2, 4)$};
        \node (Q33) at (0,-6) {$(3, 3)$};

        \node (Q34) at (-1,-7) {$(3, 4)$};

 \draw [-latex] (Q00) ++(0,0.75) -- (Q00);
 \draw [-latex] (Q34) -- ++(0,-0.75);

        \draw [-latex, draw=color3] (Q00) -- (Q01);
        \draw [-latex reversed, dashed, draw=color3] (Q00) to [out=0,in=90] (Q20);

        \draw [-latex, draw=color1] (Q01) to [out=180, in=90] (Q04);
        \draw [-latex reversed, dashed, draw=color4] (Q01) to [out=0,in=90] (Q21);
        \draw [-latex, draw=color3] (Q10) -- (Q11);
        \draw [-latex reversed, dashed, draw=color4] (Q10) -- (Q20);

        \draw [-latex, draw=color1] (Q02) to [out=180, in=90] (Q04);
        \draw [-latex reversed, dashed, draw=color0] (Q02) --(Q12);
        \draw [-latex, draw=color0] (Q11) -- (Q12);
        \draw [-latex reversed, dashed, draw=color4] (Q11) -- (Q21);
        \draw [-latex, draw=color3] (Q20) -- (Q21);

        \draw [-latex, draw=color1] (Q03) -- (Q04);
        \draw [-latex reversed, dashed, draw=color0] (Q03) --(Q13);
        \draw [-latex, draw=color1] (Q12) to [out=180, in=90] (Q14);
        \draw [-latex reversed, dashed, draw=color4] (Q12) --(Q22);
        \draw [-latex, draw=color0] (Q21) -- (Q22);
                \draw [-latex, draw=color3] (Q30) -- (Q31);

                \draw [-latex reversed, dashed, draw=color0] (Q04) --(Q14);
        \draw [-latex, draw=color1] (Q13) -- (Q14);
        \draw [-latex reversed, dashed, draw=color4] (Q13) --(Q23);
        \draw [-latex, draw=color1] (Q22) to [out=180, in=90] (Q24);
                \draw [-latex, draw=color0] (Q31) -- (Q32);

                \draw [-latex reversed, dashed, draw=color4] (Q14) -- (Q24);
        \draw [-latex, draw=color1] (Q23) -- (Q24);
        \draw [-latex reversed, dashed, draw=color2] (Q23) --(Q33);
        \draw [-latex, draw=color2] (Q32) -- (Q33);

                \draw [-latex reversed, dashed, draw=color2] (Q24) -- (Q34);
        \draw [-latex, draw=color1] (Q33) -- (Q34);
    \end{tikzpicture}
    \caption{Transitions for $(i, j) \in Q'$, $\overline{K}_0(i,j)\neq\bot$ or $\overline{K}_1(i,j) \neq \bot$.}
    \label{subfig-1}
    \end{subfigure}
    \begin{subfigure}{0.5\textwidth}
    \begin{tikzpicture}[scale=0.9]

        \node (Q00) at (0,0) {$(0, 0)$};

        \node (Q01) at (-1,-1) {$(0, 1)$};
        \node (Q10) at (1,-1) {$(1, 0)$};

        \node (Q02) at (-2,-2) {$(0, 2)$};
        \node (Q11) at (0,-2) {$(1, 1)$};
        \node (Q20) at (2,-2) {$(2, 0)$};

        \node (Q03) at (-3,-3) {$(0, 3)$};
        \node (Q12) at (-1,-3) {$(1, 2)$};
        \node (Q21) at (1,-3) {$(2, 1)$};
        \node (Q30) at (3,-3) {$(3, 0)$};

        \node (Q04) at (-4,-4) {$(0, 4)$};
        \node (Q13) at (-2,-4) {$(1, 3)$};
        \node (Q22) at (0,-4) {$(2, 2)$};
        \node (Q31) at (2,-4) {$(3, 1)$};

        \node (Q14) at (-3,-5) {$(1, 4)$};
        \node (Q23) at (-1,-5) {$(2, 3)$};
        \node (Q32) at (1,-5) {$(3, 2)$};

        \node (Q24) at (-2,-6) {$(2, 4)$};
        \node (Q33) at (0,-6) {$(3, 3)$};

        \node (Q34) at (-1,-7) {$(3, 4)$};

        \node (x_1) at (-5, -5) {$x_1$};
        \node (x_3) at (-4, -6) {$x_3$};
        \node (x_5) at (-3, -7) {$x_5$};

        \node (y_2) at (4, -4) {$y_2$};
        \node (y_4) at (3, -5) {$y_4$};
        \node (y_7) at (2, -6) {$y_7$};
        \node (y_9) at (1, -7) {$y_9$};

 \draw [-latex] (Q00) ++(0,0.75) -- (Q00);
 \draw [-latex] (Q34) -- ++(0,-0.75);

        \draw [-latex, draw=color3] (Q00) -- (Q01);
        \draw [-latex reversed, dashed, draw=color3] (Q00) to [out=0,in=90] (Q20);

        \draw [-latex, draw=color1] (Q01) to [out=180, in=90] (Q04);
        \draw [-latex reversed, dashed, draw=color4] (Q01) to [out=0,in=90] (Q21);
        \draw [-latex, draw=color3] (Q10) -- (Q11);
        \draw [-latex reversed, dashed, draw=color4] (Q10) -- (Q20);

        \draw [-latex, draw=color1] (Q02) to [out=180, in=90] (Q04);
        \draw [-latex reversed, dashed, draw=color0] (Q02) --(Q12);
        \draw [-latex, draw=color0] (Q11) -- (Q12);
        \draw [-latex reversed, dashed, draw=color4] (Q11) -- (Q21);
        \draw [-latex, draw=color3] (Q20) -- (Q21);

        \draw [-latex, draw=color1] (Q03) -- (Q04);
        \draw [-latex reversed, dashed, draw=color0] (Q03) --(Q13);
        \draw [-latex, draw=color1] (Q12) to [out=180, in=90] (Q14);
        \draw [-latex reversed, dashed, draw=color4] (Q12) --(Q22);
        \draw [-latex, draw=color0] (Q21) -- (Q22);
                \draw [-latex, draw=color3] (Q30) -- (Q31);

                \draw [-latex reversed, dashed, draw=color0] (Q04) --(Q14);
        \draw [-latex, draw=color1] (Q13) -- (Q14);
        \draw [-latex reversed, dashed, draw=color4] (Q13) --(Q23);
        \draw [-latex, draw=color1] (Q22) to [out=180, in=90] (Q24);
                \draw [-latex, draw=color0] (Q31) -- (Q32);

                \draw [-latex reversed, dashed, draw=color4] (Q14) -- (Q24);
        \draw [-latex, draw=color1] (Q23) -- (Q24);
        \draw [-latex reversed, dashed, draw=color2] (Q23) --(Q33);
        \draw [-latex, draw=color2] (Q32) -- (Q33);

                \draw [-latex reversed, dashed, draw=color2] (Q24) -- (Q34);
        \draw [-latex, draw=color1] (Q33) -- (Q34);

        \draw [-latex reversed, dashed, draw=color0] (Q20) to [out=0, in=90] (y_2);
        \draw [-latex reversed, dashed, draw=color0] (Q30) -- (y_2);
        \draw [-latex reversed, dashed, draw=color0] (Q21) to [out=0, in=90] (y_4);
        \draw [-latex reversed, dashed, draw=color0] (Q31) -- (y_4);
        \draw [-latex reversed, dashed, draw=color1] (Q22) to [out=0, in=90] (y_7);
        \draw [-latex reversed, dashed, draw=color1] (Q32) -- (y_7);
        \draw [-latex reversed, dashed, draw=color1] (Q33) -- (y_9);

        \draw [-latex reversed, dashed, draw=color2] (Q04) -- (x_1);
        \draw [-latex reversed, dashed, draw=color2] (Q14) -- (x_3);
        \draw [-latex reversed, dashed, draw=color2] (Q24) -- (x_5);
    \end{tikzpicture}
    \caption{Transitions for $(i, j) \in Q'$, $\overline{K}_0(i,j)= \bot =\overline{K}_1(i,j)$.}
    \label{subfig-2}
    \end{subfigure}
    \caption{The second and third steps in the construction of the product transducer $\mathcal{T}_{xy}$ for $x=eaec$ and $y = bcacbcd$ constructed from the minimal transducers $\mathcal{T}_x$ and $\mathcal{T}_y$ in \cref{fig:xy-minimal-transducers}. The pinkish-red transitions are labelled by $a$, the light-green ones by $b$, the blue ones by $c$, the yellow ones by $d$ and purple ones by $e$.}
    \label{figure-product-2}
    \end{figure}
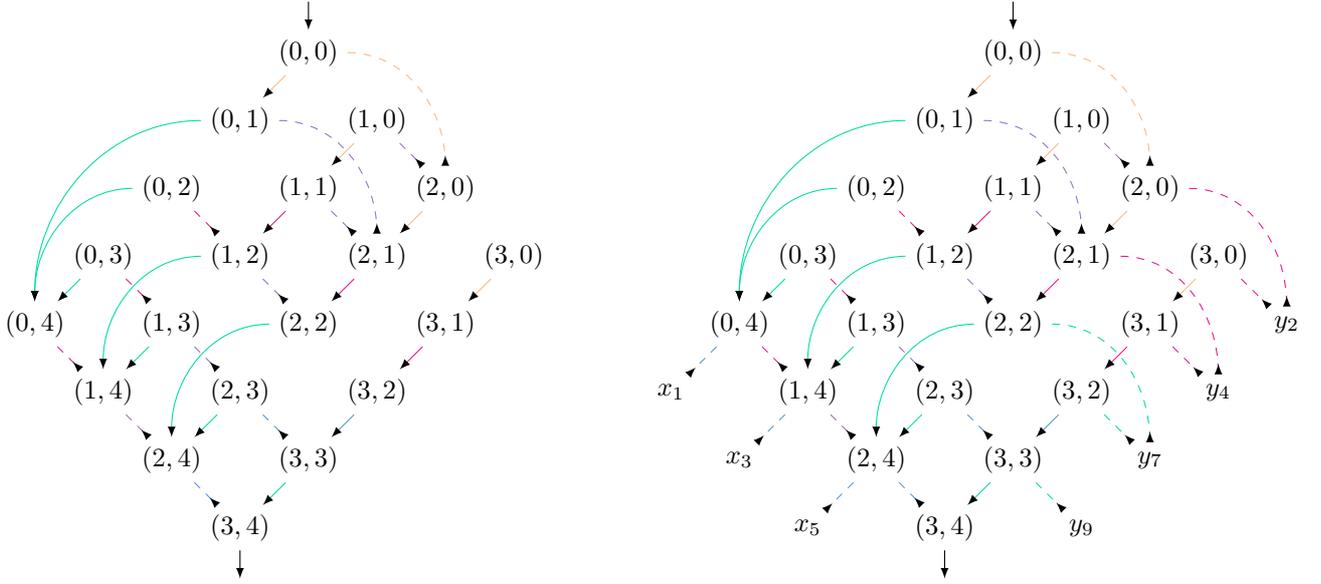

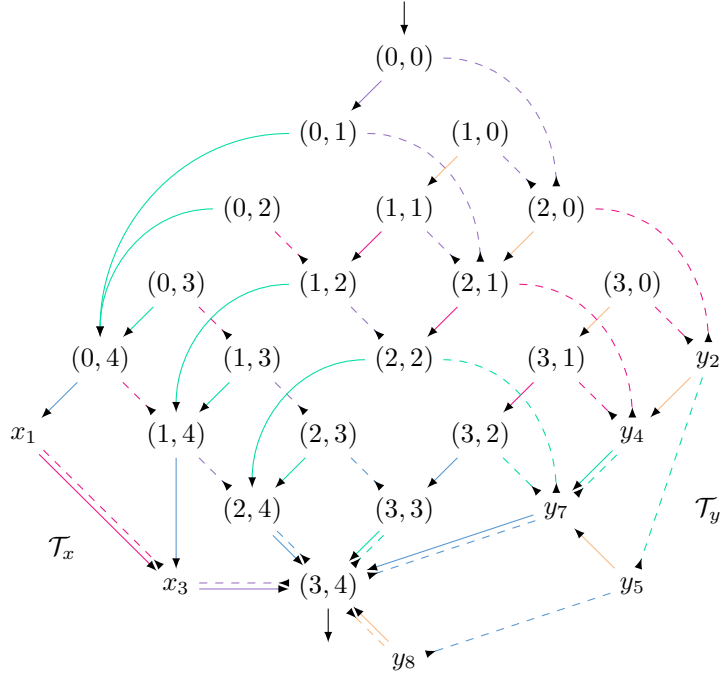
\begin{figure}
 \centering
  \pgfdeclaredecoration{sl}{initial}{
 \state{initial}[width=\pgfdecoratedpathlength-1sp]{
  \pgfmoveto{\pgfpointorigin}
 }
 \state{final}{
  \pgflineto{\pgfpointorigin}
 }
}

\tikzset{thin_parallel_arrow_1/.style={-latex, decoration={sl,raise=-0.4mm},decorate}}
\tikzset{thin_parallel_arrow_2/.style={-latex reversed, dashed, decoration={sl,raise=0.4mm},decorate}}

\usetikzlibrary{calc}

\tikzstyle{vertex}=[circle,
text opacity=1,
inner sep=2pt,
align=center,
minimum size=0.2em]

\begin{tikzpicture}
 \node (x_0) at (-4,-4) {$(0, 4)$};  \node (x_1) at (-5, -5) {$x_1$};
 \node (x_2) at (-3,-5) {$(1, 4)$};  \node (x_3) at (-3, -7) {$x_3$};
 \node (x_4) at (-2,-6) {$(2, 4)$};  \node (x_5) at (-1, -7) {$(3, 4)$};  \node at (-4.5, -6.5) [draw=none]{$\mathcal{T}_x$};

 \node (y_0) at (3,-3) {$(3, 0)$};
 \node (y_1) at (2,-4) {$(3, 1)$};
 \node (y_2) at (4,-4) {$y_2$};
 \node (y_3) at (1,-5) {$(3, 2)$};
 \node (y_4) at (3,-5) {$y_4$};
 \node (y_5) at (3,-7) {$y_5$};
 \node (y_6) at (0,-6) {$(3, 3)$};
 \node (y_7) at (2,-6) {$y_7$};
 \node (y_8) at (0,-8) {$y_8$};
 \node (y_9) at (-1,-7) {};
 \node at (4, -6) [draw=none]{$\mathcal{T}_y$};

 \node (Q00) at (0,0) {$(0, 0)$};

 \node (Q01) at (-1,-1) {$(0, 1)$};
 \node (Q10) at (1,-1) {$(1, 0)$};

 \node (Q02) at (-2,-2) {$(0, 2)$};
 \node (Q11) at (0,-2) {$(1, 1)$};
 \node (Q20) at (2,-2) {$(2, 0)$};

 \node (Q03) at (-3,-3) {$(0, 3)$};
 \node (Q12) at (-1,-3) {$(1, 2)$};
 \node (Q21) at (1,-3) {$(2, 1)$};

 \node (Q13) at (-2,-4) {$(1, 3)$};
 \node (Q22) at (0,-4) {$(2, 2)$};

 \node (Q23) at (-1,-5) {$(2, 3)$};

 \draw [-latex] (x_5) -- ++(0,-0.75);
 \draw [-latex] (Q00) ++(0,0.75) -- (Q00);

 \draw [-latex, draw=color2] (x_0) -- (x_1);
 \draw [-latex reversed, dashed, draw=color0] (x_0) -- (x_2);

 \draw [thin_parallel_arrow_1, draw=color0] (x_1) --  (x_3);
 \draw [thin_parallel_arrow_2, draw=color0] (x_1) --  (x_3);

 \draw [-latex, draw=color2] (x_2) -- (x_3);
 \draw [-latex reversed, dashed, draw=color4] (x_2) -- (x_4);

 \draw [thin_parallel_arrow_1, draw=color4] (x_3) -- (x_5);
 \draw [thin_parallel_arrow_2, draw=color4] (x_3) -- (x_5);

 \draw [thin_parallel_arrow_1, draw=color2] (x_4) -- (x_5);
 \draw [thin_parallel_arrow_2, draw=color2] (x_4) -- (x_5);

 \draw [-latex, draw=color3] (y_0) -- (y_1);
 \draw [-latex reversed, dashed, draw=color0] (y_0) -- (y_2);

 \draw [-latex, draw=color0] (y_1) -- (y_3);
 \draw [-latex reversed, dashed, draw=color0] (y_1) -- (y_4);

 \draw [-latex, draw=color3] (y_2) -- (y_4);
 \draw [-latex reversed, dashed, draw=color1] (y_2) -- (y_5);

 \draw [-latex, draw=color2] (y_3) -- (y_6);
 \draw [-latex reversed, dashed, draw=color1] (y_3) -- (y_7);

 \draw [thin_parallel_arrow_2, draw=color1] (y_4) -- (y_7);
 \draw [thin_parallel_arrow_1, draw=color1] (y_4) -- (y_7);

 \draw [-latex, draw=color3] (y_5) -- (y_7);
 \draw [-latex reversed, dashed, draw=color2] (y_5) -- (y_8);

 \draw [thin_parallel_arrow_2, draw=color1] (y_6) -- (x_5);
 \draw [thin_parallel_arrow_1, draw=color1] (y_6) -- (x_5);

 \draw [thin_parallel_arrow_2, draw=color2] (y_7) -- (x_5);
 \draw [thin_parallel_arrow_1, draw=color2] (y_7) -- (x_5);

 \draw [thin_parallel_arrow_2, draw=color3] (y_8) -- (x_5);
 \draw [thin_parallel_arrow_1, draw=color3] (y_8) -- (x_5);

 \draw [-latex, draw=color4] (Q00) -- (Q01);
        \draw [-latex reversed, dashed, draw=color4] (Q00) to [out=0,in=90] (Q20);

        \draw [-latex, draw=color1] (Q01) to [out=180, in=90] (x_0);
        \draw [-latex reversed, dashed, draw=color4] (Q01) to [out=0,in=90] (Q21);
        \draw [-latex, draw=color3] (Q10) -- (Q11);
        \draw [-latex reversed, dashed, draw=color4] (Q10) -- (Q20);

        \draw [-latex, draw=color1] (Q02) to [out=180, in=90] (x_0);
        \draw [-latex reversed, dashed, draw=color0] (Q02) --(Q12);
        \draw [-latex, draw=color0] (Q11) -- (Q12);
        \draw [-latex reversed, dashed, draw=color4] (Q11) -- (Q21);
        \draw [-latex, draw=color3] (Q20) -- (Q21);

        \draw [-latex, draw=color1] (Q03) -- (x_0);
        \draw [-latex reversed, dashed, draw=color0] (Q03) --(Q13);
        \draw [-latex, draw=color1] (Q12) to [out=180, in=90] (x_2);
        \draw [-latex reversed, dashed, draw=color4] (Q12) --(Q22);
        \draw [-latex, draw=color0] (Q21) -- (Q22);

                        \draw [-latex, draw=color1] (Q13) -- (x_2);
        \draw [-latex reversed, dashed, draw=color4] (Q13) --(Q23);
        \draw [-latex, draw=color1] (Q22) to [out=180, in=90] (x_4);

                        \draw [-latex, draw=color1] (Q23) -- (x_4);
        \draw [-latex reversed, dashed, draw=color2] (Q23) --(y_6);

        \draw [-latex reversed, dashed, draw=color0] (Q20) to [out=0, in=90] (y_2);
        \draw [-latex reversed, dashed, draw=color0] (Q21) to [out=0, in=90] (y_4);
        \draw [-latex reversed, dashed, draw=color1] (Q22) to [out=0, in=90] (y_7);

\end{tikzpicture}

 \caption{The complete product transducer $\mathcal{T}_{xy}$ for $x=eaec$ and $y = bcacbcd$ constructed from the minimal transducers $\mathcal{T}_x$ and $\mathcal{T}_y$ in \cref{fig:xy-minimal-transducers}. The colors of the edges correspond to the generators labeling them in the same way as in \cref{figure-product-2}.}
 \label{fig:xy-product-transducer}
\end{figure}

\section{Minimal Word Representative}
\label{sec:minimal}

In this section, we present an algorithm for determining the short-lex least word $\min(w)$ belonging to the $\sim$-class of $w\in A ^ *$.  We will abuse our notation, by writing $\min(x)$ to mean the short-lex least word such that $\min(x)/{\sim} = x$, where $x\in \FB(A)$. Clearly, $\min(w/{\sim}) = \min(w)$.
 Given a transducer $\mathcal{T}_x$ representing $x\in \FB(A)$, the algorithm for computing $\min(x)$, that we introduce in this section,  has $\mathcal{O}(|\mathcal{T}_x|\cdot|A|)$ time complexity. This leads to a $\mathcal{O}(|w|\cdot|A| + |\min(w)|\cdot |A|^2)$ time algorithm for computing $\min(w)$ given $w\in A^*$.

To establish our algorithm we first show some technical results about the structure of $\min(x)$. The following lemma was first suggested to the authors by T. Conti-Leslie in the summer of 2020; it is possible that this result is known, but that authors could not locate it in the literature.
\begin{lemma}[Conti-Leslie Lemma]\label{lem:contileslie}
    Let $w\in A^*$ and let $s = \min(w\circ 0) (w\ast 0)$ and $t = (w\ast 1) \min(w\circ 1)$.
    If $i\in \{1, \ldots, |s|\}$ is such that $s_{(i, |s|)}$ is the longest suffix of $s$ that is also a prefix of $t$, then
    $\min(w) = s_{(1, i-1)}t$.
\end{lemma}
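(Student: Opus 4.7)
The plan is to show (a) that $w' := s_{(1,i-1)}\,t$ lies in $w/{\sim}$, and (b) that $w'$ is short-lex minimal in that class. For (a) I will verify the Green-Rees conditions of \cref{thm:equal_free_band} directly on $w'$. Write $p = \min(w\circ 0)$, $q = \min(w\circ 1)$, $\alpha = w\ast 0$ and $\beta = w\ast 1$, so $s = p\alpha$ and $t = \beta q$. The assumption $s_{(i,|s|)} = t_{(1,|s|-i+1)}$ makes $s$ a prefix of $w'$ and $t$ a suffix. Since $\alpha\notin\cont(p)$, the letter $\alpha$ occurs in $s$ only at its last position, hence in $w'$ only from position $|p|+1$ onward; this forces $\pref(w') = p$ and $\ltof(w') = \alpha$. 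A symmetric argument using $\beta\notin\cont(q)$ gives $\suff(w') = q$ and $\ftol(w') = \beta$. Combined with $\cont(w') = \cont(s)\cup\cont(t) = \cont(w)$ and the facts $p\sim w\circ 0$ and $q\sim w\circ 1$, \cref{thm:equal_free_band} yields $w'\sim w$.

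For (b), let $u = \min(w)$. The first key step is to prove that $\pref(u) = p$ as words, not merely up to $\sim$. Since $u\sim w$, \cref{thm:equal_free_band} gives $\pref(u)\sim w\circ 0$, so $\pref(u)$ lies in the $\sim$-class whose short-lex minimum is $p$, giving $p\leq_{\textrm{shortlex}} \pref(u)$. If this inequality were strict, then writing $u = \pref(u)\cdot\alpha\cdot r$ (using $\ltof(u) = \alpha$) and setting $u' = p\cdot\alpha\cdot r$ would yield $u'\sim u$ by congruence while $u' <_{\textrm{shortlex}} u$, contradicting $u = \min(w)$. Hence $\pref(u) = p$, so $u$ begins with $s = p\alpha$. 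The dual argument on suffixes gives $\suff(u) = q$, so $u$ ends with $t = \beta q$.

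Finally, with $s$ a prefix and $t$ a suffix of $u$, the quantity $\ell := |s|+|t|-|u|$ is a valid overlap, meaning $s_{(|s|-\ell+1,|s|)} = t_{(1,\ell)}$. Writing $k = |s|-i+1$ for the overlap realised by the choice of $i$, the bound $|u|\leq |w'| = |s|+|t|-k$ (which follows from part (a) together with $u = \min(w)$) gives $\ell\geq k$, while maximality of the overlap defining $i$ gives $\ell\leq k$. Hence $\ell = k$ and $|u| = |w'|$. Since there is a unique word of length $|s|+|t|-k$ having $s$ as a prefix and $t$ as a suffix, namely $s_{(1,i-1)}\,t = w'$, we conclude $u = w'$.

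The main obstacle is the exact identification $\pref(u) = p$ rather than only the $\sim$-equivalence $\pref(u)\sim p$: it is the word-level equality that lets one assert $u$ literally begins with the specific word $s$, and the entire overlap argument depends on this. A secondary subtlety is the length bound $|u|\leq |w'|$, which only becomes available after part (a) has been established.
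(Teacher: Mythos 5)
Your proof is correct and takes essentially the same route as the paper's: your identification of $\pref(\min(w))$ with $\min(w\circ 0)$ as literal words via the congruence-plus-minimality argument is exactly the paper's \cref{lem:contilesliehelper} (which you re-derive inline), the Green--Rees verification that the overlapped word $w'$ lies in $w/{\sim}$ matches the paper's observation that $st\sim w$, and minimality forcing the maximal overlap is the paper's concluding step. The only difference is organisational: the paper splits into overlap/no-overlap cases, whereas your inequality $\ell = k$ (with $\ell = |s|+|t|-|\min(w)|$) handles both uniformly, which is if anything slightly cleaner.
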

\cref{lem:contileslie} says, in other words, that to find a minimal word representative, it suffices to minimize the prefix and suffix, and then overlap them as much as possible. To prove \cref{lem:contileslie} we require the following.

\begin{lemma}\label{lem:contilesliehelper}
For all $w\in A^\ast$ and for all $\alpha\in \bool^\ast$, $\min(w\circ\alpha) = \min(w)\circ\alpha$.
\end{lemma}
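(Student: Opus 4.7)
The plan is to proceed by induction on $|\alpha|$, with the base case $|\alpha|=1$ doing all the real work and the inductive step being a routine composition.

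For the base case I would fix $w \in A^*$, set $u = \min(w)$, and show first that $u \circ 0$ is already short-lex minimal in its $\sim$-class (the $\alpha = 1$ case is entirely symmetric). Write $u = (u\circ 0)\cdot (u\ast 0)\cdot r$ for the unique $r \in A^*$ such that this equality holds. Suppose for contradiction that some $v \sim u \circ 0$ is short-lex strictly less than $u \circ 0$, and consider the word $u' = v\cdot (u\ast 0)\cdot r$. Since $\sim$ is a congruence, $u' \sim u$. Comparing $u'$ with $u$ in the short-lex order: if $|v| < |u\circ 0|$ then $|u'| < |u|$; if $|v| = |u\circ 0|$ then the lex-difference between $v$ and $u\circ 0$ occurs within the first $|u \circ 0|$ letters of $u'$ and $u$, so $u' <_{\text{lex}} u$. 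In either case $u' <_{\text{shortlex}} u$, contradicting $u = \min(w)$. Hence $u \circ 0$ equals its own $\min$. Since $w \sim \min(w) = u$ implies $w\circ 0 \sim u\circ 0$ (by iterating \cref{thm:equal_free_band}, or equivalently by the well-definedness of $\circ$ on $\FB(A)$ established in \cref{fig-induced-circ}), we conclude $\min(w\circ 0) = u\circ 0 = \min(w)\circ 0$.

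For the inductive step, assume the statement for all $\alpha'$ of length $n$, and let $\alpha = \alpha'\beta$ with $\beta \in \bool$. By the induction hypothesis, $\min(w)\circ\alpha' = \min(w\circ\alpha')$. Let $u = \min(w\circ\alpha')$, so $u = \min(u)$. Applying the base case to $u$ yields $u\circ\beta = \min(u\circ\beta)$. Since $w \circ \alpha = (w\circ\alpha')\circ\beta \sim u\circ\beta$, the two words have the same $\min$, and so
\[
\min(w)\circ\alpha \;=\; (\min(w)\circ\alpha')\circ\beta \;=\; u\circ\beta \;=\; \min(u\circ\beta) \;=\; \min(w\circ\alpha),
\]
as required.

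The only delicate point is the short-lex comparison in the base case, where I need to separately handle length-strict and equal-length (lex-strict) reductions; both cases go through cleanly because the substitution $v$ for $u\circ 0$ happens as a prefix with the remainder of $u$ appended unchanged, so strictness is preserved under concatenation on the right. Everything else is bookkeeping around the recursive definition of $\circ$ and the fact that $\sim$ is a congruence.
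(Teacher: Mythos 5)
Your proof is correct and rests on the same key idea as the paper's: replace the factor $u\circ 0$ of $\min(w)$ by a short-lex smaller $\sim$-equivalent word, observe that the result is still $\sim$-equivalent (congruence) and short-lex smaller (handling the shorter-length and equal-length cases separately, exactly as you do), and contradict the minimality of $\min(w)$. The only cosmetic difference is that you induct on $|\alpha|$ from a single-letter base case, whereas the paper applies the same substitution argument in one step to the factor $w\circ\alpha$ of $w$ for arbitrary $\alpha$.
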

\begin{proof}
Note that for all $w\in A^*$ and $\alpha \in\bool^*$, there exist $u,v\in A^*$ such that $w = u(w\circ \alpha)v$. Since $w\circ\alpha \sim \min(w\circ\alpha)$ by definition, $w=u(w\circ\alpha)v\sim u\min(w\circ \alpha)v$ and $u\min(w\circ\alpha)v\leq u(w\circ \alpha)v = w$.
 Repeating this process for $\min(w)$ yields $u\min(\min(w)\circ \alpha)v\leq u(\min(w)\circ \alpha)v = \min(w)$. By the minimality of $\min(w)$, $u\min(\min(w)\circ \alpha)v = u(\min(w)\circ \alpha) v$ and so
$\min(\min(w)\circ\alpha) = \min(w)\circ \alpha$.
 By the Green-Rees Theorem (\cref{thm:equal_free_band}),
$\min(w)\circ \alpha \sim w\circ\alpha$ and so $\min(\min(w)\circ \alpha) = \min(w\circ \alpha)$. It follows that $\min(w)\circ\alpha = \min(w\circ \alpha)$ for all $w\in A^*$ and $\alpha\in \bool^*$.
\end{proof}

\begin{proof}[Proof of \cref{lem:contileslie}]
By \cref{lem:contilesliehelper}, it follows that $\min(w)\circ 0 = \min(w\circ0)$ and $\min(w)\circ 1 = \min(w\circ 1)$. By the Green-Rees Theorem, $\min(w)\ast 0 = w\ast 0$ and $\min(w)\ast 1 = w\ast 1$. Hence from the definitions of $\circ$ and $\ast$, $\min(w)$ has a prefix $s = (\min(w)\circ 0)\ (\min(w)\ast0) = (\min(w\circ 0))\ (w\ast 0)$  and a suffix $t = (w\ast 1)\ (\min(w\circ 1))$.

There are now two possible cases. Either $s$ and $t$ overlap within $\min(w)$ or they do not. If there is no overlap, then $\min(w) = su^\prime t$ for some $u^\prime \in A^*$. Note, however, that $st = \min(w\circ0)(w\ast0)(w\ast1)\min(w\circ1)\sim w$ by the Green-Rees Theorem. So if $s,t$ do not overlap within $\min(w)$, then $u^\prime=\varepsilon$ and $\min(w) = st$.

Otherwise, if there is some overlap of $s$ and $t$ within $\min(w)$, then $\min(w) = s^\prime v^\prime t^\prime$ for some $s^\prime, t^\prime\in A^*$ and $v^\prime\in A^+$ such that $s^\prime v^\prime = s$ and $v^\prime t^\prime = t$. If $s^{\prime\prime}, t^{\prime\prime}\in A^\ast$ and $v^{\prime\prime}\in A^+$ with $|v^\prime| < |v^{\prime\prime}|$ are such that $s^{\prime\prime}v^{\prime\prime} = s$ and $v^{\prime\prime}t^{\prime\prime} = t$, then $\min(w)\sim s^{\prime\prime}v^{\prime\prime}t^{\prime\prime}$ and $|s^{\prime\prime} v^{\prime\prime}t^{\prime\prime}| < |s^\prime v^\prime t^\prime|$, contradicting the minimality of $\min(w)$. Therefore $v^\prime$ is the largest suffix of $s$ that is also a prefix of $t$, as required.
\end{proof}

In order to later use \cref{lem:contileslie} in a computational manner, we will fully classify the possible types of overlap between $s$ and $t$ in \cref{lem:one_overlap}. In general, there are  many different suffixes of a given word that are also prefixes of other words, e.g. for $bbababa$, each of the proper suffixes $a, aba, ababa$ is a prefix of $ababaa$. Perhaps surprisingly, as it will turn out in \cref{lem:one_overlap}, at most one suffix $s = \min(w\circ 0)(w\ast 0)$ (from \cref{lem:contilesliehelper}) is a prefix of $t = (w\ast 1)\min(w\circ 1)$. This is key to the linear complexity of the algorithm presented in this section.

We require the following observation in the proof of \cref{lem:one_overlap}.
 If $w\in A ^ \ast$, then
 $w\ast 0\not \in \cont(w\circ 0)$, and so $\cont(w) = \cont(w\circ 0)\cup \{w\ast 0\}$. By repeatedly applying this, we get $\cont(w) = \bigcup_{k=1}^{|\cont(w)|} \{w\ast 0^k\}$.
Hence if $a\in \cont(w)$, then $a = w\ast 0 ^ k$ for some $k$. An analogous statement holds when $0$ is replaced by $1$.

\begin{lemma}\label{lem:one_overlap}
Let $w\in A^\ast$, let $s = \min(w\circ 0)(w\ast 0)$, and let $t=(w\ast 1)\min(w\circ 1)$. Then there is at most one non-empty suffix $v^\prime$ of $s$ that is also a prefix of $t$.
\begin{enumerate}[label=\normalfont{(\roman*)}]
    \item If $w\ast 0 = w\ast 1$, then $v' = w\ast0$ and so  $\min(w) = \min(w\circ 0)(w\ast0)\min(w\circ 1)$.
\end{enumerate}
If $w\ast 0\neq w\ast 1$, then there exist $k,l>0$ such that $w\ast 01^k = w\ast 1$ and $w\ast 10^l = w\ast 0$ and either:
\begin{enumerate}[label=\normalfont(\roman*), start=2]
    \item $k=l$, $\min(w\circ 01^k) = \min(w\circ 10^k)$, and $v^\prime = (w\ast 1)\min(w\circ 01^k)(w\ast 0)$ and so $\min(w)$ is the unique word with prefix $\min(w\circ 0)(w\ast 0)$ and suffix $(w\ast 1)\min(w\circ 1)$ whose overlap is $(w\ast 1)\min(w\circ 01^k)(w\ast 0)$; or
    \item no non-empty suffix of $s$ is a prefix of $t$, and so $\min(w) = \min(w\circ 0)(w\ast0)(w\ast 1)\min(w\circ 1)$.
\end{enumerate}
\end{lemma}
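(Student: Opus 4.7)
The plan is to analyze a hypothetical non-empty overlap $v'$ by its length, driven by two elementary observations. Set $a := w\ast 0$ and $b := w\ast 1$. First: by the definitions of $\ast$ and $\circ$ together with \cref{lem:contilesliehelper}, the word $\min(w\circ 0)$ contains no occurrence of $a$ and $\min(w\circ 1)$ contains no occurrence of $b$. Second: for any word $u$ and any letter $c\in\cont(u)$, there is a unique $m\geq 1$ with $u\ast 0^m = c$, and the first occurrence of $c$ in $\min(u)$ is at position $|\min(u\circ 0^m)| + 1$; this is proved by a short induction on $m$ from the factorization $\min(u) = \min(u\circ 0)\,(u\ast 0)\,\cdots$ combined with the first observation applied recursively. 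The dual statement, with $0$ replaced by $1$, locates the last occurrence.

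I would first dispatch overlaps of length one. Such a $v'$ is simultaneously the last letter of $s$ (namely $a$) and the first letter of $t$ (namely $b$), so $a = b$, placing us in case (i). I would then show no longer overlap exists in this subcase: any $v' = a\cdot u\cdot a$ with $|v'|\geq 2$ would force $a\cdot u$ to be a suffix of $\min(w\circ 0)$, contradicting the first observation. Substituting $v' = a$ into \cref{lem:contileslie} produces $\min(w) = \min(w\circ 0)(w\ast 0)\min(w\circ 1)$, as stated.

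For overlaps of length at least two, write $v' = b\cdot u\cdot a$. The first observation forces $a\neq b$ and constrains $u$ to avoid both $a$ and $b$ (the former because $b\cdot u$ is a suffix of $\min(w\circ 0)$, the latter because $u\cdot a$ is a prefix of $\min(w\circ 1)$). Since $a\in\cont(w\circ 1)$ and $b\in\cont(w\circ 0)$, the second observation yields unique $l,k\geq 1$ with $w\ast 10^l = a$ and $w\ast 01^k = b$, and identifies $u\cdot a$ as the length-$(|\min(w\circ 10^l)|+1)$ prefix of $\min(w\circ 1)$, which the recursive factorization shows to equal $\min(w\circ 10^l)\cdot a$. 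Hence $u = \min(w\circ 10^l)$; dually $u = \min(w\circ 01^k)$. Comparing contents (both sides agree in $\FB(A)$, and $|\cont(w\circ 10^l)| = |\cont(w)| - l - 1$ while $|\cont(w\circ 01^k)| = |\cont(w)| - k - 1$) forces $k = l$. This is case (ii), and the overlap is uniquely $v' = (w\ast 1)\min(w\circ 01^k)(w\ast 0)$; plugging into \cref{lem:contileslie} yields the claimed description of $\min(w)$. If these conditions fail, no length-$\geq 2$ overlap can exist, and since $a\neq b$ also rules out a length-$1$ overlap, we land in case (iii), where \cref{lem:contileslie} gives $\min(w) = s\cdot t$.

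The main obstacle will be the second observation, namely pinning down precisely where each letter of $\cont(w)$ first (respectively last) occurs in a minimal representative and identifying the corresponding prefix (resp.\ suffix) as $\min(w\circ 10^l)\cdot a$ (resp.\ $b\cdot\min(w\circ 01^k)$). Once this is in place, the uniqueness claim of the lemma follows automatically since in each of the three cases $v'$ is completely determined by the combinatorial data $a$, $b$, $k$, and $l$ already fixed by $w$.
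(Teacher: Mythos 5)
Your proposal is correct and follows essentially the same route as the paper's proof: the same content argument excluding $w\ast 0$ from $\min(w\circ 0)$ (and dually), the same localization of any length-$\geq 2$ overlap at the rightmost occurrence of $w\ast 1$ in $s$ and the leftmost occurrence of $w\ast 0$ in $t$ via the factorizations $s = s''(w\ast 1)\min(w\circ 01^k)(w\ast 0)$ and $t = (w\ast 1)\min(w\circ 10^l)(w\ast 0)t''$, and the same content-cardinality comparison forcing $k=l$. The only difference is presentational: you package the occurrence-localization as a standalone induction (your ``second observation''), which the paper carries out inline for the two specific letters.
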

\begin{proof}
Assume that $s = s'v'$ and $t = v't'$ for some $s', v', t' \in A ^ *$.

\begin{enumerate}[label=(\roman*)]
\item If $w\ast 0 = w\ast 1$, then we will show that $v^\prime = w\ast 0$. Clearly, $v^\prime=w\ast 0$ is a suffix of $s = (w\circ 0)(w\ast 0)$ and a prefix of $t = (w\ast 1)(w\circ 1)$.

Assume that there is another non-empty suffix $v^{\prime\prime}$ of $s$ that is a prefix of $t$ not equal to $v'$. Since $v''$ is distinct from $w\ast 0$, $|v^{\prime\prime}|> 1$. Then $v^{\prime\prime} t'' = t= (w\ast 1)\min(w\circ 1)$ for some $t''\in A ^ *$, and so the first letter of $v^{\prime\prime}$ is $w\ast1 = w\ast 0$. On the other hand, $s'' v^{\prime\prime} = s = \min(w\circ 0)(w\ast0)$ for some $s'' \in A  ^ *$, $|v^{\prime\prime}|>1$, and so the first letter of $v^{\prime\prime}$, $w\ast 0$, occurs somewhere in $\min(w\circ 0)$. But by definition, $\cont(\min(w\circ0)) = \cont(w\circ0)=\cont(w)\setminus\{w\ast 0\}$, a contradiction.

So if $w\ast 0 = w\ast 1$, then $v^\prime = w\ast 0$ is the only suffix of $s$ that is also a prefix of $t$, which implies that $\min(w) = \min(w\circ 0)(w\ast 0)\min(w\circ 1)$.
\end{enumerate}

For the remainder of the proof we suppose that $w\ast 0\neq w\ast 1$. We establish that there is at most one non-empty suffix $v^\prime$ of $s$ that is also a prefix of $t$.

Assume that such a suffix $v^\prime$ exists. Clearly $|v^\prime| > 1$ since $v^\prime$ is non-empty and if $|v^\prime| = 1$, then we would have $w\ast 0=w\ast 1$, a contradiction.

Since $s^\prime v^\prime = s = \min(w\circ 0)(w\ast0)$ and $v^\prime t^\prime = t = (w\ast1)\min(w\circ1)$, it follows that $v^\prime = (w\ast 1)v^{\prime\prime}(w\ast 0)$ for some $v^{\prime\prime}\in A^\ast$. By considering the content of $\min(w\circ 0)$ and $\min(w\circ 1)$, we deduce that $\cont(v^{\prime\prime}) = \cont(w)\setminus\{w\ast 0, w\ast 1\}$. Since $w\ast 1$ is the first letter of $v'$ and $w\ast 1$ does not occur anywhere else in $v'$, $v^\prime$ starts at the rightmost occurrence of $w\ast 1$ in $s$. Similarly, $v^\prime$ must end at the leftmost occurrence of $w\ast 0$ in $t$.
In particular, $v'$ is the unique suffix of $s$ that is also a prefix of $t$.

Next, we establish the existence of the values $k$ and $l$ in the statement of the lemma.
By the comment immediately before the lemma, and because
 $w\ast 1\in \cont(w)\setminus\{w\ast 0\} = \cont(\min(w)\circ 0)$, there exists $k>0$ such that $(w\circ 0)\ast 1^k = w\ast 1$. But $w\ast 01^k = (w\circ 0)\ast 1^k$ by definition, and so $w\ast 01^k = w\ast 1$.
In a similar manner, there exists $l>0$ such that $w\ast 10^l  = w\ast 0$.

Before proving cases (ii) and (iii) of the lemma,
we determine the rightmost occurrence of $w\ast 1$ in $s$ and the leftmost occurrence of $w\ast 0$ in $t$. Since $w\ast 01^k = w\ast 1$, the definition of $\circ$ implies that
there is a prefix $s''$ of $\min(w)\circ 0$ with
\[
\min(w)\circ 0 = s^{\prime\prime}(\min(w)\ast 01^k)(\min(w)\circ 01^k) = s^{\prime\prime}(w\ast 01^k)\min(w\circ 01^k) = s^{\prime\prime}(w\ast 1)\min(w\circ 01^k).\]
Since $(\min(w)\circ 0)(w\ast 0) = \min(w\circ 0)(w\ast 0) = s$, the word $s^{\prime\prime}$ is also a prefix of $s$, and so
\[s = s^{\prime\prime}(w\ast 1)\min(w\circ 01^k)(w\ast 0).\]
Since $w\ast 1  = w\ast 01^k = \min(w)\ast 01^k \not \in \cont(\min(w\circ 01^k))$ and $s = s^{\prime\prime}(w\ast 1)\min(w\circ 01^k)(w\ast0)$, the rightmost occurrence of $w\ast 1$ in $s$ is just after $s^{\prime\prime}$.
In a similar manner, $t = (w\ast 1)\min(w\circ10^l)(w\ast0)t^{\prime\prime}$ for some suffix $t^{\prime\prime}$, and the leftmost occurrence of $w\ast 0$ in $t$ is just before $t''$.

\begin{enumerate}[label=(\roman*), start=2]
    \item
     If $k=l$ and $\min(w\circ 01^k) = \min(w\circ10^k)$, then from the above $v^\prime = (w\ast 1)\min(w\circ01^k)(w\ast0)$ is a suffix of $s$ and a prefix of $t$, and hence the unique word with this property.

    \item Assume that $k\neq l$ or $\min(w\circ 01^k) \neq \min(w\circ10^k)$.

     If there exists a non-empty suffix $v^\prime$ of $s$ that is also a prefix of $t$, then since $v^\prime$ must begin at the rightmost occurrence of $w\ast 1$ in $s$, it follows that $v^\prime = (w\ast 1)\min(w\circ 01^k)(w\ast0)$. Similarly, by considering the content of $\min(w\circ 10^l)$, we can establish that $v^\prime = (w\ast1)\min(w\circ 10^l)(w\ast 0)$.
    Hence $(w\ast 1)\min(w\circ 01^k)(w\ast 0) = (w\ast 1) \min(w\circ 10^l)(w\ast0)$  and so $\min(w\circ 01^k) = \min(w\circ 10^l)$.
    Since $|\cont(w)|-(l+1) = |\cont(\min(w\circ10^l))| = |\cont(\min(w\circ01^k))| = |\cont(w)|-(k+1)$, this implies that $k=l$, a contradiction.
    Therefore $v'$ does not exist, and there is no non-empty suffix of $s$ that is a prefix of $t$, as required.\qedhere
\end{enumerate}
\end{proof}

Let $x\in\FB(A)$ and let $\mathcal{T}_x= (Q, \bool, A, q_0, T, \spadesuit, \clubsuit)$ be the minimal transducer representing $x$. Note that by the Green-Rees Theorem (\cref{thm:equal_free_band}), it follows that any two $v, w\in A^\ast$ with $v/{\sim} = w/{\sim} = x$ will be in the same case of \cref{lem:one_overlap}, and if they are in case (ii), then they will have the same value for $k$. So we may  refer to these cases applying to $x\in \FB(A)$.
If $q\in Q$ is any state and $y\in \FB(A)$ is the element represented by $q$ (i.e. $y$ is not necessarily equal to $x$), then \cref{lem:one_overlap} can be applied to $y$.
We say that \cref{lem:one_overlap} applies to a state $q$ of $\mathcal{T}_x$ to mean that it applies to the element of the free band represented by $q$.

In \cref{algorithm:minimalwordhelper}, we present a $\mathcal{O}(|\cont(x)|)$ time algorithm named $\ClassifyCase$ for determining which of the cases (i), (ii), or (iii) of \cref{lem:one_overlap} applies to a given $q\in Q$. If case (ii) applies, then the algorithm also finds $1\leq k < |\cont(x)|$ such that $\min(y\circ 01^k) = \min(y\circ 10^k)$ where $y\in \FB(A)$ is represented by $q$. For convenience when presenting \cref{algorithm:minimalword} later, $\ClassifyCase$ returns both the case of \cref{lem:one_overlap} that applies to $q$ and an integer. In case (ii) this integer is $k$, and in cases (i) and (iii) it is $|\cont(y)|$ where $y\in \FB(A)$ is represented by the argument $q\in Q$.

\begin{algorithm}[H]
\SetAlgoLined\DontPrintSemicolon\KwArguments{a minimal transducer $\mathcal{T}_x = (Q, \bool, A, q_0, T, \spadesuit, \clubsuit)$ representing some $x\in\FB(A)$, and a state $q\in Q$,}
\KwReturns{the case $c$ of \cref{lem:one_overlap} that applies to the element of $\FB(A)$ represented by $q$, and the corresponding integer $k$ if $c$ is (ii).}
\If{$q\clubsuit 0 = q\clubsuit 1$}{    \KwRet{case (i) and $|\cont(\Psi(q))|$}\;
}
$u\gets q\spadesuit 0$,  $v\gets q\spadesuit 1$\;
\For{$k\in \{1, \ldots, |\cont(\Psi(q))|\}$}{    \If{$u \clubsuit 1 = q\clubsuit 1$, $v \clubsuit 0 = q\clubsuit 0$, and $u\spadesuit 1 = v\spadesuit 0$}{        \KwRet{case (ii) and $k$}
    }
    $u\gets u\spadesuit 1$,
    $v\gets v\spadesuit 0$\;
    \If{$u$ or $v$ is terminal}{        \KwRet{case (iii) and $|\cont(\Psi(q))|$}
    }
}
\caption{$\ClassifyCase$}
\label{algorithm:minimalwordhelper}
\end{algorithm}

\begin{theorem}\label{thm:classify_case_is_correct}
    \cref{algorithm:minimalwordhelper} is correct and has time complexity $\mathcal{O}(|A|)$.
\end{theorem}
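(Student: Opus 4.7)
The plan is to prove correctness via an invariant tying the loop variables $u$ and $v$ to $y := \Psi(q)$, and then to give a short complexity argument. Throughout I will use, from the commutative diagrams in \cref{fig:induced_state_function}, that for every reachable state $p$ of $\mathcal{T}_x$ and every $\alpha \in \bool$, $p \clubsuit \alpha = \Psi(p) \ast \alpha$ and $\Psi(p \spadesuit \alpha) = \Psi(p) \circ \alpha$; and also, by minimality of $\mathcal{T}_x$ together with \cref{lem:word_function_lemma}, that distinct reachable states of $\mathcal{T}_x$ represent distinct elements of $\FB(A)$. The check $q\clubsuit 0 = q\clubsuit 1$ on line 1 is then exactly the condition $y \ast 0 = y \ast 1$ of case (i) of \cref{lem:one_overlap}, so lines 1--2 are correct.

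Assuming $y \ast 0 \neq y \ast 1$, I will show by induction on $k$ that at the start of iteration $k$ of the loop, $u$ represents $y \circ 01^{k-1}$ and $v$ represents $y \circ 10^{k-1}$; the base case $k = 1$ is immediate from line 3, and the inductive step follows from the updates in line 7 via $\Psi(p\spadesuit\alpha) = \Psi(p)\circ \alpha$. Under this invariant the three conditions checked on line 5 translate to $y \ast 01^k = y \ast 1$, $y \ast 10^k = y \ast 0$, and $y \circ 01^k = y \circ 10^k$; the last of these uses minimality to convert the state equality $u \spadesuit 1 = v \spadesuit 0$ into equality of the represented elements, which is in turn equivalent to $\min(y \circ 01^k) = \min(y \circ 10^k)$. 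Hence line 6 returns case (ii) of \cref{lem:one_overlap} with the correct value of $k$.

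For the fallthrough to case (iii), I will use the observation (following the remark just before \cref{lem:one_overlap}) that the finite sequence $y \ast 0, y \ast 01, \ldots, y \ast 01^{|\cont(y)|-1}$ enumerates $\cont(y)$ bijectively, and likewise for the $y \ast 10^{\bullet}$ chain. Consequently there are unique indices $k, l \in \{1, \ldots, |\cont(y)|-1\}$ with $y \ast 01^k = y \ast 1$ and $y \ast 10^l = y \ast 0$. So by the time $u$ or $v$ becomes terminal (which occurs right after iteration $|\cont(y)|-1$, since the updated $u$ then represents an element of content $0$), every index that could witness case (ii) has been checked and rejected, so case (iii) is correct. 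The step I expect to require the most care is the translation of the state equality on line 5 into element equality via minimality, together with bookkeeping the off-by-one shifts in the exponents. The complexity bound is then immediate: every instruction inside the loop is $O(1)$ (constant-many transitions and equality tests), and the loop executes at most $|\cont(\Psi(q))| \leq |A|$ times, giving a total of $O(|A|)$.
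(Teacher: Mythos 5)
Your proposal is correct and follows essentially the same route as the paper's proof: identify case (i) from the line‑1 test, maintain the invariant that $u$ and $v$ represent $y\circ 01^{k-1}$ and $y\circ 10^{k-1}$, translate the line‑5 conditions into the hypotheses of \cref{lem:one_overlap}(ii) using minimality for the state‑vs‑element equivalence, and conclude case (iii) by exhaustion of all candidate $k$ before a terminal state is reached. The only (inconsequential) difference is that the paper also notes $|\cont(\Psi(q))|$ is computed once in $\mathcal{O}(|A|)$ time by walking to a terminal state, which your complexity count leaves implicit.
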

\begin{proof}
Let $y = \Psi(q)$ be the element of $\FB(A)$ represented by $q$.
Since $q\clubsuit \alpha = y\ast\alpha$ for all $\alpha\in\bool^*$, $q\clubsuit 0 = q\clubsuit 1$ in line 1 if and only if $y\ast 0 = y\ast 1$. If this holds, then case (i) of \cref{lem:one_overlap} applies to $y$ and the algorithm returns in line 2.

Supposing that $q\clubsuit 0 \neq q\clubsuit 1$, we will show that $u = q\spadesuit 01^{k-1}$ and $v = q\spadesuit 10^{k-1}$ at the start of the loop in line 5 for every value of $k$.
In line 4, the states $u$ and $v$ represent $y\circ 0$ and $y\circ 1$, respectively, so the statement is true when $k=1$.
In line 9 we set
the values of $u$ and $v$ to be $u\spadesuit 1$ and $v\spadesuit 0$, respectively. So if $u=q \spadesuit 01^{k-1}$ and $v=q\spadesuit 10^{k-1}$ prior to line 9, then $u = q\spadesuit 01^k, v = q\spadesuit 10^k$ after line 9, as required.

 Since $u$ and $v$ represent $y\circ 01^{k-1}$ and $y\circ {10^{k-1}}$, respectively, by \cref{thm:subtransducers},  $u\spadesuit 1 = v\spadesuit 0$ holds if and only if $y\circ 01^k = y\circ 10^k$. Additionally, $u\clubsuit 1 = (y\circ 01^{k-1})\ast 1 = y\ast 01^{k}$ and $q\clubsuit 1 = y\ast 1$, and so $u\clubsuit1 = q\clubsuit 1$ if and only if $y\ast 01^k = y\ast 1$. Similarly, $v\clubsuit 0 = q\clubsuit 0$ if and only if $y\ast 10^k = y\ast 0$.
 So the condition in line 6 holds if and only if we are in case (ii) of \cref{lem:one_overlap} and $y\circ 01^k = y\circ 10^k$.

Finally for the correctness of the algorithm, if we terminate at line 11, it follows that the condition of case (ii) does not hold for any $k$, and so we must be in case (iii).

For the complexity, $|\cont(y)|$ can be computed once at the start of the algorithm in time $\mathcal{O}(|\cont(y)|)$ by simply traversing $\mathcal{T}_x$ starting from the state $q$ until we reach a terminal state.
The checks and operations done at each step of \cref{algorithm:minimalwordhelper} are constant time. The loop in line 5 will execute at most $|\cont(y)|\leq |A|$ times. So the time complexity of the algorithm is $\mathcal{O}(|A|)$ as required.
\end{proof}

We now present an algorithm $\MinWord$ for computing $\min(x)$ given a minimal transducer representing $x\in \FB(A)$. The input to $\MinWord$ consists of:
\begin{itemize}
\item a minimal transducer $\mathcal{T}_x = \{Q, \bool, A, q_0, T, \spadesuit, \clubsuit\}$ representing $x$;
\item a state $q\in Q$;
\item a word $w\in A^*$ that is a prefix of $\min(x)$;
\item a non-negative integer $l$;
\item a partial function $B: Q\rightarrow \mathbb{N}_0 \times \mathbb{N}_0$.
\end{itemize}
In \cref{thm:minimal_word_helper}, we will show that $\MinWord(\mathcal{T}_x, q, w, l, B)$ produces the minimum word corresponding to the element represented by the state $q$, assuming that the input satisfies certain constraints. It will follow that $\min(x) = \MinWord(\mathcal{T}_x, q, \varepsilon, 0, B)$, where $B$ is defined $B(q) = (0, 0)$ for all terminal states $q\in T$ and $B(q) = \bot$ otherwise.

The algorithm $\MinWord$ is stated formally in \cref{algorithm:minimalword}.
\begin{algorithm}[]
\SetAlgoLined\DontPrintSemicolon\KwArguments{$\mathcal{T}_x, q, w, l, B$ as described above.}
\KwReturns{a word $w^\prime$ and a partial function $B^\prime$.}
$s\gets |w|-l+1$\;
\If{$q\in \dom(B)$}{    $(i, j)\gets B(q)$\;
    \KwRet{$ww_{(i + l, j)}, B$}\;
}
$w, B\gets \MinWord(\mathcal{T}_x, q\spadesuit 0, w, l, B)$\;

$c, k := \ClassifyCase(\mathcal{T}_x, q)$\;

\uIf{$c$ indicates that \cref{lem:one_overlap}(i) applies}{    $w\gets w(q\clubsuit 0)$\;
    $l^\prime \gets 0$\;
}
\uElseIf{$c$ indicates that \cref{lem:one_overlap}(ii) applies}{    $l^\prime\gets |w_{B(q\spadesuit01^k)}|$
}
\Else{    $w\gets w(q\clubsuit 0)(q\clubsuit 1)$\;
    $l^\prime \gets 0$\;
}
$w, B \gets \MinWord(\mathcal{T}_x, q\spadesuit 1, w, l^\prime, B)$\;
$B(q) := (s, |w|)$\;

\KwRet{$w$, $B$}
\caption{$\MinWord$}
\label{algorithm:minimalword}
\end{algorithm}

Informally, the idea of $\MinWord$ is to build up $\min(x)$  from its constituent parts $\min(x)\ast0 = x\ast0$, $\min(x)\ast 1 = x\ast 1$, $\min(x)\circ 0$, and $\min(x)\circ 1$.
Since $\min(x)\circ 0 = \min(x\circ0)$ and similarly for $1$, $\min(x)\circ0$ and $\min(x)\circ 1$ can be computed recursively by considering the transducer states representing $x\circ0$ and $x\circ 1$, respectively. In this way, the recursion proceeds by traversing the states of $\mathcal{T}_x$. The parameter $q$ represents the current state in the recursion.

The parameter $w$ represents the prefix of $\min(x)$ found so far. In $\MinWord$, $w$ is extended to be a longer and longer prefix of $\min(x)$ until the two words coincide.

The first recursive call in $\MinWord$ will compute $\min(x\circ 0)$. How $\min(x\circ 0)$ is combined  with $x\ast 0$, $x\ast 1$ and $\min(x\circ 1)$ depends on the case of \cref{lem:one_overlap} that applies. If case (i) or (iii) applies, then there is no overlap in $\min(x\circ 0)$ and $\min(x\circ 1)$, and so having recursively computed them, $\min(x)$ is obtained by taking the appropriate concatenation of $\min(x\circ 0)$, $x\ast 0$, $x\ast 1$, and $\min(x\circ 1)$.
On the other hand, if case (ii) of \cref{lem:one_overlap} occurs, then there is non-trivial overlap between $\min(x\circ 0)$ and $\min(x\circ 1)$.
In this case, we find $\min(x\circ 0)$ initially and then compute $\min(x\circ 1)$ in a way which avoids recomputing the overlap.

This is where the fourth parameter $l$ comes into play. We set $l$ to be equal to the size of the overlap of \cref{lem:one_overlap} when computing $\MinWord$ of $x\circ 1$. The value of $l$ indicates the size of the prefix $\min(x\circ 1^k)$ of $\min(x\circ 1)$ that has already been calculated, and allows us to pick up the calculation from there.

 The function $B$ is used to avoid repeatedly recomputing recursive calls; in other words, $B$ implements the well-known memoization paradigm. In particular, the minimum word $\min(y)$ corresponding to the element $y$ of the free band represented by each state $q\in Q$ is only computed once, and, in some sense, the value is stored in $B$. If the state $q$ corresponding to $y$ has been visited by $\MinWord$ already, then  the word $w$ contains $\min(y)$. Hence rather than storing $\min(y)$ itself, it is only necessary to store the start and end index of $\min(y)$ in $w$. More precisely,
$B(q) = (i, j)$ if and only if $w_{(i, j)}= \min(y)$ where $y$ is the element of the free band corresponding to the state $q$.

In order to prove that $\MinWord$ is correct, we require the following notation.
A partial function $B:Q\rightarrow \mathbb{N}_0\times \mathbb{N}_0$  is \defn{closed under reachability} if $\dom(B)$ contains all terminal states in $Q$, and $\{r\spadesuit\alpha : \alpha \in \bool^\ast\} \subseteq \dom(B)$ for all $r \in \dom(B)$.
A pair $(w, B)$ where $w\in A ^ *$ and $B:Q\rightarrow \mathbb{N}_0\times \mathbb{N}_0$ is closed under reachability is  \defn{min-compatible}
$w_{B(r)} = \min(\Psi(r))$ for every $r\in \dom(B)$.

Suppose that $(w, B)$ is min-compatible and $w^\prime\in A^\ast$ has $w$ as a prefix. If $B(r) = (i, j)$ and $i\leq j$, since $w_{(i, j)} = \min(\Psi(r))$, then $j\leq |w|$ and $w_{(i, j)}\neq \varepsilon$. It follows that
$w^{\prime}_{(i, j)} = w_{(i, j)}$ since $w^\prime_{(1, |w|)} = w$. If $B(r) = (i, j)$ with $i>j$, then $w_{(i,j)}=\varepsilon = \min(\Psi(r))$ and so $w^\prime_{(i,j)}=\varepsilon = \min(\Psi(r))$ also. We have shown that $(w^\prime, B)$ is min-compatible.

The next lemma provides a crucial component in the proof of correctness of $\MinWord$.

\begin{lemma}\label{thm:minimal_word_helper}
    Suppose that $x\in \FB(A)$ and that $\mathcal{T}_x = (Q, \bool, A, q_0, T, \spadesuit, \clubsuit)$ is the minimal transducer representing $x$.
   If $w\in A^\ast$, $B:Q\rightarrow \mathbb{N}_0\times\mathbb{N}_0$, $q\in Q$, and $l \in \mathbb{N}_0$ are such that there exists a $K\in\mathbb{N}_0$ where:
    \begin{enumerate}[label=\normalfont{(\roman*)}]
        \item $(w, B)$ is min-compatible;
        \item $q\spadesuit 0^K\in\dom(B)$;
        \item $w_{(s, |w|)} = \min(\Psi(q\spadesuit 0^K))$ where $s = |w|-l+1$;
    \end{enumerate}
    then the pair $(w^\prime, B^\prime) = \MinWord(\mathcal{T}_x, q, w, l, B)$ is $\min$-compatible, $q\in \dom(B^\prime)$, $w$ is a prefix of $w^\prime$,
    and $w^{\prime}_{(s, |w^\prime|)} = \min(\Psi(q))$.
\end{lemma}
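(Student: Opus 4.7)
The plan is induction on $n = |\cont(\Psi(q))|$. The \textbf{short-circuit case}, when $q \in \dom(B)$, covers both the base case $n = 0$ (since a terminal state lies in $\dom(B)$ by the closure-under-reachability clause of min-compatibility) and any inductive instance where $q$ has been seen before. In this case the algorithm returns at lines~2--5. By \cref{lem:contilesliehelper}, $\min(\Psi(q \spadesuit 0^K)) = \min(\Psi(q)) \circ 0^K$ is the length-$l$ prefix of $\min(\Psi(q))$. Writing $B(q) = (i,j)$, min-compatibility gives $w_{(i,j)} = \min(\Psi(q))$, so $w_{(i, i+l-1)}$ equals this prefix, which by hypothesis~(iii) also equals $w_{(s,|w|)}$. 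Hence the returned word $w' = w\, w_{(i+l,j)}$ satisfies $w'_{(s,|w'|)} = w_{(s,|w|)}\, w_{(i+l,j)} = w_{(i, i+l-1)}\, w_{(i+l,j)} = w_{(i,j)} = \min(\Psi(q))$, and $B' = B$ remains min-compatible.

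For the \textbf{inductive step} with $q \notin \dom(B)$, this forces $K \geq 1$, so the precondition for the first recursive call $\MinWord(\mathcal{T}_x, q \spadesuit 0, w, l, B)$ is met with $K' = K - 1$. Writing $y = \Psi(q)$, the inductive hypothesis yields $(w_0, B_0)$ which is min-compatible, contains $q\spadesuit 0 \in \dom(B_0)$, extends $w$, and satisfies $w_{0,(s,|w_0|)} = \min(y \circ 0)$. By \cref{thm:classify_case_is_correct}, $\ClassifyCase$ identifies the correct case of \cref{lem:one_overlap} applying to $y$.

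In \textbf{cases (i) and (iii)}, the algorithm appends the separating letter(s) and sets $l' = 0$, so the precondition for the call on $q \spadesuit 1$ holds for any $K''$ large enough to make $(q\spadesuit 1) \spadesuit 0^{K''}$ terminal. The returned $w'$ then reads, from position $s$, as $\min(y\circ 0)(y\ast 0)\min(y\circ 1)$ or $\min(y\circ 0)(y\ast 0)(y\ast 1)\min(y\circ 1)$ respectively, matching $\min(y)$ by \cref{lem:one_overlap}. The \textbf{main obstacle is case (ii)}, where the overlap must be reused rather than recomputed. Here \cref{lem:one_overlap}(ii) provides $\min(y\circ 01^k) = \min(y\circ 10^k)$; by the minimality of $\mathcal{T}_x$ these equal free band elements correspond to the same state, so $q\spadesuit 01^k = q\spadesuit 10^k$, and this state lies in $\dom(B_0)$ by closure under reachability. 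Setting $l' = |w_{0, B_0(q\spadesuit 01^k)}| = |\min(y\circ 01^k)|$ and $s' = |w_0| - l' + 1$, min-compatibility yields $w_{0,(s',|w_0|)} = \min(y \circ 01^k) = \min((y\circ 1) \circ 0^k)$, verifying the precondition for the second recursive call with $K'' = k$. The inductive hypothesis then gives $w'_{(s',|w'|)} = \min(y\circ 1)$. Since $\min(y\circ 0)$ ends in $(y\ast 1)\min(y\circ 01^k)$ (a suffix of length $l' + 1$), the character of $\min(y\circ 0)$ at position $|\min(y\circ 0)| - l'$ is exactly $y\ast 1$; combining with $w'_{(s,|w|)} = \min(y\circ 0)$ (preserved since $w$ is a prefix of $w'$) one obtains $w'_{(s, |w'|)} = \min(y\circ 0)_{(1, |\min(y\circ 0)| - l' - 1)}\,(y\ast 1)\,\min(y\circ 1)$, which matches the formula for $\min(y)$ in \cref{lem:contileslie} via \cref{lem:one_overlap}(ii). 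Finally, recording $B'(q) := (s, |w'|)$ extends $B_0$ by a single entry satisfying $w'_{B'(q)} = \min(y)$, and $B'$ inherits closure under reachability from $B_0$ together with the fact that $q \spadesuit 0, q\spadesuit 1 \in \dom(B_0)$, completing the step.
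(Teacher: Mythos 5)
Your proposal is correct and follows essentially the same route as the paper's proof: a structural induction over the (acyclic) transducer with the memoization case $q\in\dom(B)$ handled first, verification of hypotheses (i)--(iii) for each recursive call, a case split driven by $\ClassifyCase$ and \cref{lem:one_overlap}, and the overlap length $l'$ read off from $B(q\spadesuit 01^k)$ in case (ii). The only cosmetic difference is that in case (ii) you assemble $\min(\Psi(q))$ by an explicit concatenation formula rather than the paper's appeal to the uniqueness of the word with the prescribed prefix, suffix, and overlap, but both rest on \cref{lem:one_overlap}(ii) in the same way.
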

\begin{proof}
    We will proceed recursively on the states reachable from $q\in Q$.
    It suffices to show that the lemma holds for those states $q\in Q$ that have no child states, and that if the statement holds for $q\spadesuit 0$ and $q\spadesuit 1$, then the statement holds for $q$ also.

    For the assumptions of the lemma to hold, $q \spadesuit 0 ^ K$ has to be defined. Therefore, if $q$ has no child states, then $q\spadesuit 0^K$ is defined if and only if $K = 0$. But in this case, $q\spadesuit 0^0 =q \spadesuit \varepsilon = q\in \dom(B)$. Thus to establish this case it suffices to prove the lemma for all states $q\in\dom(B)$.

    Note that the $s$ calculated in line $1$ is exactly the $s$ in (iii) of the lemma.

    Suppose that $q\in \dom(B)$.
    Then the check in line 2 will pass. In line 3 we set $(i,j) = B(q)$ so that $w_{(i,j)} = \min(\Psi(q))$ since $(w, B)$ is min-compatible by assumption.
    In line 4 we return $w^\prime = ww_{(i+l, j)}$ and $B^\prime=B$.
    Since $q\in \dom(B) = \dom(B^\prime)$ and $w$ is trivially a prefix of $w^\prime$, in order to show the lemma holds, we need to ensure that $(w^\prime, B^\prime)$ is min-compatible and that $w^\prime_{(s, |w^\prime|)} = \min(\Psi(q))$.
    For the former, the input pair $(w, B)$ is min-compatible, $w$ is a prefix of $w^\prime$, and $B^\prime = B$, and so, by the comments before the lemma, $(w^\prime, B^\prime)$ is min-compatible also.
    For the latter, recall that $w_{(i, j)} = \min(\Psi(q))$ and $w_{(s, |w|)} = \min(\Psi(q\spadesuit 0^K)) = \min(\Psi(q))\circ 0^K$. Since $w_{(i,j)}\circ 0^K = \min(\Psi(q))\circ 0^K = w_{(s, |w|)}$ is a prefix of $w_{(i,j)}$ with length $|w| - s +1 = l$, it follows that $w_{(s, |w|)} = w_{(i, i+l-1)}$. This implies that $w_{(s, |w|)}w_{(i+l, j)} = w_{(i, i+l-1)}w_{(i+l, j)} = w_{(i,j)}= \min(\Psi(q))$. So the returned $w^\prime = ww_{(i+l, j)}$ has the suffix $w^\prime_{(s, |w^\prime|)} = \min(\Psi(q))$ as required.

    Next, we assume that the lemma holds for $q\spadesuit 0$ and $q\spadesuit 1$ and any $w\in A^\ast$, $l\in\mathbb{N}_0$, and $B:Q\rightarrow \mathbb{N}_0\times\mathbb{N}_0$ that satisfy the hypotheses of the lemma.

    We already covered the case when $q\in\dom(B)$, so we may assume that $q\not\in \dom(B)$. The check in line 2 will now fail and in line 6 we calculate
    $\MinWord(\mathcal{T}_x, q\spadesuit 0, w, l, B)$.
    We will show that the assumptions of the lemma hold for this invocation of $\MinWord$. The pair $(w, B)$ is min-compatible, by assumption, and so condition (i) holds. Since $q\not\in \dom(B)$, $K\geq 1$ and so $(q\spadesuit 0)\spadesuit 0^{K-1} =q\spadesuit 0^K\in\dom(B)$, and  condition (ii) holds. In the same vein, $w_{(s, |w|)} = \min(\Psi((q\spadesuit 0)\spadesuit 0^{K-1}))$ and condition (iii) holds. Therefore  if $(w^\prime, B^\prime) = \MinWord(\mathcal{T}_x, q\spadesuit 0, w, l, B)$, then $(w^ \prime, B^ \prime)$ satisfy the conclusion of the lemma, i.e.
    $(w^\prime, B^\prime)$ is min-compatible, $q\spadesuit 0\in \dom(B^\prime)$, the input $w$ is a prefix of $w^\prime$ and $w^\prime_{(s, |w^\prime|)} = \min(\Psi(q\spadesuit 0))$.

    In line 7 we use $\ClassifyCase$ to calculate $c$ and $k$.
    By \cref{thm:classify_case_is_correct} , $c$ represents the case of \cref{lem:one_overlap} that applies to $\Psi(q)$. Depending on the case, we will append some letters to $w$ and set the value of the parameter $l^\prime$ in  lines 8-16 and then calculate $\MinWord(\mathcal{T}_x, q\spadesuit 1, w, l^\prime, B)$ on line 17.

    We will show that after calculating $\MinWord(\mathcal{T}_x, q\spadesuit 1, w, l^\prime, B)$ in line 17, that the following hold: $(w, B)$ is min-compatible; $q\spadesuit 1\in \dom(B)$; the input $w$ is a prefix of the current $w$; and $w_{(s, |w|)} = \min(\Psi(q))$.
     Having established that this holds in line 17, after executing line 18, $q\in \dom(B)$ and $B$ is still reachability closed, since $q\spadesuit 0, q\spadesuit 1\in \dom(B)$. Also $(w, B)$ will be min-compatible since $B(q\spadesuit 0^k) = (s, |w|)$ and $w_{(s, |w|)} = \min(\Psi(q))$. Hence $w$ and $B$ returned in line 19 will satisfy the conclusions of the lemma, and the proof will be complete.

     Therefore it suffices to show that  after computing
$\MinWord(\mathcal{T}_x, q\spadesuit 1, w, l^\prime, B)$ in line 17, the following hold: $(w, B)$ is min-compatible; $q\spadesuit 1\in \dom(B)$; the input $w$ is a prefix of the current $w$; and $w_{(s, |w|)} = \min(\Psi(q))$.
    We will consider each of the cases as they apply to $\Psi(q)$ in \cref{lem:one_overlap} separately.

 \textbf{(i)}
 In this case,
        \[\min(\Psi(q)) = \min(\Psi(q)\circ 0)(\Psi(q)\ast 0)\min(\Psi(q)\circ1) =
        \min(\Psi(q\spadesuit0))(q\clubsuit0)\min(\Psi(q\spadesuit1))\]
        and lines 9 and 10 are executed. After line 9, $w_{(s, |w|)} = \min(\Psi(q\spadesuit0))(q\clubsuit0)$.

        Since $l^\prime = 0$, in line 17 we invoke $\MinWord(\mathcal{T}_x, q\spadesuit 1, w, 0, B)$. We now check that
        the assumptions of the lemma hold for this invocation.
         Since $(w, B)$ is min-compatible after line $6$ and we have only appended letters to $w$, $(w, B)$ is min-compatible before line 17, and so condition (i) holds.
         Since $\mathcal{T}_x$ is minimal, there exists $K^\prime$ such that  $(q\spadesuit 1)\spadesuit 0^{K^\prime}$ is terminal.
         Hence $\min(\Psi((q\spadesuit 1)\spadesuit 0^{K^\prime})) = \varepsilon$ and $(q\spadesuit 1)\spadesuit 0^{K^\prime}\in\dom(B)$ since $B$ is closed under reachability. This shows that condition (ii) holds. To see that condition (iii) holds, note that $l^\prime = 0$ (line 10) and so $w_{(|w|-l^\prime+1, |w|)} = w_{(|w|+1, |w|)} = \varepsilon = \min(\Psi((q\spadesuit 1)\spadesuit 0^{K^\prime}))$.

        We can recursively apply the lemma to conclude that if $(w^\prime, B^\prime) = \MinWord(\mathcal{T}_x, q\spadesuit 1, w, 0, B)$, then $(w^\prime, B^\prime)$ is min-compatible, $q\spadesuit 1\in \dom(B^\prime)$, $w$ is a prefix of $w^\prime$ and $w^\prime_{(|w|+1, |w^\prime|)} = \min(\Psi(q\spadesuit 1))$.

        It remains to show that $w^\prime_{(s, |w^\prime|)} =\min(\Psi(q))$.
        Since $w$ is a prefix of $w^\prime$, it follows that $w^{\prime}_{(1, |w|)} = w$ and so $w^\prime_{(s, |w|)} = w_{(s, |w|)} = \min(\Psi(q\spadesuit0))(q\clubsuit0)$. Thus
        \[w^\prime_{(s, |w^\prime|)} =w^\prime_{(s, |w|)}w^\prime_{(|w|+1, |w^\prime|)}= \min(\Psi(q\spadesuit0))(q\clubsuit0)\min(\Psi(q\spadesuit1)) = \min(\Psi(q)).\]

         \noindent \textbf{(ii)}
        Since \cref{lem:one_overlap}(ii) holds for $\Psi(q)$, $\min(\Psi(q))$ is the unique word that has prefix
        $\min(\Psi(q)\circ 0))(\Psi(q)\ast 0) = \min(\Psi(q\spadesuit 0))(q\clubsuit 0)$ and suffix $(\Psi(q)\ast 1)\min(\Psi(q)\circ 1))= (q\clubsuit 1)\min(\Psi(q\spadesuit 1))$ such that this prefix and suffix overlap exactly on the word $(q\clubsuit 1)\min(\Psi(q)\circ 01^k)(q\clubsuit 0)$, where $k$ is the same as the value we computed in line 7 (by the correctness of $\ClassifyCase$).

        Since we are in case (ii), in line 12 we set $l^\prime = |w_{B(q\spadesuit 01^k)}|$. Note that $q\spadesuit 01^k\in \dom(B)$ since $q\spadesuit 0\in\dom(B)$ after line 6, and $B$ is reachability-closed.

        In line 17 we invoke $\MinWord(\mathcal{T}_x, q\spadesuit 1, w, l^\prime, B)$. We now check that
        the assumptions of the lemma hold for this invocation.
        As in the proof when \cref{lem:one_overlap}(i) held, $(w, B)$ is min-compatible since it was after line $6$ and we have only appended letters to $w$ and left $B$ unchanged. In other words, condition (i) of the current lemma holds.

        It is part of the assumption of \cref{lem:one_overlap}(ii) that $\min(\Psi(q)\circ 1 0^{k}) = \min(\Psi(q)\circ 01^{k})$. Hence since elements of the free band, such as $\Psi(q)\spadesuit 1 0^{k}$ and $\Psi(q)\circ 01^{k}$,  are equal if and only if their minimum word representatives are equal, it follows that $\Psi(q)\circ 1 0^{k}= \Psi(q)\circ 01^{k}$. Since $\mathcal{T}_x$ is minimal,  $q\spadesuit 1 0^{k} = q\spadesuit 01^{k}$  and so $(q\spadesuit 1)\spadesuit 0^{k} = q\spadesuit 1 0^{k} = q\spadesuit 01^{k}\in \dom(B)$. This shows that condition (ii) of the hypothesis of the current lemma holds.

        Finally, we will show that condition (iii) of the current lemma holds for $\MinWord(\mathcal{T}_x, q\spadesuit 1, w, l^\prime, B)$.
        Since $\min(\Psi(q)\circ 01^k)$ is a suffix of $\min(\Psi(q)\circ 0)$ and, as we showed above, $w_{(s, |w|)} = \min(\Psi(q\spadesuit 0)) = \min(\Psi(q)\circ 0)$ after line 6, it follows that $w_{(|w|-|\min(\Psi(q)\circ 01^k)|+1, |w|)} = \min(\Psi(q)\circ 01^k)$ as the only length $|\min(\Psi(q)\circ 01^k)|$ suffix of $w$. Note that $l^\prime = |w_{B(q\spadesuit 01^k)}| = |\min(\Psi(q)\circ 01^k)|$ by min-compatibility. But then $w_{(|w|-l^\prime+1, |w|)} = \min(\Psi(q)\circ 01^k) = \min(\Psi(q)\circ 10^k) = \min(\Psi((q\spadesuit 1)\spadesuit 0^k))$ and condition (iii) holds.

        So all the assumptions of the lemma hold and so we can recursively apply it to conclude that if $(w^\prime, B^\prime) = \MinWord(\mathcal{T}_x, q\spadesuit 1, w, l^\prime, B)$ in line 17, then $(w^\prime, B^\prime)$ is min-compatible, $q\spadesuit 1\in \dom(B^\prime)$, $w$ is a prefix of $w^\prime$ and $w^\prime_{(s^\prime, |w^\prime|)} = \min(\Psi(q\spadesuit 1))$ where $s^\prime = |w|-l^\prime +1$.

        It remains to show that $w^\prime_{(s, |w^\prime|)} = \min(\Psi(q))$ where $s = |w| - l + 1$.
        Since $w$ is a prefix of $w^\prime$, it follows that $w^\prime_{(s, |w|)} = w_{(s, |w|)} = \min(\Psi(q\spadesuit 0))$. This implies that
        \[
        w^\prime_{(s^\prime,|w|)}
        = w_{(s^\prime,|w|)}
        = w_{(|w|-l^\prime +1, |w|)}
        = \min(\Psi(q)\circ 01^k) = \min(\Psi(q\spadesuit01^{k-1}))\circ 1
        \]
        is a suffix of $w^\prime_{(s, |w|)}$.
        So, by \cref{lem:one_overlap}(ii),
        \[
        w^\prime_{(s^\prime-1,s^\prime-1)} = \min(\Psi(q\spadesuit 01^{k-1}))\ast 1 =
        \Psi(q\spadesuit 01^{k-1})\ast 1=\Psi(q)\ast 01^k = \Psi(q)\ast 1 = q\clubsuit 1.
        \]
        Similarly, since $w^\prime_{(s^\prime, |w|)} = \min(\Psi(q\spadesuit 10^{k-1}))\circ 0$ is a prefix of $w^\prime_{(s^\prime, |w^\prime|)}=\min(\Psi(q\spadesuit 1))$, it follows that
        $w^\prime_{(|w|+1,|w|+1)} = \Psi(q)\ast 10^k= \Psi(q)\ast 0 = q\clubsuit 0$.

        Therefore $w^\prime_{(s, |w^\prime|)}$ is exactly the word with prefix $w^\prime_{(s, |w|+1)} = \min(\Psi(q\spadesuit 0))(q\clubsuit 0)$ and suffix $w^\prime_{(s^\prime -1, |w^\prime|)} = (q\clubsuit 1)\min(\Psi(q\spadesuit 1))$ whose common overlap is $w^\prime_{(s^\prime-1, |w|+1)} =(q\clubsuit 1)\min(\Psi(q)\circ 01^k))(q\clubsuit 0)$.
        Hence \cref{lem:one_overlap} implies that $w^\prime_{(s, |w^\prime|)} = \min(\Psi(q))$.

         \noindent \textbf{(iii)}
         The proof in the case that \cref{lem:one_overlap}(iii) holds is similar to the proof given above when  \cref{lem:one_overlap}(i) holds.
        In this case,
        \[\min(\Psi(q)) = \min(\Psi(q)\circ 0)(\Psi(q)\ast 0)(\Psi(q)\ast 1)\min(\Psi(q)\circ1 ) =
        \min(\Psi(q\spadesuit0))(q\clubsuit0)(q\clubsuit 1)\min(\Psi(q\spadesuit1)).\]
        After line 6, $w_{(s, |w|)} = \min(\Psi(q\spadesuit0))$ and because of the assumption of this case lines 14 and 15 are applied. After line 15, $w_{(s, |w|)} = \min(\Psi(q\spadesuit0))(q\clubsuit0)(q\clubsuit1)$. So as in case (i) it remains to show that after line 17 we have appended $\min(\Psi(q\spadesuit1))$ to $w$.

        It is possible to verify, using the same argument as in the proof of case (i), that $(w, B)$ is min-compatible and if $K^\prime$ is such that $(q\spadesuit 1)\spadesuit 0^{K^\prime}$ is terminal, then $(q\spadesuit 1)\spadesuit 0^{K^\prime}\in\dom(B)$ and $w_{(|w|-l^\prime+1,|w|)} = w_{(|w|+1, |w|)} = \varepsilon = \min(\Psi(q\spadesuit 0^{K^\prime}))$ (the latter equality holds since $l^\prime$ is defined to be $0$ in line 15). In other words, the conditions of the lemma apply in line 17.
        Therefore we can recursively apply the lemma to conclude that if $(w^\prime, B^\prime) = \MinWord(\mathcal{T}_x, q\spadesuit 1, w, 0, B)$, then $(w^\prime, B^\prime)$ is min-compatible, $q\spadesuit 1\in \dom(B^\prime)$, $w$ is a prefix of $w^\prime$ and $w^\prime_{(|w|+1, |w^\prime|)} = \min(\Psi(q\spadesuit 1))$.

        It remains to show that $w^\prime_{(s, |w^\prime|)} = \min(\Psi(q))$ where $s = |w| - l + 1$.
        Since $w$ is a prefix of $w^\prime$, it follows that
        $w ^ \prime _{(s, |w|)} = w_{(s, |w|)}  = \min(\Psi(q\spadesuit0))(q\clubsuit0)(q\clubsuit1)$. Therefore
        \[w^\prime_{(s, |w^\prime|)} =w^\prime_{(s, |w|)}w^\prime_{(|w|+1, |w^\prime|)}= \min(\Psi(q\spadesuit0))(q\clubsuit0)(q\clubsuit1)\min(\Psi(q\spadesuit1)) = \min(\Psi(q)).\qedhere\]
\end{proof}

For the purposes of the following theorem, we assume the RAM model of computation. This is quite a modest assumption in the sense that the time complexity is realized in the implementation of $\MinWord$ in \cite{github_repo}; see \cref{section-benchmarks} for more details.

\begin{corollary}
\label{cor:minword_is_correct}
    Let $x\in \FB(A)$, let $\mathcal{T}_x = (Q, \bool, A, q, T, \spadesuit, \clubsuit)$ be the minimal transducer representing $x$, and let $B: Q\rightarrow \mathbb{N}_0\times\mathbb{N}_0$ be such that $B(q) = (0, 0)$
    for all $q\in T$ and $B(q) = \bot$ on all other inputs.
    Then
    the first component of $\MinWord(\mathcal{T}_x, q_0, \varepsilon, 0, B)$ is equal to $\min(x)$ and
    $\MinWord$ has time complexity $\mathcal{O}(|A|\cdot|Q| +|\min(x)|) = \mathcal{O}(|A|^2\cdot |\min(x)|)$ and space complexity $\mathcal{O}(|Q|+|\min(x)|) = \mathcal{O}(|A|\cdot|\min(x)|)$.
\end{corollary}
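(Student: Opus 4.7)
The plan is to derive correctness of the first returned component as a direct corollary of \cref{thm:minimal_word_helper} applied to the initial call, and then to analyze time and space separately by exploiting the memoization structure of the algorithm. First I would verify that the initial tuple $(\varepsilon, B)$ and state $q_0$ with $l = 0$ satisfy the three hypotheses of \cref{thm:minimal_word_helper}. The pair $(\varepsilon, B)$ is min-compatible: closure under reachability holds vacuously since terminal states have no outgoing transitions in $\mathcal{T}_x$, and for every $q \in T$ we have $\Psi(q) = \varepsilon = \min(\varepsilon)$, matching the empty substring encoded by $B(q) = (0,0)$. Because $\mathcal{T}_x$ is minimal it is trim, so there exists $K$ with $q_0 \spadesuit 0^K \in T \subseteq \dom(B)$; with $s = |\varepsilon| - 0 + 1 = 1$ the third condition $w_{(s, |w|)} = \min(\Psi(q_0 \spadesuit 0^K))$ collapses to $\varepsilon = \varepsilon$. \cref{thm:minimal_word_helper} then yields $(w^\prime, B^\prime)$ with $w^\prime_{(1, |w^\prime|)} = \min(\Psi(q_0)) = \min(x)$, and since $s = 1$ this is all of $w^\prime$.

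For the running time, the central observation is that line 18 records $q$ in $B$ at the end of every full expansion, so any subsequent visit to $q$ exits at line 4 without recursing. Consequently lines 6--18 are executed at most once per state $q \in Q$; each such execution costs $\mathcal{O}(|A|)$, dominated by the single invocation of $\ClassifyCase$ via \cref{thm:classify_case_is_correct} (all other per-state operations are constant time, after a single linear preprocessing pass computes $|\cont(\Psi(q))|$ at every state). This accounts for $\mathcal{O}(|A|\cdot|Q|)$ of the total work. The remaining work is the append in line 4 of memoized returns, together with the at most two single-letter appends in lines 9 and 15 of each full expansion. Since $\MinWord$ only ever extends $w$ by concatenation and the final buffer has length $|\min(x)|$, the total number of characters appended over the entire recursion is exactly $|\min(x)|$. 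Under the assumed RAM model, where appending $k$ characters costs $\mathcal{O}(k)$, this amortized accounting gives total append cost $\mathcal{O}(|\min(x)|)$ and hence overall time $\mathcal{O}(|A|\cdot|Q| + |\min(x)|)$.

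For space, the stored data consists of the word $w$ (length bounded by $|\min(x)|$, since $w$ only grows and its final length is $|\min(x)|$), the partial function $B$ (one entry per state, so $\mathcal{O}(|Q|)$), and a recursion stack whose depth is bounded by the length of the longest directed path in the acyclic transducer $\mathcal{T}_x$ and is therefore $\mathcal{O}(|A|) \subseteq \mathcal{O}(|Q|)$. This gives space $\mathcal{O}(|Q| + |\min(x)|)$. Applying \cref{thm:size_min_transducer} to replace $|Q|$ by its $\mathcal{O}(|A|\cdot|\min(x)|)$ upper bound yields the weaker forms $\mathcal{O}(|A|^2 \cdot |\min(x)|)$ and $\mathcal{O}(|A|\cdot|\min(x)|)$ stated in the corollary. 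I expect the main obstacle to be the amortized analysis of the appending cost across memoized returns: naive reasoning might suggest that the substrings copied in line 4 could repeatedly overlap and blow up the total copy work, so it is essential to emphasize that $w$ is only ever extended and never rewritten, and hence the total number of letter-appends across the whole recursion collapses to the final length of $w$.
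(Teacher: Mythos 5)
Your proposal is correct and follows essentially the same route as the paper: verify the three hypotheses of \cref{thm:minimal_word_helper} for the initial call (min-compatibility of $(\varepsilon,B)$, reachability of a terminal state via $0^K$, and the trivial suffix condition with $s=1$), then bound the time by charging one $\mathcal{O}(|A|)$ body execution per state thanks to memoization plus an amortized $\mathcal{O}(|\min(x)|)$ for all letter appends, and finally invoke \cref{thm:size_min_transducer} to convert $|Q|$ into $\mathcal{O}(|A|\cdot|\min(x)|)$. The only slight divergence is that you attribute the whole $\mathcal{O}(|A|)$ per-state cost to $\ClassifyCase$ and call the rest constant time, whereas the paper also notes that line 12 costs $\mathcal{O}(|A|)$ to traverse $q\spadesuit 01^k$; this does not affect the bound.
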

\begin{proof}
    We show that the conditions of \cref{thm:minimal_word_helper} applies to $\MinWord(\mathcal{T}_x, q_0, \varepsilon, 0, B)$ with the parameters as stated.
    The initial partial function $B$ is defined only on the terminal states $T$ of the minimal transducer $\mathcal{T}_x$. Hence $B$ is reachability closed since terminal states have no child states.
    By assumption,  $B(q) = (0,0)$ for all $q\in T$ and $w = \varepsilon$,
    and so $w_{B(q)} = \varepsilon$. Since $q$ is terminal, $q \spadesuit \alpha$ and $q\clubsuit \alpha$ are undefined for all $\alpha\in\bool ^ *$. In particular, $\min(\Psi(q)) = \varepsilon = w_{B(q)}$ for all $q\in T$ and so $(w, B)$ is min-compatible.
    This establishes that \cref{thm:minimal_word_helper}(i) holds.
    Since $\Psi(q_0) = x$ and $|\cont(x\circ 0^{|\cont(x)|})| = |\cont(x)| - |\cont(x)| = 0$, $q_0\spadesuit 0^{|\cont(x)|}$ is terminal and so $q_0\spadesuit 0^{|\cont(x)|}\in \dom(B)$. Therefore \cref{thm:minimal_word_helper}(ii) holds with $K = |\cont(x)|$.
    Since $l = 0$ and $|w|=0$,  $s = |w|-l+1 = 1$  and so $w_{(s, |w|)} = w_{(1, 0)} = \varepsilon = \min(\Psi(q\spadesuit 0^{|\cont(x)|}))$. This shows that
     \cref{thm:minimal_word_helper}(iii) holds.
    Hence the hypothesis of \cref{thm:minimal_word_helper} is satisfied. Therefore if $(w^\prime, B^\prime) = \MinWord(\mathcal{T}_x, q_0, \varepsilon, 0, B)$, then $w^\prime_{(s, |w^\prime|)} = w^\prime_{(1, |w^\prime|)} = w^\prime = \min(x)$, as required.

    For the time and space complexities, we start by elaborating some of the assumptions in the model of computation.  In particular, in the RAM model, we may assume that the following operations are constant time for all $q\in Q$ and $\alpha\in\bool$: access a letter in $w$, retrieve a value of $B(q)$, $q\spadesuit \alpha$, or $q\clubsuit \alpha$, append a letter to $w$, and define a value $B(q)$. We may also
    assign $w$ and/or $B$ in constant time in lines 6, 9, 14, and 17 by modifying $w$ and $B$ in-place. Changing $w$ or $B$ in-place in a recursive call, will modify $w$ and $B$ in the original call, but this does not cause any issues since upon reentering the original function, in line 6 or 17, we immediately assign the returned $w$ and $B$ to the current $w$ and $B$.

    We denote by $t_{\text{total}}$ the total number of steps taken by $\MinWord(\mathcal{T}_x, q_0, \varepsilon, 0, B)$
   with the inputs given in the statement of the corollary. We split $t_{\text{total}} = t_{\text{word}} + t_{\text{main}}$, where $t_{\text{word}}$ is the total number of steps taken to write letters of $w$ and $t_{\text{main}}$ accounts for all of the other steps.
    Since $w$ is only modified by appending letters, and appending a letter is constant time, it follows that $t_{\text{word}} \in \mathcal{O}(|\min(x)|)$.

    In the remainder of the proof, we will exclude the time taken to append letters to $w$, since it is already accounted for in $t_{\text{word}}$.
    For $t_{\text{main}}$, note the role played by $B$. If a particular state $q$ belongs to $\dom(B)$, then the check in line 2 passes, and line 3 is $\mathcal{O}(1)$ and we return in line 4.
    If $q\not\in \dom(B)$, then, irrespective of which case of \cref{lem:one_overlap} applies, $B(q)$ is defined in line 18 and so $q\in \dom(B)$ upon completion of the call to  $\MinWord$ when $q$ is first considered.

    We denote by $t_{\text{body}}$ the number of steps taken in $\MinWord$ excluding steps taken by the recursive calls in lines 6 and 17 (and excluding the time taken to append to $w$).
    For every state $q\in Q$ and every invocation of $\MinWord$ with input $q$, either $q\in \dom(B)$ at a cost of $\mathcal{O}(1)$ or $q\not\in \dom(B)$ incurring $t_{\text{body}}$. The latter occurs precisely once as described in the previous paragraph. Hence $t_{\text{main}} \in\mathcal{O}(|Q|\cdot t_{\text{body}})$.

    To determine $t_{\text{body}}$, note that apart from lines 7 and 12, the remaining steps are constant time. Line 7 takes $\mathcal{O}(|A|)$ steps as per
    \cref{thm:classify_case_is_correct}. In line 12, we calculate $|w_{B(q\spadesuit 01^k)}|$. By assumption  $q\spadesuit 01 ^ k$ requires $\mathcal{O}(k + 1) = \mathcal{O}(|A|)$ steps since $k\leq |\cont(x)| \leq |A|$. The subsequent lookup of $B(q\spadesuit 01^k)$ is constant time.
    The calculation of $|w_{(i, j)}|$ for any $i, j$ is constant time, since if $0<i\leq j$, then $|w_{(i, j)}| = j-i+1$ and otherwise $|w_{(i,j)}| = 0$.
     So the total for the calculation in line 12 is $\mathcal{O}(|A|)$ steps.

     Therefore $t_{\text{main}} \in \mathcal{O}(|Q|\cdot t_{\text{body}}) = \mathcal{O}(|Q|\cdot |A|)$, and so
     the total time complexity is $\mathcal{O}(|\min(x)|+|Q|\cdot |A|)$.
     Since $\mathcal{T}_x$ is
     minimal, by \cref{thm:size_min_transducer}, $\mathcal{O}(|Q|) = \mathcal{O}(|\mathcal{T}_x|) = \mathcal{O}(|A|\cdot |\min(x)|)$. It follows that $\mathcal{O}(|\min(x)|+|Q|\cdot |A|) = \mathcal{O}(|\min(x)|+|\min(x)|\cdot |A| ^ 2) = \mathcal{O}(|A|^2\cdot |\min(x)|)$.

     For the space complexity, storing $\mathcal{T}_x$ already requires $\mathcal{O}(|\min(x)|\cdot |A|)$ space. Since $w$ is a prefix of $\min(x)$, the space taken by $w$ is $\mathcal{O}(|\min(x)|)$. The partial function $B$ has at most $|Q|$ values where it is defined, and each $B(q)$ is a pair of integers. Hence $B$ requires  $\mathcal{O}(|Q|) = \mathcal{O}(|\min(x)|\cdot |A|)$ space. Thus in total  $\mathcal{O}(|\min(x)|\cdot |A|)$ space is required.
     \end{proof}

    \section*{Acknowledgements}
    The authors would like to thank the anonymous referee for their helpful comments, and for highlighting the several important articles in the literature related to the content of this paper. The authors would also like to thank Carl-Fredrik Nyberg-Brodda for pointing out some inaccuracies in the introduction, and for his helpful comments on the historical aspects of the Burnside problems.
    This work was supported by the UK Engineering and Physical Sciences Research Council (EPSRC) doctoral training grant EP/V520123/1 for the University of St Andrews, which funded the first author's research on this paper.

\printbibliography[heading=bibintoc]

\appendix
\section{Benchmarks}\label{section-benchmarks}

The Python package \cite{github_repo} implements the algorithms $\EqualInFreeBand$, $\Multiply$, and $\MinWord$.
In this section we describe some benchmarks for these algorithms.
The purpose of \cite{github_repo} is to provide a reference implementation, and
the benchmarks are included to demonstrate that even a relatively na\"ive implementation in a high-level language, such as Python, exhibits the asymptotic time complexity described in this paper. The implementation in \cite{github_repo} is not optimised, and the absolute times provided in this section can almost certainly be improved significantly.

All the benchmarks in this section were run on a 2021 MacBook Pro with an Apple M1 processor and 16GB of RAM running Python 3.9.12. Instructions for how to reproduce the benchmarks in this section can be found in~\cite{github_repo}.

In order to benchmark our code, we generated the following set of sample data: for each alphabet of size $m\in \{2, 7, 12, \ldots, 47\}$ and each length $l\in\{20, 520, 1020, \ldots, 4520\}$ we generated a sample of 100 words of length $l$ (selected uniformly at random) over an alphabet of size $m$ using the Python \texttt{random} library for randomness. We will refer to these as the \defn{word samples}.
The interval transducer for each word in each sample was computed using  $\IntervalTransducer$; we refer to this as the \defn{interval transducer sample}. Finally, each of the interval transducer samples was run through the $\Minimize$ algorithm, forming the \defn{minimal transducer sample}.
The benchmarks were produced using the \texttt{pytest-benchmark} Python library. Any quoted time is the mean of a number of runs of the benchmark using the same input data. This makes the results more reproducible and less susceptible to random variation in the device used to run them.

The algorithm $\EqualInFreeBand$ is the composition of the 3 algorithms defined in \cref{sec:equality}: $\IntervalTransducer$, $\Minimize$, and $\TrimTransducerIsomorphism$.

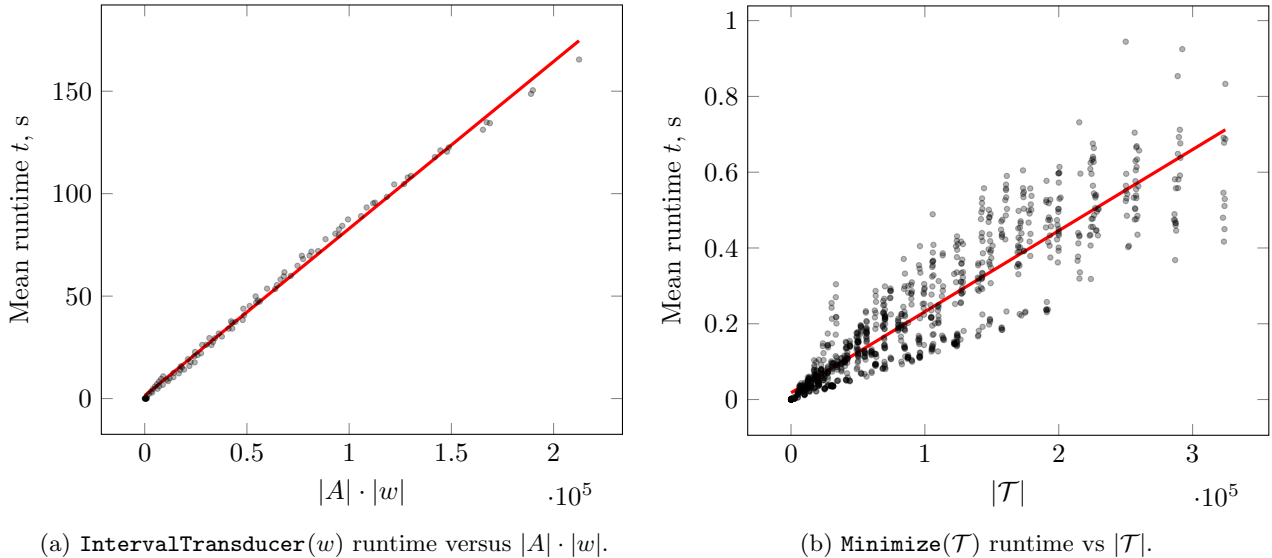
\begin{figure}[h]
    \begin{subfigure}{0.5\textwidth}
\begin{tikzpicture}
\begin{axis}[xlabel={$|A|\cdot |w|$}, ylabel={Mean runtime $t$, s}]
\addplot[only marks, mark=*,mark size=1pt, black, opacity = 0.3] file {./anc/interval_benchmarks.dat};
\addplot[domain=40:212440, samples=100, smooth, red, very thick] {0.0008153*x+1.419};
\end{axis}
\end{tikzpicture}
        \caption{$\IntervalTransducer(w)$ runtime versus $|A|\cdot |w|$.}
        \label{fig:benchmarks-interval}
    \end{subfigure}
    \begin{subfigure}{0.5\textwidth}
\begin{tikzpicture}
\begin{axis}[xlabel={$|\mathcal{T}|$}, ylabel={Mean runtime $t$, s}]
\addplot[only marks, mark = *, mark size=1pt, black, opacity=0.3] file {./anc/minimize_benchmarks.dat};
\addplot[domain=82:324465, samples=100, smooth, red, very thick] {0.00000213980667*x+0.0177936788};
\end{axis}
\end{tikzpicture}
        \caption{$\Minimize(\mathcal{T})$ runtime vs $|\mathcal{T}|$.}
        \label{fig:benchmarks-minimize}
    \end{subfigure}
    \caption{Benchmarks for the $\IntervalTransducer$ and $\Minimize$ algorithms.}
    \label{fig:benchmarks-iterval-and-minimize}
\end{figure}

\cref{fig:benchmarks-interval} shows run times of $\IntervalTransducer(w)$ for the words $w$ in the word samples described above. Each data point in \cref{fig:benchmarks-interval} shows the cumulative run time for a sample consisting of 100 words. This is plotted against $|A|\cdot |w|$ which is the theoretical asymptotic complexity of the algorithm. The red line is the line of best fit obtained by running a linear regression in Python.

\cref{fig:benchmarks-minimize} shows the run times of $\Minimize(\mathcal{T})$ for transducers $\mathcal{T}$ in the interval transducer samples. This is plotted against the number of states of $|\mathcal{T}|$ which is the theoretical asymptotic complexity of the $\Minimize$ algorithm. Each point represents a single transducer $\mathcal{T}$. The line of best fit is shown in red. The fit of the line in \cref{fig:benchmarks-minimize} might not seem very good, with a fairly wide range of times
for the given transducers of  each number of states. However the linear regression has an $R^2$ value of $0.8368$, which is perhaps not as bad as it first appeared.

\begin{figure}[h]
    \begin{subfigure}{0.5\textwidth}
\begin{tikzpicture}
\begin{axis}[xlabel={$|\mathcal{T}|$}, ylabel={Mean runtime $t$, s}]
\addplot[only marks, mark = *, mark size=1pt, black, opacity=0.3] file {./anc/isomorphism_benchmarks.dat};
\addplot[domain=5:12591, samples=100, smooth, red, very thick] {0.00000360912595*x-0.000576196934};
\end{axis}
\end{tikzpicture}
        \caption{$\TrimTransducerIsomorphism(\mathcal{T}, \mathcal{T})$ runtime vs $|\mathcal{T}|$.}
        \label{fig:benchmarks-isomorphism}
    \end{subfigure}
    \begin{subfigure}{0.5\textwidth}
\begin{tikzpicture}
\begin{axis}[xlabel={$(|u|+|v|)\cdot |A|$}, ylabel={Mean runtime $t$, s}]
\addplot[only marks, mark = *, mark size=1pt, black, opacity=0.3] file {./anc/equal_benchmarks.dat};
\addplot[domain=80:424880, samples=100, smooth, red, very thick] {0.0000106244955*x +0.0297170074};
\end{axis}
\end{tikzpicture}
        \caption{$\EqualInFreeBand(u, v)$ runtime vs $(|u|+|v|)\cdot |A|$.}
        \label{fig:benchmarks-equal}
    \end{subfigure}
    \caption{Benchmarks for the $\TrimTransducerIsomorphism$ and $\EqualInFreeBand$ algorithms.}
    \label{fig:benchmarks-isomorphism-and-equal}
\end{figure}
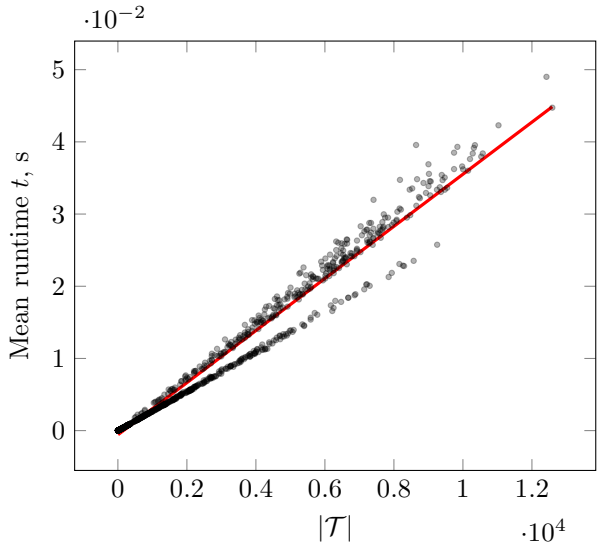
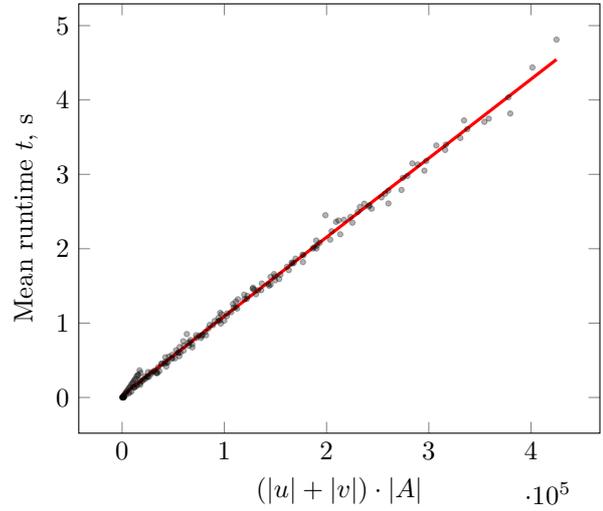

\cref{fig:benchmarks-isomorphism} shows run times of
$\TrimTransducerIsomorphism(\mathcal{T}, \mathcal{T})$ for a transducer
$\mathcal{T}$ in the interval transducer samples. This is plotted against $|\mathcal{T}|$ which is the theoretical asymptotic complexity of the algorithm. The line of best fit is shown in red. We chose to run the isomorphism of a transducer against itself since this represents the longest execution path in the code.

\cref{fig:benchmarks-equal} shows the run times of $\EqualInFreeBand(u, v)$ for
words $u$ and $v$ obtained by randomly sampling $10000$ pairs from the word
samples. This is plotted against $(|u|+|v|)\cdot |A|$ which is the theoretical
asymptotic complexity of the algorithm. The line of best fit is shown in red.
We chose to only sample 10000 pairs to reduce the total run time of the
benchmark.

\begin{figure}[H]
    \begin{subfigure}{0.5\textwidth}
\begin{tikzpicture}
\begin{axis}[xlabel={$|\mathcal{T}_1|+|\mathcal{T}_2| + |A|^2$}, ylabel={Mean runtime $t$, s}]
\addplot[domain=16:20315, samples=100, smooth, red, very thick] {0.00000179895551*x -0.00172193307};
\addplot[only marks, mark = *, mark size=1pt, black, opacity=0.05] file {./anc/minimal_multiply_benchmarks.dat};
\end{axis}
\end{tikzpicture}
        \caption{$\Multiply(\mathcal{T}_1, \mathcal{T}_2)$ runtime vs $|\mathcal{T}_1| + |\mathcal{T}_2| + |A|^2$. The transducers $\mathcal{T}_1$ and $\mathcal{T}_2$ sampled from the minimal transducer sample.}
        \label{fig:benchmarks-multiply-minimal}
    \end{subfigure}
    \begin{subfigure}{0.5\textwidth}
\begin{tikzpicture}
\begin{axis}[xlabel={$|\mathcal{T}_1|+|\mathcal{T}_2| + |A|^2$}, ylabel={Mean runtime $t$, s}]
\addplot[only marks, mark = *, mark size=1pt, black, opacity=0.3] file {./anc/interval_multiply_benchmarks.dat};
\addplot[domain=25021:404143, samples=100, smooth, red, very thick] {0.00000262090654*x+0.0263825366};
\end{axis}
\end{tikzpicture}
        \caption{$\Multiply(\mathcal{T}_1,\mathcal{T}_2)$ runtime vs $|\mathcal{T}_1| + |\mathcal{T}_2| + |A|^2$.
        The transducers $\mathcal{T}_1,\mathcal{T}_2$ sampled from the interval transducer sample.}
        \label{fig:benchmarks-multiply-interval}
    \end{subfigure}
    \caption{Benchmarks for the $\Multiply$ algorithm when applied to minimal and interval transducers respectively.}
    \label{fig:benchmarks-multiply-minimal-and-multiply-interval}
\end{figure}
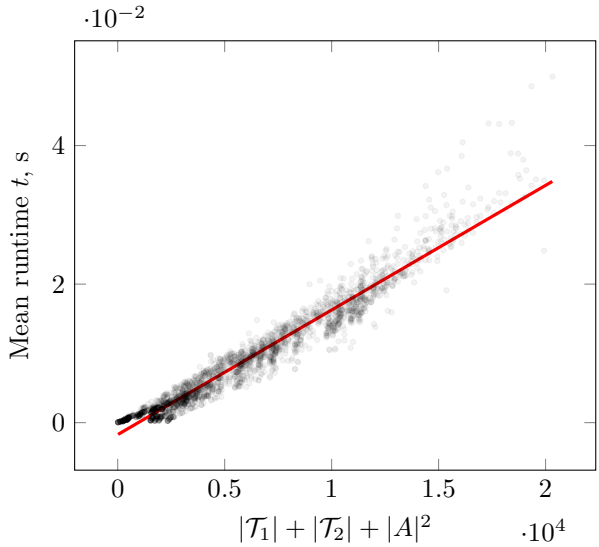
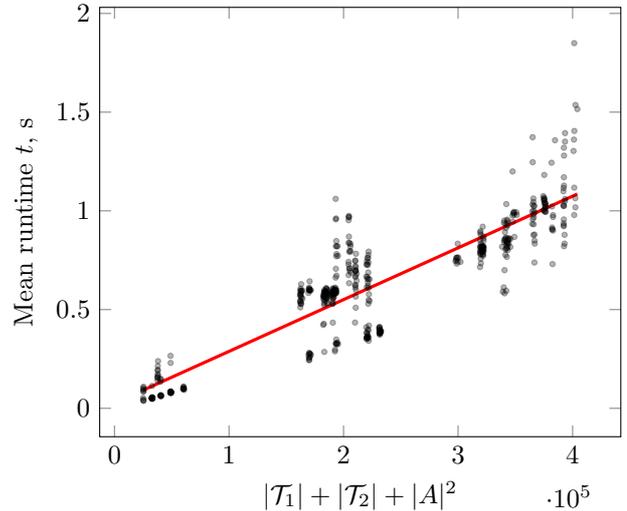

\cref{fig:benchmarks-multiply-minimal} shows the runtime of $\Multiply(\mathcal{T}_1, \mathcal{T}_2)$ for transducers $\mathcal{T}_1$ and $\mathcal{T}_2$ obtained by randomly sampling 50 transducers from the minimal transducer samples and considering all possible pairs of these transducers (2500 pairs total). This is plotted against $|\mathcal{T}_1|+|\mathcal{T}_2| + |A|^2$ which is the theoretical asymptotic complexity of the algorithm. The line of best fit is shown in red. We chose to only sample 2500 pairs to reduce the total runtime of the benchmark.

\cref{fig:benchmarks-multiply-interval} shows the runtime of $\Multiply(\mathcal{T}_1, \mathcal{T}_2)$ for transducers $\mathcal{T}_1$ and $\mathcal{T}_2$ obtained by batching the interval transducer samples into 20 batches, randomly sampling 5 transducers from each batch and considering all possible pairs of these transducers in each batch (500 pairs total). This is plotted against $|\mathcal{T}_1|+|\mathcal{T}_2| + |A|^2$ which is the theoretical asymptotic complexity of the algorithm. The line of best fit is shown in red. We chose to only sample 500 pairs to reduce the total runtime of the benchmark.

\begin{figure}[h]
\centering
\begin{tikzpicture}
\begin{axis}[xlabel={$|\mathcal{T}|\cdot |A|$}, ylabel={Mean runtime $t$, s}, title={}]
\addplot[only marks, mark = *, mark size = 1pt, black, opacity=0.1] file {./anc/minword_benchmarks.dat};
\addplot[domain=10:591777, samples=100, smooth, red, very thick] {0.000000118569040*x+0.000799812408};
\end{axis}
\end{tikzpicture}
    \caption{$\MinWord(\mathcal{T})$ runtime versus $|\mathcal{T}|\cdot |A|$.}
    \label{fig:benchmarks-minword}
\end{figure}
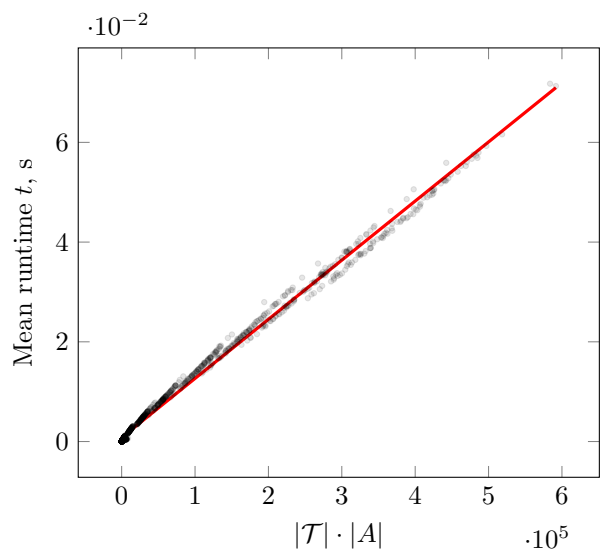

\cref{fig:benchmarks-minword} shows the runtime of $\MinWord(\mathcal{T})$ for transducers $\mathcal{T}$ in the minimal word samples. This is plotted against $|\mathcal{T}|\cdot |A|$ which is the theoretical asymptotic complexity of the algorithm. The line of best fit is shown in red.
\end{document}